\documentclass[11pt]{article}
\usepackage[margin=1in]{geometry}
\usepackage{amsmath,amssymb,amsthm,mathtools}
\usepackage{microtype}
\usepackage{enumitem}
\usepackage{xcolor}
\usepackage{booktabs}
\usepackage{tabularx}
\usepackage{array}
\usepackage{float}
\usepackage{cite}
\usepackage{hyperref}

\definecolor{darkblue}{RGB}{20,60,120}
\definecolor{darkgreen}{RGB}{35,110,65}

\hypersetup{
  colorlinks=true,
  linkcolor=darkblue,
  citecolor=darkgreen,
  urlcolor=darkblue
}
\newtheorem{theorem}{Theorem}[section]
\newtheorem{lemma}[theorem]{Lemma}
\newtheorem{proposition}[theorem]{Proposition}
\newtheorem{corollary}[theorem]{Corollary}
\newtheorem*{corollary*}{Corollary}

\theoremstyle{definition}
\newtheorem{definition}[theorem]{Definition}

\newcommand{\E}{\mathbb{E}}
\newcommand{\Prb}{\mathbb{P}}
\newcommand{\vol}{\operatorname{vol}}

\newcommand{\poly}{\operatorname{poly}}

\newcommand{\1}{\mathbf{1}}

\newcommand{\tO}{\widetilde O}
\newcommand{\tOQ}{\widetilde O_Q}
\newcommand{\prr}{\mathsf{pr}}

\title{Tolerant Testing for Unique Games}
\author{Yuichi Yoshida\thanks{Supported by JSPS KAKENHI Grant Number 22H05001 and 25K24465.}\\
National Institute of Informatics\\
\texttt{yyoshida@nii.ac.jp}}

\begin{document}
\maketitle

\begin{abstract}
We give tolerant testers with sublinear query complexity in the adjacency-list
model for Unique Games.  Prior tolerant testers required
structural assumptions such as expansion or clusterability.
For Unique Games, the
tester distinguishes instances whose optimum fraction of violated constraints
is at most \(\varepsilon\) from those whose optimum is at least \(\rho\), for
\(0<\varepsilon<\rho<1\), assuming
\(\varepsilon\log n\lesssim\rho^4\).  On instances with \(n\) vertices and
\(m\) constraints, it uses
\(\widetilde O(\sqrt m\,\rho^{-13/2}+n\rho^{-2}/\sqrt m)\) queries.

We also give a specialized tester for bipartiteness, the \(Q=2\) transposition
case of Unique Games.  Exploiting its signed structure, the tester achieves
substantially better tolerance and query-complexity guarantees than the generic
Unique Games tester.
Writing \(\lambda=\rho/(1+\log(1/\rho))\), the bipartiteness tester
distinguishes graphs that can be made bipartite by deleting at most an
\(\varepsilon\) fraction of edges from graphs in which every bipartition
has at least a \(\rho\) fraction of edges with both endpoints on the same
side, assuming
\(\varepsilon\log n\lesssim\lambda^2\), using
\(\widetilde O(\sqrt m/\lambda^2+n/(\sqrt m\,\lambda))\) queries.
\end{abstract}

\thispagestyle{empty}
\newpage

\thispagestyle{empty}
\tableofcontents

\newpage
\setcounter{page}{1}

\section{Introduction}
\label{sec:overview-main}

In graph property testing, a sublinear-query algorithm sees only local
information, such as vertex samples, degrees, and neighbors, and decides
whether the graph has a prescribed property or has normalized distance at
least a specified parameter from every graph with that property.  See the
standard texts of Goldreich and Bhattacharyya--Yoshida for general background
on property testing~\cite{GoldreichBook,BhattacharyyaYoshidaBook}.
In the usual version, the yes-instances have distance \(0\) to the property,
whereas tolerant testing~\cite{PRR} allows a positive yes-threshold.  Given
thresholds \(0<\varepsilon<\rho\), a tolerant tester accepts graphs whose
distance to the property is at most \(\varepsilon\) and rejects graphs whose
distance is at least \(\rho\).  For the constraint satisfaction problems studied
here, the property is satisfiability, and the distance is the minimum fraction
of constraints violated by any labeling.

A central example of this setting is Unique Games, introduced by
Khot~\cite{KhotUGC}: it is a binary CSP in which the variables are the
vertices of a constraint graph and the constraints are placed on its edges.
An instance is
\(\mathcal U=(G,\{\pi_{uv}\}_{uv\in E})\), where \(G=(V,E)\) is the
constraint graph and each edge \(uv\) carries a permutation \(\pi_{uv}\) of the
alphabet \([Q]\), with the reverse direction using
\(\pi_{vu}=\pi_{uv}^{-1}\).  A labeling \(x:V\to[Q]\) satisfies \(uv\) if
\(x(v)=\pi_{uv}(x(u))\).  We study this problem in the adjacency-list model,
where the algorithm has uniform-vertex, degree, and neighbor access to the graph
and its constraints.  We do not assume adjacency-pair queries.  Let
\(\tau_{\rm UG}(\mathcal U)\) be the minimum fraction of constraints violated by
any labeling.

The first main result of the paper is a tolerant tester for Unique Games in this
access model.

\begin{theorem}[Tolerant testing for Unique Games]
\label{thm:ug-tolerant}
For every fixed \(Q\ge2\), there is a constant \(c_Q>0\) such that the
following holds.  Let \(0<\varepsilon<\rho<1\), and let
\(\mathcal U\) be a Unique Games instance over alphabet \([Q]\) with \(n\)
vertices and \(m\) constraints.  If
\[
    \varepsilon\log n\le c_Q\rho^4,
\]
then there is a randomized tester in the adjacency-list model that distinguishes
\[
    \tau_{\rm UG}(\mathcal U)\le \varepsilon
    \qquad\text{from}\qquad
    \tau_{\rm UG}(\mathcal U)\ge \rho
\]
with success probability at least \(2/3\) and expected query complexity
\footnote{Throughout the introduction, \(\tO(\cdot)\) hides polylogarithmic
factors in \(n\) and \(1/\rho\); \(\tOQ(\cdot)\) also hides constants depending
on the fixed alphabet size \(Q\).  Likewise, subscripts \(Q\) in
\(O_Q,\Omega_Q,\lesssim_Q,\gtrsim_Q,\asymp_Q\) hide constants depending only on
\(Q\).}
\[
    \tOQ\!\left(
        \sqrt m\,\rho^{-13/2}
        +
        \frac{n}{\sqrt m}\rho^{-2}
    \right).
\]
\end{theorem}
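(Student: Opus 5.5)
We plan to reduce tolerant testing of Unique Games to a local, random-walk-based estimate of the optimum on a structured decomposition of the constraint graph. The guiding picture is the sublinear expansion-decomposition paradigm.

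\emph{First}, we run a degree-regularization step. Each vertex \(v\) is replaced by \(\lceil \deg(v)/\bar d\rceil\) copies, with \(\bar d=2m/n\), and the copies of \(v\) are joined by identity constraints. This turns edge-uniform sampling into vertex sampling and accounts for the \(n/\sqrt m\) term. Sampling a uniform edge costs \(\tO(n/\sqrt m)\) queries via the standard rejection trick, and the optimum changes by at most a constant factor.

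\emph{Second}, we use the label-extended graph \(G^{\times Q}\) on vertex set \(V\times[Q]\). Each constraint \(uv\) gives edges \((u,a)(v,\pi_{uv}(a))\). We then invoke an expander decomposition with conductance parameter \(\phi\asymp\rho^2/\log n\). This removes an \(O(\varepsilon\log n/\rho^2)\)-like fraction of edges and splits the rest into \(\phi\)-expanders.

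The two cases then separate as follows.
\begin{itemize}
\item \textbf{Yes case} (\(\tau\le\varepsilon\)). An optimal labeling \(x\) gives a set \(S=\{(v,x(v))\}\) in the lifted graph with conductance at most \(\varepsilon\). Because \(\varepsilon\log n\le c_Q\rho^4\) lies far below the threshold, \(S\) is almost entirely a union of expander components. Hence most components are ``consistent'': they meet each fiber \(\{v\}\times[Q]\) in at most one point.
\item \textbf{No case} (\(\tau\ge\rho\)). Rounding any decomposition into consistent pieces would produce a labeling violating fewer than \(\rho\) constraints. So at least an \(\Omega_Q(\rho)\) fraction of the mass sits in components where a short walk from \((v,a)\) reaches some \((v,b)\) with \(b\ne a\). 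Equivalently, the walk closes a cycle on which the composed permutations are nontrivial.
\end{itemize}

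The tester therefore samples \(\tO(\rho^{-1})\) random vertices \(v\) and labels \(a\). From each \((v,a)\) it runs about \(\sqrt m\cdot\poly(1/\rho)\) lazy random walks of length \(\ell=\tO(\rho^{-2}\log n)\) in the lifted graph, simulated through neighbor queries. It rejects if two walks collide at some original vertex \(w\) with different lifted labels. This is the odd-cycle collision idea of Goldreich--Ron lifted to \([Q]\), and the birthday paradox gives the \(\sqrt m\) dependence. Working out the exponents (walk length \(\rho^{-2}\), mixing-quality and collision-probability losses \(\rho^{-4}\), sample count \(\rho^{-1/2}\)) should give \(\rho^{-13/2}\).

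\textbf{Main obstacle.} The crux is soundness with tolerance: showing that in the yes case, collisions with inconsistent labels are rare. Walks escape a low-conductance labeled set \(S\) with probability about \(\ell\varepsilon\). The hypothesis \(\varepsilon\log n\le c_Q\rho^4\) should make \(\ell\varepsilon\ll\rho^2\), so inconsistent collisions occur at rate below the rejection threshold. The converse, that \(\rho\)-far instances yield \(\Omega(\rho^{2})\)-rate inconsistent collisions, needs a spectral argument. We plan to start from the Arora--Barak--Steurer/Trevisan-style fact that low UG value implies a large-spectral-mass non-\emph{consistent} eigenvector. We would combine it with an \(\ell_2\) collision-probability comparison between walk distributions started at \((v,a)\) and at \((v,b)\). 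We would first try to prove this via a Cheeger-type inequality for the lifted Laplacian relative to the consistent subspace, losing \(\rho^2\) there. That loss is where we expect the \(\rho^4\) in the hypothesis to originate.
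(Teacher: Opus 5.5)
\textbf{There is a genuine gap.} The tester you describe is not tolerant. It rejects as soon as two of its $N\approx\sqrt m\,\poly(1/\rho)$ walks meet at some $w$ with different lifted labels. In a yes-instance, on average over seeds, about an $\ell\varepsilon$ fraction of walks cross a violated constraint and carry a wrong label. So the expected number of inconsistent collisions is roughly $\ell\varepsilon\cdot N^2\|u\|_2^2$, and $N^2\|u\|_2^2$ is not bounded. On a $d$-regular expander with $N=\sqrt m$ it is $\Theta(d)$. From a seed confined to a region of $k$ vertices it is about $m/k$.

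Hence, with $\varepsilon m\gg1$ violated edges, your tester rejects yes-instances. Oversampling is harmless in non-tolerant testing but fatal here. Your heuristic ``$\ell\varepsilon\ll\rho^2$'' helps only if the rule is a normalized rate. Even the normalized rate (inconsistent collisions over all collisions) is an $\ell_2$-weighted quantity, roughly $\sum_v u(v)d(v)/\sum_v u(v)^2$, where $d(v)$ is the wrong-label mass at $v$. The pathwise $\ell_1$ bound $\sum_v d(v)\lesssim\ell\varepsilon$ does not control it: dirty mass landing on a vertex with large $u(v)$ inflates it.

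The paper removes exactly this obstacle. It replaces collisions by the $\ell_1$ ambiguity $\sum_v\bigl(u_{L,s}(v)-\max_b p^{a,b}_{L,s}(v)\bigr)$, for which $\E_{s\sim\pi}\mu^{\rm UG}_L(s)\le C_Q\varepsilon L$ follows pathwise. It then estimates this quantity by bidirectional PageRank point queries on the label lift, with degree-scaled thresholds $\tau_v\asymp_Q\alpha\deg(v)/m$. That estimator, not a birthday argument, is the source of the $L\sqrt m\,\alpha^{-7/2}$ per-seed cost.

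The soundness side also has no mechanism linking the decomposition to the walks you actually run. An expander decomposition of the lift at conductance $\phi$ may cut a $\Theta(\phi\log n)$ fraction of edges. A walk long enough to mix inside a $\phi$-expander (about $\phi^{-2}\log n$ steps) then crosses on the order of $\phi^{-1}\log^2 n\gg1$ cut edges. So facts about walks confined to components say nothing about the tester's statistic.

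Your parameters are also inconsistent. Taking $\phi\asymp\rho^2/\log n$ gives in-component mixing time $\rho^{-4}\log^3 n$, not $\tO(\rho^{-2}\log n)$. The resulting completeness requirement would be roughly $\varepsilon\log^3 n\lesssim\rho^6$, not $\varepsilon\log n\le c_Q\rho^4$. The yes-case claim that most components meet each fiber at most once does not follow either: only components lying mostly inside $\{(v,x(v))\}$ are constrained. The spectral fact you invoke for the no case is not stated in a form that yields a per-seed detection rate without expansion.

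The paper's replacement is the trace-chain argument, in three steps:
\begin{itemize}
\item nonuniform peeling of residual sets $R$ at threshold $\theta(z)\asymp_Q\rho z^{-1/2}$, continued down to mass $\rho^2$;
\item a majority sweep over trace times, using the Lov\'asz--Simonovits heat-kernel gradient bound and the identity holding step $K_R^{\mathrm{id}}(u,u)\ge1/2$ to control changing majority labels;
\item Kac truncation plus geometric memorylessness, which pass from trace time $k\asymp_Q\theta^{-2}\log n$ to geometric scale $L_r\asymp_Q\rho^{-3}r^{1/2}\log n$ with threshold $\alpha_r\asymp_Q\rho r^{-1/2}$.
\end{itemize}

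Each term of the theorem then has a concrete source:
\begin{itemize}
\item the hypothesis comes from requiring $\varepsilon L_r/\alpha_r\asymp_Q\varepsilon\rho^{-4}r\log n\ll r$;
\item the $\rho^{-13/2}$ comes from $N_rL_r\alpha_r^{-7/2}\asymp\rho^{-13/2}r^{5/4}$ with $N_r\asymp r^{-1}$;
\item the $n\rho^{-2}/\sqrt m$ term comes from needing $\sum_r r^{-1}\asymp\rho^{-2}$ seeds, not $\tO(\rho^{-1})$, because the witnessing residual can have mass as small as $\rho^2$.
\end{itemize}
Your exponent count was fitted to the target bound rather than derived from an argument.
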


We also prove an improved result for bipartiteness.  Bipartiteness is the
\(Q=2\) Unique Games instance in which every edge carries the transposition:
labelings are two-colorings, and unsatisfied constraints are exactly
monochromatic edges.  This special case allows better guarantees.
For a graph \(G=(V,E)\) with \(|E|>0\), write
\[
\tau(G)=\min_{\chi:V\to\{\pm1\}}
\frac{|\{uv\in E:\chi(u)=\chi(v)\}|}{|E|},
\]
the minimum uncut fraction, i.e., the least fraction of monochromatic edges in
any two-coloring; if \(|E|=0\), set \(\tau(G)=0\).

\begin{theorem}[Improved tolerant bipartiteness tester in the adjacency-list
model]
\label{tol:thm:main}
There is an absolute constant \(c_*>0\) such that the following holds. Let
\(G\) be an input graph with \(n\) vertices and \(m\) edges.  Let
\(0<\varepsilon<\rho<1/10\), and assume
\[
\varepsilon\log n
\le
c_*\,\frac{\rho^2}{(1+\log(1/\rho))^2}.
\]
Then there is a randomized algorithm in the adjacency-list model that
distinguishes
\[
\tau(G)\le \varepsilon
\qquad\text{from}\qquad
\tau(G)\ge \rho
\]
with success probability at least \(2/3\) and expected query complexity
\[
\tO\!\left(
\sqrt m\,\rho^{-2}(1+\log(1/\rho))^2
+ \frac{n}{\sqrt m}\,\rho^{-1}(1+\log(1/\rho))
\right).
\]
\end{theorem}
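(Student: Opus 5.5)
We prove Theorem~\ref{tol:thm:main} by combining a local partitioning step with a spectral certificate.

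\textbf{Intuition.} Work with the signed graph in which every edge is negative, so that bipartiteness is balance. The quantity \(\tau(G)\) is then the frustration index. We relate frustration to a signed spectral quantity via a Cheeger-type inequality for the signed Laplacian (the bipartiteness ratio of Trevisan). If \(\tau(G)\le\varepsilon\), a global two-coloring violates at most \(\varepsilon m\) edges. If \(\tau(G)\ge\rho\), then every coloring is bad, even locally on average. Write \(\lambda=\rho/(1+\log(1/\rho))\).

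\textbf{Step 1: partition into well-connected pieces.} Apply an expander-decomposition-style argument, used only for analysis, with conductance threshold \(\phi\approx\lambda\). This splits \(V\) into pieces \(V_i\) such that:
\begin{itemize}[nosep]
  \item each piece has conductance \(\gtrsim\phi/\log n\);
  \item at most a \(\lambda/4\) fraction of edges cross between pieces.
\end{itemize}
We choose \(\phi\) so that the crossing edges cost at most \(\rho/4\). In the far case this leaves \(\gtrsim\rho\) frustration inside pieces. Mixing time within a piece is about \(\log n/\phi^2\).

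\textbf{Step 2: the tolerance condition.} In the close case, the \(\varepsilon m\) bad edges can destroy the parity structure only if they are concentrated. Averaging shows that pieces carrying a \(\ge\lambda/8\) fraction of bad edges have total weight \(O(\varepsilon/\lambda)\). Inside a good piece, a lazy random walk of length \(t\approx\log n/\lambda^{2}\), started from a degree-weighted vertex, crosses a bad edge with probability about \(t\varepsilon\approx\varepsilon\log n/\lambda^2\le c_*\). So with constant probability the walk's parity is consistent with the near-bipartition. In the far case, the signed Cheeger inequality gives signed spectral gap \(\gtrsim\lambda^2\) (the log factor comes from the \(\rho\) versus \(\sqrt{\cdot}\) form of Trevisan's bound). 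Consequently, even- and odd-parity walk endpoints from a vertex are \(\Omega(1)\)-balanced after \(t\) steps.

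\textbf{Step 3: the test and its query cost.} We estimate the collision statistic \(\sum_v (p^{\rm even}_v-p^{\rm odd}_v)^2\). We count pairs of walks from a common start that end at the same vertex with opposite parity, versus same parity, by birthday-paradox sampling. This requires \(\sqrt m\) endpoint samples per start, multiplied by \(t\approx\lambda^{-2}\) up to logs. That yields the \(\sqrt m\,\lambda^{-2}\) term. The \(n/(\sqrt m\,\lambda)\) term comes from sampling vertices proportional to degree using uniform-vertex and degree queries via rejection sampling. Low-degree vertices cost \(n/\sqrt m\) each (the standard Eden--Rosenbaum bound), and we repeat this \(1/\lambda\) times to hit a piece carrying frustration. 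We accept iff the observed opposite-parity collision rate is small relative to the total collision rate.

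\textbf{Main obstacle.} The hardest step is the far-case analysis. We must show that large frustration \(\rho\), spread over pieces of only polylogarithmic-in-\(n\) conductance, forces a constant-magnitude parity mixing for a degree-weighted random start. We must also control the loss from the signed Cheeger inequality precisely enough to obtain \(\lambda=\rho/(1+\log(1/\rho))\) rather than \(\rho^2\). I would first try a dyadic bucketing of the signed Laplacian eigenvector, which is where the \(\log(1/\rho)\) factor should arise.
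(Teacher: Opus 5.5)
Your proposal has genuine gaps. The first is structural: you use a single walk length $t\approx\lambda^{-2}\log n$ from every seed, and a single length cannot separate the two cases under the stated hypothesis. A far instance can concentrate its frustration on a small region, for example a bipartite graph plus a disjoint non-bipartite expander carrying a $3\rho$ fraction of the edges. Then only an $O(\rho)$ stationary fraction of seeds can ever show parity overlap. Your close-case estimate bounds the fraction of overlapping seeds only by $O(\varepsilon t)$, which you compare with a constant $c_*$. To beat the far-case fraction you would need $\varepsilon t\ll\rho$, i.e.\ $\varepsilon\log n\lesssim\rho\lambda^2$. Also, $1/\lambda$ seeds at cost $\sqrt m\,t$ each already costs $\sqrt m\,\lambda^{-3}$, not $\sqrt m\,\lambda^{-2}$.

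The missing idea is to tie the walk length to the mass of the region that generates overlap. The paper finds a residual $R$ with $z=\pi(R)\ge\lambda$ and signed trace ratio $\beta(K_R)\gtrsim\lambda/z$. Dual Cheeger on the trace chain then gives constant overlap after $k\asymp z^2\lambda^{-2}\log n$ trace returns. By Kac's formula this corresponds to ordinary geometric length $L\asymp k/z\asymp z\lambda^{-2}\log n$, and it holds for a $\pi$-mass $\Omega(z)$ of seeds. The tester uses dyadic scales $q\approx z$ with $L_q\asymp q\lambda^{-2}\log n$ and $N_q\asymp 1/q$ seeds. At such a scale the Markov completeness bound $\varepsilon L_q/\alpha\ll q$ is exactly $\varepsilon\log n\lesssim\lambda^2$, and each scale costs $N_qL_q\sqrt m\asymp\sqrt m/\lambda^2$ up to logarithms.

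Your far-case step is also unsupported, for three reasons.
\begin{itemize}
\item An expander decomposition with $O(\lambda)$ crossing edges only yields pieces of conductance $\lambda/\log n$. The natural walk length inside a piece is then about $\lambda^{-2}\log^3 n$. Since the hypothesis contains a single $\log n$, this loss cannot be absorbed.
\item Trevisan's signed Cheeger inequality needs a lower bound on the bipartiteness ratio: the minimum over $S$ and signings of bad internal edges plus boundary, divided by $\vol(S)$. Frustration does not give one. A piece made of a large bipartite part attached to a small frustrated core can have ratio as small as its conductance times the core's relative volume.
\item Walks leave pieces through crossing edges, so a within-piece spectral gap does not directly control the walk on $G$.
\end{itemize}
The paper handles all of this by nonuniform peeling on residual sets. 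It peels a low-ratio pair $(S,x)$ whenever $\beta(K_{R_i})<\lambda/(Az_i)$. Trace chains, whose signed operator is positive semidefinite by laziness, absorb excursions outside $R_i$. The total peeling charge is $\lambda\sum_i(z_i-z_{i+1})/z_i\le\lambda(1+\log(1/\lambda))$. This harmonic sum is where the factor $1+\log(1/\rho)$ in $\lambda$ comes from, not the square root in Trevisan's bound.

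Finally, the close-case path-hitting argument controls the $L_1$ overlap $\sum_v\min\{p^+_v,p^-_v\}$, not your unweighted collision ratio. That ratio would need its own tolerant analysis. The paper instead estimates the $L_1$ overlap directly with bidirectional PageRank point estimates.
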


Theorem~\ref{tol:thm:main} improves on what Theorem~\ref{thm:ug-tolerant} gives
when specialized to this \(Q=2\) all-transposition case.  The condition
\(\varepsilon\log n\lesssim\rho^4\) is replaced by
\(\varepsilon\log n\lesssim\rho^2/(1+\log(1/\rho))^2\), and the leading
\(\sqrt m\) term changes from \(\sqrt m\,\rho^{-13/2}\) to
\(\sqrt m\,\rho^{-2}(1+\log(1/\rho))^2\).

\subsection{Comparison with prior work}
\label{sec:comparison-prior-work}

\begin{table}[t!]
\caption{Representative quantitative comparison points.}
\label{tab:related-comparison}
\centering
\small
\setlength{\tabcolsep}{2.2pt}
\renewcommand{\arraystretch}{1.2}
\begin{tabularx}{\textwidth}{@{}
>{\raggedright\arraybackslash}p{0.15\textwidth}
>{\raggedright\arraybackslash}p{0.16\textwidth}
>{\raggedright\arraybackslash}p{0.24\textwidth}
>{\raggedright\arraybackslash}X
>{\raggedright\arraybackslash}p{0.095\textwidth}@{}}
\toprule
Access model & Assumption & Promise / distance & Query complexity & Reference \\
\midrule
\multicolumn{5}{c}{\textsc{Bipartiteness}} \\
bounded-degree
& none
& \(0\) vs.\ \(\rho\)
& \(\sqrt n\,\poly(\log n/\rho)\)
& \cite{GR} \\
general graph
& none
& \(0\) vs.\ constant \(\rho\)
& \(\tO(\min\{\sqrt n,n^2/m\})\)
& \cite{KKR} \\
adjacency-list
& conductance \(\phi\), \(\rho\gtrsim\varepsilon/\phi^2\)
& \(\varepsilon\) vs.\ \(\rho\)
& \(\tO((\phi^2\rho)^{-1}
  m^{1/2+O(\varepsilon/(\phi^2\rho))})\)
& \cite{PengYoshida} \\
adjacency-list
& clusterable
& \(\varepsilon\) vs.\ \(1/2-\varepsilon\)
& \(\tO(n^{0.5001+O(\varepsilon)})\)
& \cite{JhaKumar} \\
adjacency-list
& \(\varepsilon\log n\lesssim \lambda^2\)
& \(\varepsilon\) vs.\ \(\rho\)
& \(\tO(\sqrt m/\lambda^2+n/(\sqrt m\lambda))\)
& Thm.~\ref{tol:thm:main} \\
\midrule
\multicolumn{5}{c}{\textsc{Unique Games}} \\
bounded-degree
& conductance \(\phi\),
\(\varepsilon\lesssim \phi^{C_Q}\),
\(\rho\gtrsim \varepsilon^{c_Q}/\phi^4\)
& \(\varepsilon\) vs.\ \(\rho\)
& \(\tO(2^{O(\phi^{\scriptstyle c_Q}/\sqrt\varepsilon)}
   n^{1/2+O(\varepsilon^{\scriptstyle c_Q}/\phi^2)})\)
& \cite{PengYoshida} \\
adjacency-list
& \(\varepsilon\log n\lesssim \rho^4\)
& \(\varepsilon\) vs.\ \(\rho\)
& \(\tO(\sqrt m\,\rho^{-13/2}+n\rho^{-2}/\sqrt m)\)
& Thm.~\ref{thm:ug-tolerant} \\
\bottomrule
\end{tabularx}
\end{table}

We compare with prior work along three dimensions: whether the tester is
tolerant, which graph-access model it uses, and whether it assumes additional
structure in the input graph.
For bipartiteness, Goldreich--Ron study the non-tolerant problem in the
bounded-degree model, and Kaufman--Krivelevich--Ron study it in the general
graph model~\cite{GR,KKR}.  These results are not directly comparable to ours,
since both the promise and the access model differ; our bipartiteness theorem
addresses the tolerant question in the adjacency-list model.  For tolerant
bipartiteness, closer prior results come from MaxCut but impose structural
assumptions: Peng--Yoshida obtain an \(\varepsilon\) vs.\ \(\rho\) guarantee on
conductance-\(\phi\) graphs when \(\rho\gtrsim\varepsilon/\phi^2\), with query
complexity \(\tO((\phi^2\rho)^{-1}
m^{1/2+O(\varepsilon/(\phi^2\rho))})\), while Jha--Kumar handle
\(\varepsilon\) vs.\ \(1/2-\varepsilon\) on clusterable graphs with
\(\tO(n^{0.5001+O(\varepsilon)})\) queries~\cite{PengYoshida,JhaKumar}.
For Unique Games, the Unique Label Cover algorithm of Peng--Yoshida can likewise
be viewed as a tolerant tester on expanders~\cite{PengYoshida}.  In contrast to
these structurally restricted settings, our theorems apply in the adjacency-list
model without an expansion or clusterability assumption.
Some formulations of the general graph model also include adjacency-pair
queries.  Our testers do not use this oracle and hence the stated upper bounds
remain valid in that stronger model.

Table~\ref{tab:related-comparison} gives representative quantitative bounds.  Distances in the promise column are
normalized in the natural way for the corresponding model, and
\(\varepsilon\) vs.\ \(\rho\) denotes the tolerant promise.  The bounds suppress
polylogarithmic factors and, where appropriate, constants depending on alphabet
size or degree; \(c_Q,C_Q>0\) denote constants depending only on \(Q\).  For the
tolerant bipartiteness row, write
\(\lambda:=\rho/(1+\log(1/\rho))\).

\subsection{Technical overview}
\label{sec:proof-overview}

Both main results use the same random-walk statistic: from a seed vertex, sample
many random walks and test whether the labels predicted at their endpoints can
be explained consistently from one seed label.  The argument has two layers.
First, we show that every far-from-satisfiable instance has many seeds with
large label ambiguity: for such a seed, no choice of seed label, together with
one predicted label for each reached endpoint, can explain most sampled walks
from that seed.  Second, we show that, once a seed is given, its ambiguity
statistic can be estimated in the adjacency-list model without exploring the
whole graph.

The proof has to overcome three obstacles that are not present in the
classical random walk testers for the non-tolerant setting.  First, in the
tolerant setting a single inconsistent collision is not by itself evidence of
global distance, because it may be caused by one of the few edges violated by a
labeling that is nearly optimal.  We replace collision detection by an
ambiguity statistic based on majority labels at each endpoint.  For
completeness, a nearly optimal labeling makes this statistic small because
ambiguity can arise only from sampled paths that traverse violated edges.
Second, without expansion, walks need not mix across the graph, so global
inconsistency may be hidden in regions of low conductance.  We address this by
analyzing trace chains on residual sets and peeling off nearly labelable
pieces.  The low-boundary candidate pieces are found by a Cheeger-style
threshold sweep over the trace-chain density.  Third, in the Unique Games
peeling step, low boundary is not enough to peel such a piece: it must also come
with a labeling that is nearly consistent with the constraints seen through
label transport along trace paths, even though the majority labels evolve with
time.  The majority sweep at a specified trace time supplies this consistency
argument.

The main design choice for the walks is to use \emph{geometrically stopped}
walks.
Geometric stopping is memoryless, which is needed when we condition on the
initial segment of a walk up to a prescribed return to the residual set \(R\),
and then let the remaining time run from the endpoint of that segment.  It also
turns all endpoint probabilities into quantities of
the same form as Personalized PageRank, for which we can use a point estimator
for PageRank on weighted walks, based on forward sampling and reverse local
pushes.  In
graphs with arbitrary degrees we measure starts and residual sets using the
stationary measure \(\pi\) of the random walk.  For \(R\subseteq V\), write
\(\pi_R\) for \(\pi\) conditioned on \(R\), namely
\(\pi_R(v)=\pi(v)/\pi(R)\) for \(v\in R\).  This is the standard weighting that
makes the trace chain on \(R\) reversible with respect to \(\pi_R\), and
sampling from \(\pi_R\) is implemented by sampling an almost-uniform edge,
choosing a random endpoint, and rejecting unless the endpoint lies in \(R\).

\paragraph{The single-seed statistic for Unique Games.}
Fix a seed vertex $s$, a seed label $a\in[Q]$, and a scale $L$.  Let $T_L$ be
an independent geometric time with mean $L$.  We view a sample as an infinite
lazy walk together with the stopping time $T_L$, and call
$P=(X_0,\ldots,X_{T_L})$ the observed prefix.  If $P:s\leadsto v$ is such a
lazy-walk prefix, its \emph{label transport} $\Pi_P$ is the permutation of
$[Q]$ obtained by composing the permutations encountered along
$P$; lazy holding steps carry the identity permutation.  Thus $P$ predicts
endpoint label $\Pi_P(a)$ at $v$.  For every endpoint $v$ and label $b$, write
\[
    p^{a,b}_{L,s}(v)=
    \Pr[X_{T_L}=v,\ \Pi_P(a)=b\mid X_0=s],
    \qquad
    u_{L,s}(v)=\sum_{b=1}^Q p^{a,b}_{L,s}(v)
    =\Pr[X_{T_L}=v\mid X_0=s].
\]
Thus the marginal $u_{L,s}$ is independent of $a$.  The ambiguity of the seed
label $a$ is
\[
    \operatorname{Amb}_L(s,a)=
    \sum_{v\in V}\left(u_{L,s}(v)-\max_{b\in[Q]}p^{a,b}_{L,s}(v)\right),
\]
and the statistic used by the tester is
\[
    \mu^{\rm UG}_L(s)=\min_{a\in[Q]}\operatorname{Amb}_L(s,a).
\]
At each endpoint we keep the most plausible predicted label and charge only
the remaining mass.  This ``explain by one label per endpoint'' definition is
important for tolerance.  In a nearly satisfiable instance, inconsistent
predictions are not necessarily evidence of global unsatisfiability, because
they may come from paths that cross one of the few constraints violated by a
labeling that is nearly optimal.  The statistic is therefore designed so that
such mass can be charged to paths that traverse violated edges.

The completeness proof is pathwise.  Suppose a labeling $x:V\to[Q]$ violates
at most $\varepsilon m$ constraints.  If the seed label is chosen as
$a=x(s)$ and a path $P:s\leadsto v$ avoids all violated edges of $x$, then the
label carried by the path is forced to be $\Pi_P(a)=x(v)$.  Hence clean paths
ending at the same endpoint never contribute to ambiguity.  Starting from stationarity, the
probability of traversing a violated edge in one lazy step is $O(\varepsilon)$,
and $\mathbb E T_L=L$.  Thus
\[
    \mathbb E_{s\sim\pi}\,\mu^{\rm UG}_L(s)\le C_Q\varepsilon L.
\]
This estimate explains the role of the scale $L$ in the tolerant gap.  If a
yes instance is tested with threshold $\alpha$, Markov's inequality gives
\[
    \Pr_{s\sim\pi}\!\left[\mu^{\rm UG}_L(s)\ge \alpha\right]
       \le C_Q\frac{\varepsilon L}{\alpha}.
\]
Thus a soundness statement that produces ambiguity at level $\alpha$ on a set
of stationary mass $r$ can be separated from the yes case only when
$\varepsilon L/\alpha\ll r$.

\paragraph{Why trace chains are needed.}
We now turn to soundness.  The main obstacle is the absence of any expansion
assumption.  A random walk may spend a long time in a region of low conductance,
and different regions may have almost independent labelings.  It is therefore
not enough to argue that global inconsistency quickly mixes into the walk.  The
proof instead maintains a residual set $R\subseteq V$ and studies the walk only
at the times when it visits $R$.

A transition of the \emph{trace chain} on $R$ is the entire excursion from a
vertex $u\in R$ to the next vertex $v\in R$ visited by the walk, together with the
label-transport permutation accumulated during the excursion.  The ordinary
trace-chain kernel $K_R(u,v)$ records only the endpoint transition probability
from $u$ to $v$, ignoring this label transport.  The trace chain is reversible
with respect to the stationary
measure conditioned on $R$, $\pi_R=\pi(\cdot)/\pi(R)$, and this remains true
although the excursion may go through vertices outside $R$.  For a subset
$S\subseteq R$ and a partial labeling $\ell:S\to[Q]$, define the probability
that a trace step fails for $(S,\ell)$: the trace step starts from $\pi_R$
conditioned on $S$, and it fails if it either leaves $S$ or ends in $S$ with a
label inconsistent with $\ell$ after label transport along the trace step.

If this failure probability is small, then $S$ is a nearly self-contained,
nearly labelable piece.  Indeed, because a trace transition includes the
ordinary one-step move inside $G[R]$ as one possible trace step, a small failure
probability for one trace step implies that few internal edges of $S$ are
violated by $\ell$ and that few edges leave $S$ inside the residual.  Such a set
can be peeled away.  We record the cost of peeling \(S\) by a parameter
\(\eta\): with the chosen partial labeling, the violated internal edges of
\(S\) plus the edges from \(S\) to the rest of the current residual are at most
\(\eta\operatorname{vol}(S)\).  If a sequence removes sets \(S_i\) with costs
\(\eta_i\) and the final residual has stationary mass $\pi(R^*)$, then the
partial labelings on the peeled pieces combine into a global labeling violating
at most
\[
    2\sum_i \eta_i\pi(S_i)+\pi(R^*)
\]
of the constraints.  Thus, in an instance with
$\tau_{\rm UG}(\mathcal U)\ge\rho$, a peeling process with sufficiently small
charges must eventually encounter a residual $R$ in which every partial
labeling has a large probability of failing after one trace step.

The peeling schedule is nonuniform.  If
$z=\pi(R)$, the relevant threshold is
\[
    \theta(z)\asymp_Q \rho z^{-1/2}.
\]
If every residual of mass at least $\rho^2$ had failure probability below a
constant multiple of $\theta(z)$ for some partial labeling, then the peeling
charge would be
bounded as follows.  Write $z_i$ for the successive residual masses.  The mass
peeled at step $i$ is $z_i-z_{i+1}$, and the corresponding peeling cost
\(\eta_i\) is $O_Q(\rho z_i^{-1/2})$, so the preceding charge bound gives
\[
    O_Q\!\left(\rho\sum_i
       \frac{z_i-z_{i+1}}{\sqrt{z_i}}\right)+\rho^2
    =O_Q(\rho),
\]
with a sufficiently small leading constant, contradicting
$\tau_{\rm UG}(\mathcal U)\ge\rho$.  Consequently there is a residual of mass
$z\ge\rho^2$ in which every partial labeling fails with probability at least
\(\theta\asymp_Q \rho z^{-1/2}\) after one trace step.  The rest of the
soundness proof converts this combinatorial trace obstruction into ambiguity of
geometrically stopped walks.

\paragraph{From a trace obstruction to ambiguity at a trace time.}
The most delicate Unique Games step is a contrapositive majority-sweep
argument.  Work inside a residual $R$ and consider $t$ steps of the trace chain.
For a seed-label pair $(s,a)$, the
distribution after $t$ trace-chain steps assigns mass to pairs $(v,b)$, where $b$
is the label obtained from \(a\) by the accumulated label transport to \(v\).
Suppose, toward a contradiction, that on a noticeable set of seeds there
is some seed label whose ambiguity stays below a parameter \(\omega\) for every
time $0\le t\le k$.  We show that this forces a low-failure-probability partial
labeling of the trace chain, contradicting the residual obstruction above when
\(\omega\) is chosen below a constant multiple of \(\theta\).

At each time $t$, choose a majority label $x_t(v)$ at each endpoint $v$.  There
are two difficulties.  First, the useful set of endpoints is not known in
advance.  If $u_t$ is the base distribution after $t$ trace-chain steps from
$s$, we find this set by a sweep over the density
$w_t(v)=u_t(v)/\pi_R(v)$ of that distribution relative to stationarity.  This is
the heat-kernel density of the trace chain.  Second, the majority labels
$x_t(v)$ can change with $t$, so a transition that is consistent with the
majority labels at time $t+1$ need not be consistent with the majority labels at
time $t$.

The proof handles these two issues simultaneously.  A Lovasz--Simonovits
averaging estimate for these trace-chain densities gives a time $t<k$ at which
the density has small average gradient,
\[
   \sum_{u,v}\pi_R(u)K_R(u,v)|w_t(u)-w_t(v)|
      \lesssim \sqrt{\frac{\log(1/\pi_R(s))}{k}}.
\]
We then sweep over the threshold sets \(S_\tau=\{v:w_t(v)>\tau\}\).  The
gradient bound controls, on average over \(\tau\), the trace-chain mass of
transitions leaving \(S_\tau\); this is the boundary contribution of the sweep.
The ambiguity bound by \(\omega\) means that the distribution on pairs \((v,b)\)
puts little mass outside the graph of the majority labels,
\(\{(v,x_t(v)):v\in R\}\).  Because every trace chain has an identity holding
transition of probability at least $1/2$, a label that is majority at time $t$
cannot disappear at time $t+1$ unless the non-majority mass at time $t+1$ is
large.
This controls the mass of vertices whose majority label changes.  Combining the
two controls bounds, on average over \(\tau\), the trace-chain mass of
transitions that stay inside \(S_\tau\) but violate the majority labeling
\(x_t\), meaning \(x_t(v)\neq \Pi(x_t(u))\) for the trace label transport
\(\Pi\).
A standard threshold-averaging argument then finds a threshold set $S_\tau$ for
which the trace boundary plus the trace violation probability is
at most
\[
    O_Q\!\left(
       \sqrt{\frac{\log n}{k}}+\omega
    \right),
\]
where the \(\omega\) term is exactly the cost of the assumed small ambiguity.
Without this assumption, the internal transitions that violate the majority
labeling would be uncontrolled.  Taking $k\asymp_Q\theta^{-2}\log n$ and
$\omega$ to be a sufficiently small $Q$-dependent constant multiple of
$\theta$ contradicts the residual obstruction established above: in this
residual $R$, every partial labeling fails with probability at least $\theta$
after one trace step.  Therefore, for most starts in $R$, every seed label
becomes $\Omega_Q(\theta)$-ambiguous at some trace time $t\le k$.

This is why the majority sweep over trace times is not merely a standard
expansion sweep.  The density sweep finds a set with small trace boundary, while the
ambiguity assumption supplies a labeling of that set: internal trace
transitions violate the majority labeling only with small probability.  The holding
transition is what lets this remain true even though the majority labels may
change with time.

\paragraph{From trace time to geometric time.}
The tester does not observe exactly the state after $t$ trace-chain steps; it
observes a geometrically stopped ordinary walk.  Let $z=\pi(R)$, and let $H_k$
be the ordinary lazy-walk time at which the trace chain has made $k$ steps,
equivalently the time of the $k$-th later visit to $R$.  Kac's formula, applied
component by component, gives
\[
    \mathbb E_{s\sim\pi_R} H_k\le k/\pi(R)=k/z.
\]
Markov's inequality then gives a truncation scale
\(L_0\asymp k/(z\omega)\) such that, for most starts, the event
\(H_k>L_0\) has probability \(O(\omega)\).  The trace-time argument above gives,
for each seed label $a$, a possibly different trace time $t_a\le k$ at which
ambiguity is large.  We truncate the
trace-chain distribution by keeping only the mass coming from ordinary walk
prefixes whose length \(H_{t_a}\) is at most \(L_0\).  This gives a
subdistribution, meaning a nonnegative measure of total mass at most one, that
is close to the original one, so only a small amount of ambiguity is lost.  Then
we choose the geometric mean $L$ much larger than $L_0$.  For each
realized retained prefix, write \(h\) for its ordinary lazy-walk length; by
construction \(h=H_{t_a}\le L_0\), and the event \(T_L\ge h\) has constant
probability.

The memorylessness of $T_L$ is crucial here.  Conditional on \(T_L\ge h\), the
remaining time \(T_L-h\) is an independent fresh geometric time.  Thus the
distribution on endpoint-label pairs at the final stopping time \(T_L\)
contains the retained ambiguous trace distribution, continued from its endpoint
by an independent geometrically stopped lazy walk, with labels propagated by the
accumulated label transport.  Ambiguity cannot decrease under such propagation,
and it is superadditive under adding nonnegative subdistributions.
Hence the geometric ambiguity remains $\Omega_Q(\theta)$.

The resulting parameter flow is
\[
   k\asymp_Q \theta^{-2}\log n,
   \qquad
   L\asymp_Q \frac{k}{z\theta}
      \asymp_Q \rho^{-3}z^{1/2}\log n,
   \qquad
   \alpha\asymp_Q \theta
      \asymp_Q \rho z^{-1/2}.
\]
Rounding $z$ to a dyadic scale $r$ yields the multiscale statement used by the
tester:
\[
    \Pr_{s\sim\pi}\!\big[\mu^{\rm UG}_{L_r}(s)\ge \alpha_r\big]
       \gtrsim_Q r,
    \qquad
    L_r\asymp_Q \rho^{-3}r^{1/2}\log n,
    \qquad
    \alpha_r\asymp_Q \rho r^{-1/2}.
\]
Combining this with the completeness estimate gives
\[
    \frac{\varepsilon L_r}{\alpha_r}
       \asymp_Q \varepsilon\rho^{-4}r\log n.
\]
In the far case, the multiscale soundness bound gives
\[
    \Pr_{s\sim\pi}\!\big[\mu^{\rm UG}_{L_r}(s)\ge \alpha_r\big]
       =\Omega_Q(r),
\]
so the seeds detected at scale $r$ have stationary mass $\Omega_Q(r)$.  In the
near case, Markov's inequality and the completeness estimate put the detected
mass at this same scale at most
$O_Q(\varepsilon L_r/\alpha_r)=O_Q(\varepsilon\rho^{-4}r\log n)$.  Thus the two
cases are separated when $\varepsilon\log n\lesssim_Q\rho^4$.  This calculation
is also the reason the generic Unique Games theorem has a $\rho^4$ tolerance
window.

\paragraph{Estimating the statistic in sublinear time.}
For a given seed, the probabilities of endpoint-label pairs above are PageRank
coordinates on the label lift $V\times[Q]$.  From a label-lifted state
$(v,b)$, a lazy holding step stays at $(v,b)$, while traversing an edge $vu$
moves to $(u,\pi_{vu}(b))$.  This is an undirected weighted label lift with degree
$D(v,b)=2\deg(v)$, so the point estimator for PageRank on weighted walks
applies.

The estimator does not enumerate endpoints.  For a chosen seed label $a$, it
samples an endpoint $v$ from the base geometrically stopped walk and estimates
all $Q$ numbers $p^{a,b}_{L,s}(v)$ using point queries for PageRank.  From these
estimates it forms the local residual
$1-\max_b p^{a,b}_{L,s}(v)/u_{L,s}(v)$ and averages over sampled endpoints.
For a target ambiguity threshold $\alpha$ (at scale $r$, $\alpha=\alpha_r$), the
threshold in the point estimator is scaled as
$\tau_v\asymp_Q \alpha\deg(v)/m$.  This makes the cost of estimating a sampled
endpoint essentially independent of its degree and yields expected query
complexity
\[
    \tOQ\!\big(L\sqrt m\,\alpha^{-7/2}\big)
\]
for one seed.  At scale \(r\), the far-case guarantee only promises active
seeds of stationary mass \(\Omega_Q(r)\), so the tester takes \(O_Q(r^{-1})\)
seed samples.  These seeds are generated from almost-uniform edge samples; a
random endpoint of an almost-uniform edge is pointwise almost stationary.
Summing over the dyadic scales gives
\[
    \tOQ\!\left(
        \sqrt m\,\rho^{-13/2}
        +\frac{n}{\sqrt m}\rho^{-2}
    \right),
\]
which is the query bound in Theorem~\ref{thm:ug-tolerant}.

\paragraph{Why bipartiteness improves on the generic two-label case.}
If we specialize Unique Games to the two-label instance in which every edge
carries the transposition, the label carried by a path is just its parity.  The
ambiguity statistic is then the overlap between the endpoint probabilities
\(p^+_{L,s}\) and \(p^-_{L,s}\) of even and odd geometrically stopped walk
prefixes:
\[
    \mu_L(s)=\sum_v \min\{p^+_{L,s}(v),p^-_{L,s}(v)\}.
\]
Applying the Unique Games analysis with $Q=2$ already gives a tolerant
bipartiteness tester, but it throws away the algebraic structure of parity.  The
improved proof uses that
parity has a sign structure in which a large trace failure probability
contracts a signed imbalance all the way to constant even/odd overlap, rather
than producing only ambiguity of size \(\theta\), as in the general Unique Games
argument.

For a residual set $R$, let $K_R^+$ and $K_R^-$ be the trace kernels for even
and odd excursions between visits to $R$, and set
\[
    B_R=K_R^+-K_R^- .
\]
We define the signed trace bipartiteness ratio $\beta(K_R)$ as the minimum,
over a nonempty $S\subseteq R$ and a signing $x:S\to\{\pm1\}$, of the one-step
trace probability of either leaving $S$ or returning inside $S$ with sign
inconsistent with $x$.  This is the signed analogue of the usual bipartiteness
ratio, with the trace chain weighted by \(\pi_R\).  The standard signed
dual-Cheeger theorem for this ratio
(equivalently, the double-cover form of the bipartiteness-ratio theorem;
see~\cite{TrevisanMaxCut,LangeCheeger}) implies that if $\beta(K_R)\ge\theta$ and $B_R$ has no negative spectrum, then
\[
    \|B_R^k f\|_{2,\pi_R}
      \le (1-c\theta^2)^k\|f\|_{2,\pi_R}.
\]
The laziness of the trace chain supplies an identity holding trace step with
probability at least $1/2$, which makes $B_R$ positive semidefinite; hence the
dual-Cheeger theorem applies directly to the trace chain.

This is the key improvement over the general Unique Games proof.  In the UG
case, a large probability that a trace step fails only forces a small amount of
label ambiguity at some time, of order $\theta$.  In the signed bipartite case, the
spectral contraction of $B_R$ drives the entire parity imbalance toward zero.  After
$k\asymp \theta^{-2}\log n$ trace-chain steps, for a constant fraction of starts
inside $R$ the even and odd trace endpoint distributions have \emph{constant}
overlap.  Truncating the ordinary time needed for those trace-chain steps by
Kac's formula and using geometric
memorylessness turns this constant trace overlap into constant geometric
overlap at scale
\[
    L\asymp \frac{k}{\pi(R)}.
\]
There is no additional $1/\theta$ loss here, because the overlap being
preserved is already constant.

The peeling schedule is also different.  Put
\[
    \lambda\asymp \frac{\rho}{1+\log(1/\rho)}.
\]
When the current residual has mass $z$, the bipartite proof peels if the
failure probability for one trace step is below a constant multiple of
$\lambda/z$.  If this were possible until the residual mass dropped below
$\lambda$, the total
peeling charge would be, writing $z_i$ for the successive residual masses,
\[
   O\!\left(\lambda\sum_i\frac{z_i-z_{i+1}}{z_i}\right)+\lambda
      =O\bigl(\lambda(1+\log(1/\lambda))\bigr)<\rho,
\]
contradicting $\tau(G)\ge\rho$ after choosing the constants.  Hence some
residual has mass $z\ge\lambda$ and, for every signing, a trace step fails with
probability $\theta\gtrsim \lambda/z$.  The signed spectral lemma then gives
constant geometric overlap for a stationary mass $\Omega(z)$ of seeds at every
scale
\[
    L\gtrsim \frac{z}{\lambda^2}\log n.
\]
Rounding $z$ to a dyadic $q$ yields the multiscale statement used by the
tester:
\[
    \Pr_{s\sim\pi}[\mu_{L_q}(s)\ge \alpha]\gtrsim q,
    \qquad
    L_q\asymp \frac{q}{\lambda^2}\log n,
\]
where $\alpha>0$ is an absolute constant.

Completeness now requires only
\[
    \varepsilon L_q \ll q,
    \qquad\text{equivalently}\qquad
    \varepsilon\log n\lesssim \lambda^2
       \asymp \frac{\rho^2}{(1+\log(1/\rho))^2}.
\]
Because the detection threshold $\alpha$ is constant, estimating the overlap for
one seed costs $\tO(L_q\sqrt m)$.  With $N_q\asymp q^{-1}$ seeds at scale
$q$, each scale costs $\tO(\sqrt m/\lambda^2)$ for overlap estimation, and the
seed-generation cost sums to $\tO(n/(\sqrt m\lambda))$.  Substituting
$\lambda\asymp \rho/(1+\log(1/\rho))$ gives exactly the query bound in
Theorem~\ref{tol:thm:main}.  The proof still reuses the Unique Games machinery
for estimating overlap from a given seed and for sampling seeds by first
sampling an edge and then a random endpoint.
Thus the
improvement over the generic Unique Games corollary comes from the signed
residual lemma, not from a different sampling implementation.

\subsection{Related work}
\label{sec:relation-prior-work}

\paragraph{Tolerant bipartiteness in dense graphs.}
In the dense graph model, algorithms have adjacency-matrix access and distance is
normalized by \(n^2\); inspecting an induced subgraph on \(q\) sampled vertices
uses \(O(q^2)\) matrix queries.  The sampling framework of Alon, Fernandez de la
Vega, Kannan, and Karpinski approximates dense MaxCut, and more generally dense
MAX-CSP values, to additive error \(\eta\) in normalized value from
\(q=O(\log(1/\eta)/\eta^4)\) sampled vertices~\cite{AFKK}.  The recent tolerant
bipartiteness tester of Ghosh,
Mishra, Raychaudhury, and Sen distinguishes \(\varepsilon\)-close from
\((2+\Omega(1))\varepsilon\)-far from bipartite using
\(\tO(1/\varepsilon^3)\) adjacency-matrix queries~\cite{GhoshMRS}.

\paragraph{Expansion and clusterability testing.}
The structural assumptions in several comparison works are also part of a
larger literature on random-walk-based testing of expansion and cluster
structure.  Expansion testing in bounded-degree graphs was studied by
Czumaj--Sohler and Kale--Seshadhri~\cite{CzumajSohlerExpansion,KaleSeshadhri}.
For cluster structure, Czumaj--Peng--Sohler gave a sublinear tester, and
Chiplunkar--Kapralov--Khanna--Mousavifar--Peres refined this line with improved
clusterability testers and lower bounds~\cite{CzumajPengSohler,ChiplunkarKKMP}.

\paragraph{MaxCut and Unique Games approximation.}
Related dense MaxCut approximation schemes include the greedy algorithm of
Mathieu and Schudy~\cite{MathieuSchudy}.  The Goemans--Williamson SDP algorithm
and the Khot--Kindler--Mossel--O'Donnell hardness result are the classical
landmarks for MaxCut approximation~\cite{GoemansWilliamson,KKMO}.  On the Unique
Games side, expander constraint graphs and spectral algorithms were studied by
Arora--Khot--Kolla--Steurer--Tulsiani--Vishnoi and by
Kolla~\cite{AKKSTV,Kolla}.  The logarithmic loss in our tolerant gap is
consistent with known Unique Games phenomena.  Gupta and Talwar gave an
\(O(\log n)\)-approximation for the
minimization version of Unique Games~\cite{GuptaTalwar}.  Their algorithm
rounds an instance with optimum \(\varepsilon m\) unsatisfied edges to a
labeling with \(O(\varepsilon m\log n)\) unsatisfied edges.  Thus, at the level of
minimizing violated constraints, a separation condition of the form
\(\varepsilon\log n\ll\rho\) is a natural scale.  Trevisan's SDP algorithm for
Unique Games gives a labeling satisfying a constant fraction of constraints
when the instance value is \(1-O(1/\log n)\)~\cite{TrevisanUG}.  For constant
\(\rho\), this lies in the
same broad near-satisfiable scale in which our tester permits
\(\varepsilon=\tO(1/\log n)\).

\subsection{Organization}
\label{sec:organization}

Section~\ref{sec:preliminaries} fixes the adjacency-list oracle model, walk
conventions, and sampling primitives used throughout the paper.
Section~\ref{sec:tolerant-unique-games} proves the tolerant Unique Games tester,
including the trace-chain argument, the ambiguity estimator, and the seed
sampling routine.
Section~\ref{sec:bipartiteness-special-case} proves the improved
bipartiteness theorem by replacing the generic ambiguity analysis with the
signed even/odd overlap argument.  The appendix contains proofs deferred from
the preliminaries.
\section{Preliminaries}
\label{sec:preliminaries}

\subsection{Oracle models and walk conventions}
\label{sec:oracle-models}

This subsection fixes the query model and the standing graph conventions used
throughout the paper.

\begin{definition}[adjacency-list model]
\label{def:adjacency-list-model}
An input graph is a finite undirected multigraph
\[
    G_{\mathrm{in}}=(V_{\mathrm{in}},E),
    \qquad
    n=|V_{\mathrm{in}}|,
    \qquad
    m=|E|.
\]
Parallel edges are allowed and counted with multiplicity, while input
self-loops are excluded.

The oracle supports the following queries.
\begin{itemize}
    \item A uniform-vertex query returns a vertex sampled uniformly from
    \(V_{\mathrm{in}}\).
    \item A degree query on \(v\) returns \(\deg(v)\), with edge copies counted
    with multiplicity.
    \item A neighbor query on \((v,i)\), where \(1\le i\le \deg(v)\), returns
    the endpoint of the \(i\)-th edge copy incident to \(v\).
\end{itemize}
The model includes the value \(m\).  Each oracle access counts as one query.
The model does not include adjacency-pair queries.
\end{definition}

\paragraph{Conventions.}
The case \(m=0\) is trivial.  In every nontrivial analysis below we delete
isolated vertices and reuse the notation \(G=(V,E)\) for the resulting graph.
This does not change any edge-normalized distance or constraint optimum.  Query
bounds remain stated in terms of the original \(n=|V_{\mathrm{in}}|\), so
replacing \(V_{\mathrm{in}}\) by \(V\) can only weaken them.

Unless specified otherwise, the lazy random walk on \(G\) stays put with
probability \(1/2\), and with probability \(1/2\) traverses a uniformly random
incident edge copy.  Its stationary distribution is
\[
    \pi(v)=\frac{\deg(v)}{2m}.
\]
The lazy holding step is artificial and is not an input self-loop.

\subsection{Common Markov-chain inequalities}
\label{sec:common-markov-chain-facts}

We record two Markov-chain inequalities used by both the Unique Games and
bipartiteness analyses.

\paragraph{Return-time truncation.}

\begin{lemma}[stationary return-time truncation]
\label{lem:stationary-return-time-truncation}
Let \(G=(V,E)\) be a finite undirected multigraph with no isolated vertices and
stationary distribution \(\pi(v)=\deg(v)/(2|E|)\) for the lazy random walk.
Let \(R\subseteq V\) be nonempty, let \(\pi_R=\pi(\cdot)/\pi(R)\), and let
\(H_k\) be the ordinary lazy-walk time of the \(k\)-th later visit to \(R\).
Then, for every \(k\ge1\),
\[
    \mathbb E_{s\sim\pi_R}H_k\le \frac{k}{\pi(R)}.
\]
Consequently, at least half of the \(\pi_R\)-mass of starts satisfy
\[
    \Pr\!\left[H_k>\frac{16k}{\pi(R)}\,\middle|\, X_0=s\right]\le \frac18.
\]
\end{lemma}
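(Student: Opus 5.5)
The plan is to write \(H_k\) as a sum of \(k\) successive return times and show that each one has mean at most \(1/\pi(R)\) when the start is \(\pi_R\)-distributed. Let \(\tau^+_R=\inf\{t\ge1:X_t\in R\}\) be the first return time to \(R\). Then
\[
H_k=\sum_{j=1}^k\bigl(H_j-H_{j-1}\bigr),\qquad H_0=0,
\]
and by the strong Markov property the \(j\)-th increment, conditioned on the past, is distributed as \(\tau^+_R\) started from \(X_{H_{j-1}}\in R\). The position \(X_{H_{j-1}}\) is the state of the trace chain on \(R\) after \(j-1\) steps. The trace chain of a reversible chain is reversible with respect to \(\pi_R\), so \(\pi_R\) is stationary for it, and hence \(X_{H_{j-1}}\sim\pi_R\) for every \(j\). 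Therefore
\[
\mathbb E_{s\sim\pi_R}H_k=k\,\mathbb E_{s\sim\pi_R}\tau^+_R .
\]

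The key step is Kac's formula, \(\mathbb E_{s\sim\pi_R}\tau^+_R=1/\pi(R)\). The graph may be disconnected, so I apply it componentwise. Let \(C\) be a connected component that meets \(R\). The lazy walk is irreducible on \(C\), with stationary distribution \(\pi(\cdot)/\pi(C)\). Kac's lemma on \(C\) then gives
\[
\sum_{v\in R\cap C}\pi(v)\,\mathbb E_v\tau^+_R=\pi(C)
\]
after multiplying through by \(\pi(C)\). This follows from the standard identity \(\sum_{v\in A}\mu(v)\mathbb E_v\tau_A^+=\mu(\{\text{states from which }A\text{ is reachable}\})\) for a stationary \(\mu\), and every vertex of \(C\) reaches \(R\cap C\). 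Summing over components meeting \(R\) gives \(\sum_{v\in R}\pi(v)\mathbb E_v\tau^+_R\le 1\). Dividing by \(\pi(R)\) yields \(\mathbb E_{\pi_R}\tau^+_R\le 1/\pi(R)\), where the inequality absorbs the components that do not meet \(R\). If needed, I will verify the identity directly by the standard cycle argument: \(\mathbb E_v\tau^+_R\) counts expected visits within an excursion, and stationarity sums these to the reachable mass.

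I expect the main point needing care to be the stationarity of \(\pi_R\) under the trace chain, since the increments are not i.i.d. Only the marginal law at each return time matters, though, because linearity of expectation applies. To check stationarity, note that \(\pi_R K_R=\pi_R\) follows from the reversibility relation \(\pi(u)K_R(u,v)=\pi(v)K_R(v,u)\). This relation holds by reversing excursions: the probability of a path from \(u\) to \(v\) that avoids \(R\) at interior times, weighted by \(\pi(u)\), equals the \(\pi(v)\)-weighted probability of the reversed path. Finally, \(H_k\) is finite almost surely, because it is a finite sum of return times, each finite on the component.

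For the consequence, Markov's inequality gives
\[
\pi_R\bigl(\{s:\mathbb E_sH_k>2k/\pi(R)\}\bigr)\le\tfrac12 .
\]
For each remaining start \(s\), Markov's inequality again gives
\[
\Pr_s\bigl[H_k>16k/\pi(R)\bigr]\le \frac{2k/\pi(R)}{16k/\pi(R)}=\frac18 .
\]
So at least half of the \(\pi_R\)-mass of starts satisfies the stated bound.
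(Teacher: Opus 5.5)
Your proposal is correct and takes essentially the same route as the paper: Kac's formula applied component by component, summed over the components that meet \(R\), followed by the same two Markov inequalities. The one difference is that you make explicit the stationarity of \(\pi_R\) under the trace chain, which guarantees that each of the \(k\) return increments has the same mean; the paper uses this implicitly when it multiplies the Kac mean return time by \(k\).
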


\begin{proof}
Let \(\mathcal C(R)\) be the collection of connected components \(C\) of \(G\)
with \(R\cap C\neq\emptyset\).  Conditional on \(s\in C\), the walk never leaves
\(C\), the relevant return set is \(R\cap C\), and the conditional start
distribution is \(\pi_{R\cap C}\).  Kac's formula~\cite{Kac}, applied inside
\(C\), gives mean return time \(\pi(C)/\pi(R\cap C)\) between successive visits
to \(R\cap C\).  Therefore
\[
    \mathbb E[H_k\mid s\in C]
    =
    k\,\frac{\pi(C)}{\pi(R\cap C)}.
\]
Averaging over the starting component gives
\[
\begin{aligned}
    \mathbb E_{s\sim\pi_R}H_k
    &=
    \sum_{C\in\mathcal C(R)}
    \frac{\pi(R\cap C)}{\pi(R)}
    \cdot
    k\frac{\pi(C)}{\pi(R\cap C)} \\
    &=
    \frac{k}{\pi(R)}
    \sum_{C\in\mathcal C(R)}\pi(C)
    \le
    \frac{k}{\pi(R)}.
\end{aligned}
\]
Markov's inequality over the random start implies that at least half of the
\(\pi_R\)-mass satisfies \(\mathbb E_sH_k\le 2k/\pi(R)\).  For such starts, a
second Markov bound gives the tail bound stated in the lemma.
\end{proof}

\paragraph{Heat-kernel L1-gradient averaging.}
\label{sec:heat-kernel-sweep}

The lemma below bounds the time-averaged \(L_1\)-gradient of heat-kernel
densities for finite lazy reversible chains.  It is a finite-state version of a
Lov\'asz--Simonovits heat-kernel sweep inequality~\cite{LovaszSimonovits}.

\begin{lemma}[Heat-kernel \(L_1\)-gradient averaging]
\label{lem:heat-kernel-gradient}
Let \(K\) be a lazy reversible Markov kernel on a finite state space \(U\)
with positive stationary distribution \(\nu\).  No irreducibility assumption is
needed.  For \(s\in U\), let
\[
    w_t(u)=\frac{K^t(s,u)}{\nu(u)}
\]
be the heat-kernel density from \(s\).  Then
\[
    \sum_{t=0}^{k-1}
    \left(
        \sum_{u,v}
        \nu(u)K(u,v)|w_t(u)-w_t(v)|
    \right)^2
    \le
    C\log\frac{1}{\nu(s)}
\]
for a universal constant \(C\).  Consequently, for some \(0\le t<k\),
\[
    \sum_{u,v}
    \nu(u)K(u,v)|w_t(u)-w_t(v)|
    \le
    C\sqrt{\frac{\log(1/\nu(s))}{k}}.
\]
\end{lemma}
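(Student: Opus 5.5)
We will prove the gradient-averaging bound by tracking an entropy potential along the heat flow and showing that each step decreases it by at least a constant times the square of the \(L_1\)-gradient. Write \(\mu_t=K^t(s,\cdot)\) and \(w_t=\mu_t/\nu\). Reversibility gives \(w_{t+1}=Kw_t\) as functions on \(U\) (since \(\mu_{t+1}(v)=\sum_u\mu_t(u)K(u,v)=\nu(v)\sum_u K(v,u)w_t(u)\)). The potential is the relative entropy
\[
\Phi_t=\sum_u \nu(u)\,w_t(u)\log w_t(u),
\]
which satisfies \(\Phi_0=\log(1/\nu(s))\) and \(\Phi_t\ge \log\frac{1}{\nu(U_s)}\cdot\) well, more simply \(\Phi_t\ge -\log(\text{mass})\)-type bound. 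Since \(\mu_t\) is a probability measure and \(\nu\) is a probability measure, Jensen gives \(\Phi_t\ge 0\). No irreducibility is needed here, as the argument only uses the reversible kernel and nonnegativity of entropy.

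The key one-step inequality is
\[
\Phi_t-\Phi_{t+1}\ \ge\ c\,\Bigl(\sum_{u,v}\nu(u)K(u,v)|w_t(u)-w_t(v)|\Bigr)^2 .
\]
Telescoping from \(t=0\) to \(k-1\) then gives the main sum bound with \(C=1/c\). The "consequently" clause follows because the minimum of \(k\) nonnegative terms is at most their average.

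To prove the one-step inequality, use laziness: write \(K=\tfrac12(I+P)\), where \(P\) is reversible with respect to \(\nu\). Convexity of \(\phi(x)=x\log x\) gives the entropy drop. Specifically, \(\Phi_t-\Phi_{t+1}=\sum_v\nu(v)[\,\sum_u K(v,u)\phi(w_t(u))-\phi(Kw_t(v))\,]\), which is a Jensen gap. Its pair-level lower bound is
\[
\tfrac12\phi(x)+\tfrac12\phi(y)-\phi\bigl(\tfrac{x+y}{2}\bigr)\gtrsim \frac{(x-y)^2}{x+y}.
\]
Using the holding mass of \(1/2\), I would lower bound the gap at \(v\) by \(c\sum_u P(v,u)\,(w(v)-w(u))^2/(w(v)+w(u))\). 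One way to do this is to decompose \(K(v,\cdot)\) as a mixture of two-point distributions \(\{v,u\}\), each with weight \(\ge\) half on \(v\); the Jensen gap is superadditive over such a mixture up to constants. Then Cauchy–Schwarz gives
\[
\Bigl(\sum\nu(v)P(v,u)|w(v)-w(u)|\Bigr)^2\le \sum\nu(v)P(v,u)\frac{(w(v)-w(u))^2}{w(v)+w(u)}\cdot\sum\nu(v)P(v,u)(w(v)+w(u)).
\]
The last factor is at most \(2\), since \(\sum\nu w=1\) and \(\nu P=\nu\). Finally, \(K(u,v)|w(u)-w(v)|=\tfrac12P(u,v)|w(u)-w(v)|\), because diagonal terms vanish, so the \(L_1\)-gradient for \(K\) is half that for \(P\).

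The main obstacle is the Jensen-gap lower bound at a vertex with many neighbors. The difficulty is that the gap of a multi-point mixture is not simply a sum of pairwise gaps. I would handle this by first reducing to the squared-Hellinger form, using \(\phi(x)-\phi(m)-\phi'(m)(x-m)\ge (\sqrt x-\sqrt m)^2\) for \(m=Kw(v)\), which is a standard inequality for \(x\log x\). Then I would use the mean-zero property to drop the linear terms, and exploit \(K(v,v)\ge1/2\) to compare \(\sqrt{w(u)}-\sqrt{m}\) with \(\sqrt{w(u)}-\sqrt{w(v)}\). This comparison uses the identity \((\sqrt a-\sqrt b)^2\asymp (a-b)^2/(a+b)\) up to constants. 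If this route gets messy, the fallback is the spectral/Lovász–Simonovits curve argument, which yields the same \(\log(1/\nu(s))\) bound.
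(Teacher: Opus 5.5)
Your proposal is correct and follows essentially the paper's argument. The ingredients are the same: the entropy potential \(\operatorname{Ent}_\nu(w_t)\), which starts at \(\log(1/\nu(s))\) and stays nonnegative; a lower bound on each row's Jensen gap via superadditivity over a two-point mixture decomposition together with the pairwise bound \(c(a-b)^2/(a+b)\); laziness to isolate the terms \((w(u)-w(v))^2/(w(u)+w(v))\); Cauchy--Schwarz with the factor \(2\) from stationarity; and telescoping.

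The only difference is cosmetic. You decompose \(K(v,\cdot)=\sum_u P(v,u)\tfrac12(\delta_v+\delta_u)\), whereas the paper uses \(\sum_{v,z}K(u,v)K(u,z)\tfrac12(\delta_v+\delta_z)\) and keeps only the \(z=u\) terms. In either form the superadditivity of the Jensen gap is exact by convexity of \(\phi\), not just up to constants, so the Hellinger fallback for the obstacle you flagged is unnecessary.
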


See Appendix~\ref{app:sec:preliminary-proofs} for the proof.

\subsection{Point estimation for Personalized PageRank on weighted walks}
\label{sec:pagerank-estimation}

We first define the reversible weighted walks and Personalized PageRank kernel
used by the point estimator below.

\begin{definition}[reversible weighted walk]
\label{def:weighted-reversible-walk-oracle}
A reversible weighted walk consists of a finite state space \(\Omega\) with a
symmetric nonnegative edge-weight matrix \(W\), positive weighted degrees
\[
    D_u=\sum_v W(u,v),
\]
and lazy random-walk transition matrix \(M(u,v)=W(u,v)/D_u\).  A self-loop of
weight \(w\) contributes \(w\), not \(2w\), to \(D_u\), and this is the same
degree used in the transition probabilities.
\end{definition}

Let \((\Omega,W,D,M)\) be such a walk.  For \(L\ge1\), put
\[
    \zeta=\frac{1}{L+1},
    \qquad
    \lambda=\frac{L}{L+1},
\]
and define the Personalized PageRank kernel
\[
    \prr^L_x(t)=\zeta\sum_{\ell\ge0}\lambda^\ell M^\ell(x,t).
\]
Equivalently, \(\prr^L_x(t)\) is the probability that a walk started from
\(x\) is at target state \(t\) when its length is an independent geometric
random variable with mean \(L\).  We use a standard bidirectional PageRank
routine~\cite{LofgrenBanerjeeGoel} to estimate the point value
\(\prr^L_x(t)\).  The target degree \(D_t\) appears only in the query bound.

\begin{proposition}[Point estimator for PageRank on weighted walks]
\label{prop:pagerank-point-estimator}
Given a reversible weighted walk as in
Definition~\ref{def:weighted-reversible-walk-oracle}, assume access to the
following operations: obtain \(D_t\) with \(O(1)\) overhead, enumerate all
outgoing weighted neighbors of a state \(u\) in \(O(D_u)\) oracle queries, and
simulate a walk stopped after an independent geometric time with mean \(L\)
from any state in expected \(O(L)\) oracle queries.  Fix \(L\ge1\),
\(x,t\in\Omega\), \(0<\eta<1/4\), \(0<\delta<1/3\), and a threshold
\(\tau_t>0\).  There is an estimator \(\widehat{\prr}^L_x(t)\) such that, with
probability at least \(1-\delta\),
\[
    \bigl|\widehat{\prr}^L_x(t)-\prr^L_x(t)\bigr|
    \le
    \eta\bigl(\prr^L_x(t)+\tau_t\bigr),
\]
and the estimator has expected query complexity
\[
    \tO\!\left(
        \frac{L}{\eta}\sqrt{\frac{D_t}{\tau_t}}\log\frac{1}{\delta}
    \right),
    \qquad D_t:=D(t).
\]
\end{proposition}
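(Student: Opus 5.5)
We will follow the bidirectional scheme of Lofgren--Banerjee--Goel: a reverse local push from the target \(t\), followed by forward geometric walks from \(x\) that correct the push residual. First run a reverse push with threshold \(r_{\max}\), maintaining an estimate vector \(p\) and residual vector \(r\) with the invariant
\[
    \prr^L_x(t)=p(x)+\sum_{v}\prr^L_x(v)\,r(v)/\zeta ,
\]
where we rescale so that the identity is \(\prr^L_x(t)=p_t(x)+\mathbb E_{v\sim\prr^L_x}[r_t(v)]\). Start with \(r(t)=\zeta\) and \(p=0\). Pushing at a state \(u\) with \(r(u)>r_{\max}\) moves \(\zeta r(u)\) into \(p(u)\) and distributes \(\lambda r(u)M(v,u)\) to each in-neighbor \(v\). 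Here reversibility gives \(M(v,u)=W(u,v)/D_v\), so the in-neighbors of \(u\) are exactly its out-neighbors, which we can enumerate in \(O(D_u)\) queries. Checking that each push preserves the invariant is the standard one-line identity \(\prr_x=\zeta e_x+\lambda\,\prr_x M\) applied column-wise.

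Next we bound the push cost. By reversibility, \(\prr^L_v(u)/D_u=\prr^L_u(v)/D_v\). Total mass pushed at \(u\) is at most \(p(u)/\zeta\le \prr^L_u(t)/\zeta\), and each push moves at least \(r_{\max}\). Hence the number of pushes at \(u\) is at most \(\prr^L_u(t)/(\zeta r_{\max})\), and each costs \(D_u\) queries. Summing over \(u\) and using \(\sum_u D_u\prr^L_u(t)=D_t\sum_u\prr^L_t(u)=D_t\) gives a total cost of \(O\bigl(D_t/(\zeta r_{\max})\bigr)=O(L D_t/r_{\max})\).

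For the forward phase, take \(N\) independent geometric walks from \(x\) with endpoints \(v_i\), and output
\[
    \widehat{\prr}=p(x)+\frac1N\sum_i r(v_i).
\]
Each summand lies in \([0,r_{\max}]\) and has mean at most \(\prr^L_x(t)\). A multiplicative Chernoff/Bernstein bound with additive slack \(\tau_t\) shows that \(N=O\bigl(r_{\max}\eta^{-2}\tau_t^{-1}\log(1/\delta)\bigr)\) walks give error at most \(\eta(\prr^L_x(t)+\tau_t)\) with probability \(1-\delta\). These walks cost \(O(LN)\) expected queries.

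Finally, balance the two phases by choosing \(r_{\max}=\eta\sqrt{D_t\tau_t}\). Then both costs become \(\tO\bigl((L/\eta)\sqrt{D_t/\tau_t}\log(1/\delta)\bigr)\).

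The main obstacle is the concentration step when \(\prr^L_x(t)\ll\tau_t\). There we must use the variance bound \(\operatorname{Var}\le r_{\max}\prr^L_x(t)\) rather than Hoeffding, so that the additive slack \(\eta\tau_t\) absorbs the deviation. A second, minor point is the self-loop convention in Definition~\ref{def:weighted-reversible-walk-oracle}: the push to \(u\) itself uses \(M(u,u)=W(u,u)/D_u\), consistent with how \(D_u\) is defined. Since the push cost is deterministic, the only expectation in the query bound comes from the geometric walk lengths.
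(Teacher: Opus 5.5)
Your proposal is correct, but it runs the bidirectional scheme in the opposite direction from the paper.

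\textbf{The paper's route.} The paper does a \emph{forward} push from the source \(x\) with a degree-normalized threshold (push while \(r(u)>r_{\max}D_u\)). The push work is \(O(L/r_{\max})\) by the usual mass-decrease argument, with no reversibility needed. Reversibility enters in the \emph{estimator}: the identity \(D_y\prr^L_y(t)=D_t\prr^L_t(y)\) rewrites the residual term as \(D_t\,\E_{Y\sim\prr^L_t}[r(Y)/D_Y]\). This is sampled by geometric walks from the \emph{target} \(t\), with summands bounded by \(D_tr_{\max}\), and the balance is at \(r_{\max}=\eta\sqrt{\tau_t/D_t}\).

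\textbf{Your route.} You do a reverse push from \(t\) with an unnormalized threshold and sample walks from \(x\). Your invariant needs only the last-step identity \(\prr^L_x=\zeta\1_x+\lambda\,\prr^L_xM\), so the estimator is correct for any chain. Reversibility is used only in the cost analysis: it turns the usual average-case reverse-push cost into the worst-case bound \(\sum_uD_u\prr^L_u(t)/(\zeta r_{\max})=D_t/(\zeta r_{\max})\).

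So the two proofs spend reversibility in different places, and both reach the same \(\tO\bigl((L/\eta)\sqrt{D_t/\tau_t}\log(1/\delta)\bigr)\), with Bernstein playing the same role. One practical difference appears in the paper's application. There \(\tau_v/D(v,b)=\Theta_Q(\beta/m)\) does not depend on the target, so a single forward push from the seed \((s,a)\) could in principle serve every sampled endpoint. In your direction the push must be redone for each target \((v,b)\), while the forward walks from \((s,a)\) are the shareable part.

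\textbf{A bookkeeping slip.} With the \(1/\zeta\) in your invariant and the initialization \(r(t)=\zeta\), a push at \(u\) must add \(r(u)\), not \(\zeta r(u)\), to \(p(u)\); as written the invariant breaks at \(x=u\). The version your cost analysis and forward estimator actually use has \(r=\1_t\), \(p=0\), and invariant \(\prr^L_x(t)=p(x)+\sum_v\prr^L_x(v)r(v)\). Under that version your push rule (add \(\zeta r(u)\) to \(p(u)\) and \(\lambda r(u)M(v,u)\) to \(r(v)\)) is correct. The bound \(p(u)\le\prr^L_u(t)\) then gives exactly your estimate on the total mass pushed at \(u\), and after termination the summands \(r(v_i)\) lie in \([0,r_{\max}]\) as you claim.
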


See Appendix~\ref{app:sec:preliminary-proofs} for the proof.

\subsection{Almost-uniform edge sampling in the adjacency-list model}
\label{sec:edge-sampling}

The following theorem of Eden and Rosenbaum~\cite{EdenRosenbaum} lets the
algorithms sample an edge almost uniformly using only the adjacency-list
queries of Definition~\ref{def:adjacency-list-model}.  Thus edge sampling is
not an additional oracle primitive.

\begin{lemma}[pointwise almost-uniform edge sampler]
\label{lem:almost-uniform-edge-sampler}
In the adjacency-list model of
Definition~\ref{def:adjacency-list-model}, assume
\(m>0\) is known up to a constant factor.  For every accuracy parameter
\(\xi\in(0,1)\), there is a randomized algorithm that returns an edge from a
distribution \(\widetilde\mu_E\) satisfying
\[
    \frac{1-\xi}{m}
    \le
    \widetilde\mu_E(e)
    \le
    \frac{1+\xi}{m}
    \qquad\text{for every }e\in E.
\]
Its expected query complexity is
\[
    \tO\!\left(\frac{n}{\sqrt{\xi m}}\right).
\]
\end{lemma}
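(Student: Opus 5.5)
The plan is to reproduce the Eden--Rosenbaum rejection-sampling scheme, working with \emph{oriented} edge copies \((v,i)\), \(1\le i\le\deg(v)\). There are \(2m\) such copies, and each undirected edge (counted with multiplicity) corresponds to exactly two of them. So it suffices to output every oriented copy with probability in \([(1-\xi)/(2m),(1+\xi)/(2m)]\) and then forget the orientation. Let \(\widetilde m\) denote the known estimate of \(m\); we may scale it so that \(m\le\widetilde m\le Cm\) for an absolute constant \(C\). Fix the threshold \(\theta=\lceil\sqrt{2\widetilde m/\xi'}\,\rceil\), where \(\xi'=c\,\xi\) for a small absolute constant \(c\). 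Call a vertex light if \(\deg(v)\le\theta\) and heavy otherwise. Since \(\sum_v\deg(v)=2m\), there are at most \(2m/\theta\) heavy vertices.

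One trial proceeds as follows. Sample a uniform vertex \(u\) and an index \(i\) uniform in \([\theta]\); if \(i>\deg(u)\) or \(u\) is heavy, the trial fails. Otherwise flip a fair coin. On heads, return the oriented copy \((u,i)\). On tails, query the neighbor \(w\) at position \(i\); if \(w\) is light the trial fails, and otherwise return a uniformly random copy \((w,j)\) with \(j\in[\deg(w)]\). Repeat trials until one succeeds.

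For a light copy \((u,i)\), the probability of being returned in a trial is exactly \(1/(2n\theta)\). For a heavy vertex \(w\), let \(\deg_L(w)\) be the number of edge copies from light vertices into \(w\). Then each copy \((w,j)\) is returned with probability \(\deg_L(w)/(2n\theta\deg(w))\). The key estimate is that \(\deg(w)-\deg_L(w)\), the number of copies from heavy vertices into \(w\), is at most the number of heavy vertices counted with multiplicity of parallel edges. Here lies the one subtlety I expect to be the main obstacle: with multigraphs, the naive bound ``at most \(2m/\theta\) heavy neighbors'' fails under multiplicity. I would handle it by bounding instead the total number of heavy–heavy copies incident to \(w\) by \(\deg(w)\cdot(\text{fraction})\). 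If that is delicate, I would fall back on the original argument, which charges heavy–heavy copies only in aggregate and then adjusts the heavy step to obtain pointwise guarantees via the \(\theta\ge\sqrt{2m/\xi'}\) choice. In the simple-graph case the bound reads \(\deg(w)-\deg_L(w)\le 2m/\theta\le\xi'\theta<\xi'\deg(w)\), so every oriented copy has per-trial probability in \([(1-\xi')p,\,p]\) with \(p=1/(2n\theta)\).

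Conditioning on success then gives each oriented copy probability between \((1-\xi')/(2m)\) and \(1/((1-\xi')2m)\). Choosing \(c\) small enough makes both bounds lie within \((1\pm\xi)/(2m)\), and summing the two orientations yields the claimed per-edge bounds. For the cost, each trial uses \(O(1)\) queries (vertex, degree, neighbor, degree, neighbor). The success probability is at least \((1-\xi')\cdot 2m\cdot p=\Omega(m/(n\theta))\), so the expected number of trials is \(O(n\theta/m)=O(n/\sqrt{\xi m})\), using \(\widetilde m=\Theta(m)\). This gives the stated \(\tO(n/\sqrt{\xi m})\) bound.
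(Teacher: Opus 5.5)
Your argument is correct for simple graphs, and there it is exactly the Eden--Rosenbaum light/heavy rejection sampler. The paper gives no proof of its own for this lemma; it simply cites \cite{EdenRosenbaum}, whose guarantee is for simple graphs.

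The genuine gap is the step you flag yourself. Definition~\ref{def:adjacency-list-model} allows parallel edges, and neither of your remedies works there. The pointwise bound in the original argument \emph{is} the simple-graph count $\deg(w)-\deg_L(w)\le\#\{\text{heavy vertices}\}\le 2m/\theta\le\xi'\theta$. There is no aggregate version of this count that yields a pointwise lower bound. Nor can any bound of the form $\deg(w)-\deg_L(w)\le\xi'\deg(w)$ hold, because all $\deg(w)$ copies at a heavy $w$ may go to a single other heavy vertex. Concretely, let $G$ consist of two vertices $v,w$ joined by $M$ parallel edges, with $M$ large enough (say $M\gg 1/\xi$) that both endpoints are heavy. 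Then $\deg_L(v)=\deg_L(w)=0$, every trial of your sampler fails, and the repeat-until-success loop never terminates. If such a pair is added as an extra component to a larger graph, its copies are output with probability $0$.

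This cannot be repaired within the stated query bound, so the lemma as written is false for multigraphs. Place a random pair $\{v,w\}$ among $n$ labelled vertices, join it by $m/2$ parallel edges and nothing else, and put a simple graph with $m/2=\Theta(n^2)$ edges on the remaining vertices. Pointwise closeness forces the output to be a $v$--$w$ copy with probability at least $(1-\xi)/2$. But $v$ and $w$ are adjacent only to each other, so under a random relabelling each query reveals one of them with probability $O(1/n)$. Hence $\Omega(n)$ expected queries are needed, whereas the lemma promises $\tO(n/\sqrt{\xi m})=\tO(1/\sqrt{\xi})$.

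So what your proof, and the cited theorem, actually establishes is the simple-graph statement. If multigraphs must be kept, your analysis does go through when every edge multiplicity is at most $\mu$. Then $\deg(w)-\deg_L(w)\le 2\mu m/\theta$, and choosing $\theta=\lceil\sqrt{2\mu\widetilde m/\xi'}\,\rceil$ gives the same pointwise bounds at expected cost $O(n\sqrt{\mu/(\xi m)})$.
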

\section{Tolerant Testing for Unique Games}
\label{sec:tolerant-unique-games}

This section proves the main Unique Games result, Theorem~\ref{thm:ug-tolerant}.
It applies to arbitrary permutation constraints over a fixed alphabet.  The
subsections follow the proof in order.  Subsection~\ref{subsec:ug-setup}
sets up the Unique Games oracle, the walk statistic, and the completeness
bound.  Subsection~\ref{subsec:ug-trace-peeling} develops the trace-chain and
peeling tools used for soundness.  Subsection~\ref{subsec:ug-geometric-ambiguity}
converts trace failures into geometric ambiguity.
Subsection~\ref{subsec:ug-estimation-sampling} gives the ambiguity estimation
and seed sampling routines needed in the adjacency-list model.  Finally,
Subsection~\ref{subsec:ug-tester} assembles these ingredients into the tolerant
tester and proves its guarantees.  The gap dependence is weaker than that of
the improved bipartiteness theorem proved later in Theorem~\ref{tol:thm:main},
but the scope is substantially broader.

Throughout this section the alphabet size \(Q\) is fixed, and \(S_Q\) denotes
the symmetric group on \([Q]\).  All constants hidden in
\(O_Q(\cdot)\), \(\tOQ(\cdot)\), and \(c_Q,C_Q\) may depend on
\(Q\), but not on \(n,m,\varepsilon,\rho\).

\subsection{Setup and completeness}
\label{subsec:ug-setup}

This subsection fixes the Unique Games oracle and the ambiguity statistic for
geometrically stopped walks.  It ends with the one-sided
completeness estimate that, in a nearly satisfiable instance, ambiguity can only be
created by paths that cross one of the few violated constraints.

\begin{definition}[Unique Games constraint oracle]
\label{def:ug-general-graph-oracle}
A Unique Games instance is accessed through the adjacency-list model of
Definition~\ref{def:adjacency-list-model}, augmented with edge
constraints.  A
constrained neighbor query \((u,i)\), where \(1\le i\le\deg(u)\), returns the
\(i\)-th neighbor \(v\) of \(u\) together with the oriented constraint
\(\pi_{uv}\in S_Q\).  One constrained neighbor query is counted as one oracle
query.

Equivalently, one may use the usual neighbor query together with a separate
constraint query for the same adjacency-list entry; since \(Q\) is fixed, this
changes all query bounds only by a \(Q\)-dependent constant factor.  Reading,
storing, composing, and evaluating permutations in \(S_Q\) are counted as
\(O_Q(1)\) local computation.

We assume the input is a well-formed Unique Games instance: for every
undirected edge copy \(\{u,v\}\), the two oriented constraints satisfy
\(\pi_{vu}=\pi_{uv}^{-1}\).  This
consistency is part of the input promise, not a property tested by the
algorithms below.

Each edge copy has its own constraint, and all uses of \(E\), \(m\), degrees,
volumes, boundary sizes, and violated-edge counts below are with multiplicity.
We keep the notation \(\pi_{uv}\) for the constraint on the edge copy traversed
by the corresponding adjacency-list entry.
\end{definition}

By the conventions following
Definition~\ref{def:adjacency-list-model}, the nontrivial analysis has
\(m>0\), every vertex of \(G=(V,E)\) has positive degree, and query bounds are
still stated in terms of the original input size \(n\).  We write the Unique
Games instance as
\[
    \mathcal U=(G,\{\pi_{uv}\}_{uv\in E}),
\]
where for every oriented edge copy \(uv\) we are given a permutation
\[
    \pi_{uv}\in S_Q,
    \qquad
    \pi_{vu}=\pi_{uv}^{-1}.
\]
We use the following convention for lazy walk paths.  An artificial lazy
holding step at a vertex \(v\) carries the identity permutation, i.e. we set
\(\pi_{vv}=\mathrm{id}\) whenever such a lazy step \(v\to v\) is traversed.
The empty path also has identity permutation.
A labeling is a map
\[
    x:V\to[Q].
\]
An edge \(uv\) is satisfied by \(x\) if
\[
    x(v)=\pi_{uv}(x(u)).
\]
Let
\[
    \operatorname{OPT}(\mathcal U)
    =
    \max_{x:V\to[Q]}
    \frac{|\{uv\in E:x(v)=\pi_{uv}(x(u))\}|}{m},
\]
and define
\[
    \tau_{\rm UG}(\mathcal U)
    =
    1-\operatorname{OPT}(\mathcal U).
\]

Throughout the algorithmic statements in this section we use the augmented
constraint oracle of Definition~\ref{def:ug-general-graph-oracle}.

For a path
\[
    P=u_0u_1\cdots u_t,
\]
define its label transport by
\[
    \Pi_P
    =
    \pi_{u_{t-1}u_t}\cdots \pi_{u_0u_1}.
\]
For a lazy holding step \(u_i=u_{i+1}\), the corresponding factor in the
label transport is the identity permutation.  Thus the product defining
\(\Pi_P\) is understood after inserting identity factors for all holding
steps.  We also set \(\Pi_P=\mathrm{id}\) when \(t=0\).
Thus, if the label at \(u_0\) is \(a\), then the label predicted at \(u_t\)
along \(P\) is \(\Pi_P(a)\).

The underlying random walk is the lazy walk on \(G\), with
stationary distribution
\[
    \pi(v)=\frac{\deg(v)}{2m}.
\]
For a scale \(L\ge 1\), let \(T_L\) be the independent geometric stopping time
with
\[
    \Pr[T_L=t]
    =
    \frac{1}{L+1}\left(\frac{L}{L+1}\right)^t,
    \qquad t=0,1,2,\ldots.
\]

For a seed \(s\), a seed label \(a\in[Q]\), and an endpoint label \(b\in[Q]\),
define
\[
    p^{a,b}_{L,s}(v)
    =
    \Pr[X_{T_L}=v,\ \Pi_P(a)=b\mid X_0=s].
\]
Equivalently, let \(P_L^{s,a}\) denote the distribution on endpoint-label pairs
in \(V\times[Q]\) with
\[
    P_L^{s,a}(v,b)=p^{a,b}_{L,s}(v).
\]
The base endpoint distribution is
\[
    u_{L,s}(v)
    =
    \sum_{b=1}^Q p^{a,b}_{L,s}(v),
\]
which is independent of \(a\).

The statistic used by the tester asks whether these predicted labels are
nearly consistent endpoint by endpoint.  For a fixed endpoint \(v\), if almost
all stopped paths ending at \(v\) predict the same label, then \(v\) contributes
little.  The contribution is the remaining mass after keeping the most likely
predicted label at \(v\).

Define the ambiguity of seed label \(a\) by
\[
    \operatorname{Amb}_L(s,a)
    =
    \sum_{v\in V}
    \left(
        u_{L,s}(v)
        -
        \max_{b\in[Q]}p^{a,b}_{L,s}(v)
    \right).
\]
The single-seed Unique Games ambiguity statistic is
\[
    \mu^{\rm UG}_L(s)
    =
    \min_{a\in[Q]}\operatorname{Amb}_L(s,a).
\]

\subsubsection{Completeness}

Completeness follows by looking at individual paths.  Once a nearly satisfying
labeling is fixed, every path that avoids its violated edges predicts the
labeling's endpoint label.

\begin{lemma}[Weighted completeness for Unique Games ambiguity]
\label{lem:ug-completeness}
If
\[
    \tau_{\rm UG}(\mathcal U)\le \varepsilon,
\]
then for every \(L\ge 1\),
\[
    \mathbb E_{s\sim\pi}\mu^{\rm UG}_L(s)
    \le
    C_Q\varepsilon L.
\]
Consequently, for every \(\alpha>0\),
\[
    \Pr_{s\sim\pi}
    [
        \mu^{\rm UG}_L(s)\ge \alpha
    ]
    \le
    \frac{C_Q\varepsilon L}{\alpha}.
\]
\end{lemma}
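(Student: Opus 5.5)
Fix an optimal labeling \(x:V\to[Q]\) violating a set \(F\subseteq E\) of edge copies with \(|F|\le\varepsilon m\). For each seed \(s\), bound \(\mu^{\rm UG}_L(s)\le \operatorname{Amb}_L(s,x(s))\). Then bound the ambiguity of the specific labeling-induced choice \(b=x(v)\) at each endpoint, using
\[
    u_{L,s}(v)-\max_b p^{x(s),b}_{L,s}(v)\le u_{L,s}(v)-p^{x(s),x(v)}_{L,s}(v)=\Pr[X_{T_L}=v,\ \Pi_P(x(s))\ne x(v)\mid X_0=s].
\]
Summing over \(v\) gives
\[
    \mu^{\rm UG}_L(s)\le\Pr[\Pi_P(x(s))\neq x(X_{T_L})\mid X_0=s].
\]

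The pathwise step is an induction along the prefix \(P=(X_0,\dots,X_{T_L})\). If no step of \(P\) traverses an edge copy in \(F\), then \(\Pi_P(x(s))=x(X_{T_L})\). Holding steps carry the identity and preserve the label, and a traversal of a satisfied copy \(uv\) maps \(x(u)\) to \(\pi_{uv}(x(u))=x(v)\). Hence the event that \(\Pi_P(x(s))\neq x(X_{T_L})\) is contained in the event that some step \(i<T_L\) traverses a copy in \(F\). A union bound then gives
\[
    \mu^{\rm UG}_L(s)\le\sum_{i\ge0}\Pr[T_L>i,\ \text{step } i \text{ traverses } F\mid X_0=s].
\]

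Now average over \(s\sim\pi\). Because \(T_L\) is independent of the walk and \(\pi\) is stationary for the lazy walk, \(X_i\sim\pi\) for every \(i\). The probability that a single lazy step from stationarity traverses a given edge copy in either direction is \(2\cdot\frac{\deg(u)}{2m}\cdot\frac12\cdot\frac1{\deg(u)}=\frac1{2m}\). Summing over \(F\), the per-step probability is at most \(|F|/(2m)\le\varepsilon/2\). Since \(\sum_i\Pr[T_L>i]=\mathbb E T_L=L\), we obtain
\[
    \mathbb E_{s\sim\pi}\mu^{\rm UG}_L(s)\le\tfrac12\varepsilon L.
\]
So one may take \(C_Q=1/2\), or any larger constant. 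The tail bound follows from Markov's inequality, since \(\mu^{\rm UG}_L\ge0\).

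There is no serious obstacle. The only points needing care are bookkeeping. First, parallel edge copies must be counted with multiplicity, so that \(F\) is a set of copies and each copy has traversal probability \(1/(2m)\). Second, \(\mu^{\rm UG}_L(s)\) is a minimum over seed labels, so it suffices to exhibit the single label \(a=x(s)\). Third, we need the independence of \(T_L\) from the walk, which justifies swapping the sum over steps with the stopping-time tail.
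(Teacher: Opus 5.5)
Your proposal is correct and follows the paper's proof. You fix a labeling with at most \(\varepsilon m\) violated edges and take the seed label \(a=x(s)\), so that paths avoiding violated edges predict \(x(v)\) and all ambiguity is charged to paths that cross a violated edge; you then bound the expected number of such crossings before \(T_L\), starting from stationarity, using \(\mathbb E T_L=L\), and finish with Markov's inequality. Your explicit bookkeeping, with per-step traversal probability \(|F|/(2m)\) and \(\sum_i\Pr[T_L>i]=L\), simply makes the paper's \(O(\varepsilon L)\) concrete and gives \(C_Q=1/2\).
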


\begin{proof}
Let \(x:V\to[Q]\) be a labeling violating at most \(\varepsilon m\) edges.
For a seed \(s\), choose the seed label \(a=x(s)\).

If a path \(P:s\leadsto v\) does not traverse a violated edge of \(x\), then
\[
    \Pi_P(x(s))=x(v).
\]
Thus all paths ending at a fixed vertex \(v\) and avoiding the violated edges
of \(x\) predict the same endpoint label.  Hence the ambiguity for \(a=x(s)\)
is supported only on paths that hit at least one violated edge.

Starting from stationarity, the probability that one lazy step traverses a
violated edge is \(O(\varepsilon)\).  Since \(\mathbb E T_L=L\), the expected
number of violated-edge traversals before time \(T_L\) is \(O(\varepsilon L)\).
Therefore
\[
    \mathbb E_{s\sim\pi}\mu^{\rm UG}_L(s)
    \le
    C_Q\varepsilon L.
\]
The tail bound follows from Markov's inequality.
\end{proof}

\subsection{Trace chains and peeling}
\label{subsec:ug-trace-peeling}

The soundness proof searches for a residual set \(R\) where the labels carried
by trace paths mix quickly in trace time.  This subsection sets up the trace
chain and shows that if the trace chain does not have large failure
probability, then one can peel off a set that is already almost consistently
labeled.

\subsubsection{Trace chain basics}

We first define the return chain on a residual set and record the reversibility
facts used later for sweep and averaging arguments.

Let \(R\subseteq V\) be nonempty.  For a lazy walk started in \(R\), define
\[
    \tau_R=\min\{t\ge 1:X_t\in R\}.
\]
A trace step from \(u\in R\) to \(v=X_{\tau_R}\) carries the permutation
label transport of the corresponding return path.

For \(\Pi\in S_Q\), define
\[
    K_R^\Pi(u,v)
    =
    \Pr[X_{\tau_R}=v,\ \Pi_{P_{\tau_R}}=\Pi\mid X_0=u].
\]
The ordinary trace kernel is
\[
    K_R(u,v)=\sum_{\Pi\in S_Q}K_R^\Pi(u,v).
\]

The graph may be disconnected, so the statements below are read separately on
each connected component.

\begin{lemma}[Basic properties of the UG trace chain]
\label{lem:ug-trace-chain-basic}
For every nonempty \(R\subseteq V\), the trace chain is a Markov chain on
\(R\).  If \(C\) is a connected component of \(G\), then starts in
\(R\cap C\) remain in \(R\cap C\).  Moreover, with
\[
    \pi_R(u)=\frac{\pi(u)}{\pi(R)},
\]
the refined reversibility identity
\[
    \pi_R(u)K_R^\Pi(u,v)
    =
    \pi_R(v)K_R^{\Pi^{-1}}(v,u)
\]
holds for all \(u,v\in R\) and \(\Pi\in S_Q\).  Consequently the ordinary
kernel \(K_R\) is reversible with respect to \(\pi_R\).  Finally,
\[
    K_R^{\mathrm{id}}(u,u)\ge \frac12
\]
for every \(u\in R\), because the first lazy holding step immediately returns
to \(R\) and carries identity label transport.
\end{lemma}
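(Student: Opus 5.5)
The plan is to prove the four claims in order by decomposing trace steps into return paths of the lazy walk. The key object is the set \(\mathcal P_R(u,v)\) of finite lazy-walk paths \(P=u_0u_1\cdots u_t\) with \(u_0=u\), \(u_t=v\), \(t\ge1\), \(u_1,\dots,u_{t-1}\notin R\). Each path is weighted by the product of one-step lazy transition probabilities \(w(P)=\prod_i M(u_i,u_{i+1})\), where \(M(x,x)=1/2\) and \(M(x,y)=(\text{number of edge copies } xy)/(2\deg x)\). If we also record which edge copy each step uses, we can work with edge-copy paths, so that each step has its own constraint. Then
\[
K_R^\Pi(u,v)=\sum_{P\in\mathcal P_R(u,v),\ \Pi_P=\Pi} w(P).
\]

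\emph{Markov property and components.} The trace chain is Markov because the lazy walk is Markov and \(\tau_R\) is a stopping time. The strong Markov property at the successive return times makes the future trace steps, including their label transports, depend only on the current state in \(R\). Each step of the lazy walk stays inside the connected component of \(G\) containing it, so trace steps from \(R\cap C\) stay in \(R\cap C\). We also need \(\tau_R<\infty\) almost surely. This holds because the walk restricted to the finite component \(C\) is irreducible and recurrent, and \(R\cap C\ne\emptyset\). Hence \(\sum_v K_R(u,v)=1\), and \(K_R\) is a genuine stochastic kernel.

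\emph{Refined reversibility.} This is the main step. For a single edge-copy step we have \(\pi(x)M(x,y)=\pi(y)M(y,x)\), which equals \(1/(4m)\) per edge copy and \(\pi(x)/2\) for a holding step. Telescoping along the path gives \(\pi(u)w(P)=\pi(v)w(P^{\rm rev})\). The reversed path \(P^{\rm rev}\) lies in \(\mathcal P_R(v,u)\), since it has the same interior vertices, and reversal is a bijection between \(\mathcal P_R(u,v)\) and \(\mathcal P_R(v,u)\). Using \(\pi_{yx}=\pi_{xy}^{-1}\) and identity factors on holding steps, the transport satisfies
\[
\Pi_{P^{\rm rev}}=\pi_{u_1u_0}\cdots\pi_{u_tu_{t-1}}=\Pi_P^{-1}.
\]
Summing over paths with \(\Pi_P=\Pi\) and dividing by \(\pi(R)\) gives
\[
\pi_R(u)K_R^\Pi(u,v)=\pi_R(v)K_R^{\Pi^{-1}}(v,u).
\]
Summing over \(\Pi\in S_Q\), and noting that \(\Pi\mapsto\Pi^{-1}\) is a bijection of \(S_Q\), yields reversibility of \(K_R\) with respect to \(\pi_R\). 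Convergence of the sums is not an issue, since every term is nonnegative and bounded by a probability.

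\emph{Holding bound.} The one-step path \(u\to u\) via the lazy hold lies in \(\mathcal P_R(u,u)\), has weight \(1/2\), and has identity transport. Because input self-loops are excluded, no other one-step path from \(u\) to \(u\) exists. Every other contribution to \(K_R^{\mathrm{id}}(u,u)\) is nonnegative, so \(K_R^{\mathrm{id}}(u,u)\ge 1/2\).

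The only point needing care is the bookkeeping of parallel edge copies. Since each copy carries its own constraint, the path decomposition has to be done at the level of edge copies rather than vertex sequences. Done this way, both the weight symmetry and the inversion of the transport hold copy by copy.
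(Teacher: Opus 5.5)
Your proposal is correct and follows essentially the same route as the paper's proof. Both arguments decompose a trace step into lazy-walk return paths, pair each path with its reversal along the same edge copies (equal stationary weight, inverted label transport) to get the refined reversibility, use almost-sure return within the finite component for stochasticity, and use the identity lazy holding step for \(K_R^{\mathrm{id}}(u,u)\ge 1/2\). You additionally spell out the strong Markov property, the telescoping detailed balance, and the edge-copy bookkeeping.
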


\begin{proof}
Starting from \(u\in R\), the walk never leaves the connected component \(C\)
of \(u\).  The finite irreducible lazy walk on \(C\) hits \(R\cap C\) again
with probability one, so the row sums of \(K_R\) are one and different
components do not communicate.

A trace transition from \(u\) to \(v\) is represented by a lazy-walk path from
\(u\) to \(v\) whose internal vertices avoid \(R\).  Reversing the same edge
copies gives a valid trace path from \(v\) to \(u\).  The reversed label
transport permutation is \(\Pi^{-1}\), and reversibility of the underlying lazy
walk gives equal stationary path weights.  Summing over all such paths proves
the refined reversibility identity.  Summing over \(\Pi\) gives ordinary
reversibility.

The final claim is the artificial lazy holding step at \(u\), which occurs
with probability \(1/2\), returns at time \(1\), and has the identity
label-transport permutation.
\end{proof}

\subsubsection{Peeling when one trace step rarely fails}

This part defines, for a partial labeling on a residual set, the probability
that one trace step fails.  A small value of this quantity will be converted
into a peelable piece of the original graph, which is the progress measure in
the later residual argument.

For \(S\subseteq R\) and a partial labeling \(\ell:S\to[Q]\), define
\[
\beta_R(S,\ell)
=
\frac{1}{\pi_R(S)}
\sum_{u\in S}\pi_R(u)
\left[
\sum_{\substack{v\notin S\\ \Pi\in S_Q}}K_R^\Pi(u,v)
+
\sum_{\substack{v\in S\\ \Pi\in S_Q}}
K_R^\Pi(u,v)
\mathbf 1[\ell(v)\neq \Pi(\ell(u))]
\right].
\]
Let
\[
    \beta_{\rm UG}(K_R)
    =
    \min_{\emptyset\neq S\subseteq R,\ \ell:S\to[Q]}
    \beta_R(S,\ell).
\]
Thus \(\beta_R(S,\ell)\) is the failure probability of the partial labeling
\(\ell\) under one step of the trace chain, when the starting point is drawn from
\(\pi_R\) conditioned on \(S\).  A failure occurs either because the traced
walk exits \(S\), or because it returns to \(S\) with a label that, after label
transport along the trace step, disagrees with \(\ell\).  Consequently, a small
value of \(\beta_R(S,\ell)\) means that \(S\) is almost closed under trace
transitions and that \(\ell\) is almost consistent with the Unique Games
constraints seen through label transport along those transitions.

\begin{definition}[UG-peelability]
A nonempty set \(S\subseteq R\) with a partial labeling \(\ell:S\to[Q]\) is
\(\eta\)-UG-peelable in \(R\) if
\[
    |\{uv\in E(G[R]):u,v\in S,\ \ell(v)\neq \pi_{uv}(\ell(u))\}|
    +
    |E(S,R\setminus S)|
    \le
    \eta\,\operatorname{vol}(S).
\]
\end{definition}

\begin{lemma}[Low failure probability for one trace step implies UG-peelability]
\label{lem:ug-tol-low-trace-peel}
If
\[
    \beta_{\rm UG}(K_R)<\eta/4,
\]
then there exists an \(\eta\)-UG-peelable set in \(R\).
\end{lemma}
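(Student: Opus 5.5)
Take a minimizer \((S,\ell)\) with \(\beta_R(S,\ell)<\eta/4\), and show that this same pair is \(\eta\)-UG-peelable. The idea is to lower bound the trace failure probability using only the single-step excursions that move along an edge of \(G[R]\). From \(u\in R\), the lazy walk moves at time \(1\) along a fixed edge copy \(uv\) with probability \(1/(2\deg(u))\). If \(v\in R\), then \(\tau_R=1\), and the trace step ends at \(v\) with transport \(\Pi=\pi_{uv}\). These one-step events are disjoint for distinct edge copies, and they are disjoint from longer excursions. Hence, summing over edge copies \(uv\) with \(v\in R\),
\[
K_R^{\Pi}(u,v)\ \ge\ \frac{\#\{\text{edge copies }uv\text{ with }\pi_{uv}=\Pi\}}{2\deg(u)} .
\]

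Next, plug this into the definition of \(\beta_R(S,\ell)\). The weight is \(\pi_R(u)/\pi_R(S)=\deg(u)/\operatorname{vol}(S)\), and it cancels the \(\deg(u)\) in the denominator. This gives
\[
\beta_R(S,\ell)\ \ge\ \frac{1}{2\operatorname{vol}(S)}\Bigl(|E(S,R\setminus S)| + \sum_{u\in S}\#\{\text{copies }uv,\ v\in S,\ \ell(v)\ne\pi_{uv}(\ell(u))\}\Bigr).
\]
The first term counts each boundary edge copy once, oriented from its endpoint in \(S\); such a walk exits \(S\) into \(R\setminus S\) and so counts as a failure. The second term counts each violated internal edge copy twice, once from each endpoint. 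This uses well-formedness \(\pi_{vu}=\pi_{uv}^{-1}\), which makes violation symmetric: \(\ell(v)\ne\pi_{uv}(\ell(u))\) if and only if \(\ell(u)\ne\pi_{vu}(\ell(v))\).

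Therefore
\[
\beta_R(S,\ell)\ \ge\ \frac{|E(S,R\setminus S)| + 2\,|\text{viol}|}{2\operatorname{vol}(S)}\ \ge\ \frac{|E(S,R\setminus S)|+|\text{viol}|}{2\operatorname{vol}(S)},
\]
where \(|\text{viol}|\) is the number of violated internal edge copies. Combining with \(\beta_R(S,\ell)<\eta/4\) gives
\[
|\text{viol}|+|E(S,R\setminus S)|\ <\ \tfrac{\eta}{2}\operatorname{vol}(S)\ \le\ \eta\operatorname{vol}(S),
\]
so \(S\) with \(\ell\) is \(\eta\)-UG-peelable.

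The only delicate point is the bookkeeping, not any analytic estimate. The volume and degrees in the peelability definition are taken in \(G\), not \(G[R]\), and this is exactly what the weighting \(\pi_R\propto\deg\) delivers. One must also check that multi-edges and self-consistency are counted with multiplicity, and that excursions leaving \(R\) are only discarded, which is valid because every term involved is nonnegative. The resulting constant is in fact \(\eta/2\), which leaves slack relative to the stated \(\eta/4\) hypothesis.
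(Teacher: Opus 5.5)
Your proof is correct and is essentially the paper's argument. You lower-bound \(\beta_R(S,\ell)\) using only the one-step excursions along edges of \(G[R]\), and the weight \(\deg(u)/\operatorname{vol}(S)\) cancels the \(1/(2\deg(u))\) step probability, so each violated internal edge counts twice and each boundary edge once, giving \(2b_{\rm int}+b_\partial<\tfrac{\eta}{2}\operatorname{vol}(S)\). The only difference is that you spell out bookkeeping the paper leaves implicit: disjointness of the one-step events and the use of \(\pi_{vu}=\pi_{uv}^{-1}\) to make violation orientation-independent.
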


\begin{proof}
Choose \(S\subseteq R\) and \(\ell:S\to[Q]\) such that
\[
    \beta_R(S,\ell)<\eta/4.
\]
Let
\[
    b_{\rm int}
    =
    |\{uv\in E(G[R]):u,v\in S,\ \ell(v)\neq \pi_{uv}(\ell(u))\}|
\]
and
\[
    b_\partial=|E(S,R\setminus S)|.
\]
Each violated internal edge contributes two oriented one-step trace transitions
that violate the trace constraint.  Each boundary edge contributes at least
one oriented one-step trace transition leaving \(S\).  After normalizing by
\[
    \pi_R(S)=\frac{\operatorname{vol}(S)}{\operatorname{vol}(R)},
\]
we obtain
\[
    \frac{b_{\rm int}}{\operatorname{vol}(S)}
    +
    \frac{b_\partial}{2\operatorname{vol}(S)}
    \le
    \beta_R(S,\ell)
    <
    \eta/4.
\]
Thus
\[
    b_{\rm int}+b_\partial
    \le
    2b_{\rm int}+b_\partial
    <
    \frac{\eta}{2}\operatorname{vol}(S)
    <
    \eta\,\operatorname{vol}(S).
\]
Hence \(S\) is \(\eta\)-UG-peelable.
\end{proof}

\begin{lemma}[Nonuniform UG peeling]
\label{lem:ug-nonuniform-peeling}
Let \(R_0=V\).  Suppose an iterative process removes pairwise disjoint sets
\(S_i\subseteq R_{i-1}\), where \(S_i\) is \(\eta_i\)-UG-peelable in
\(R_{i-1}\).  Let \(R_*\) be the final residual.  Then
\[
    \tau_{\rm UG}(\mathcal U)
    \le
    2\sum_i \eta_i\pi(S_i)+\pi(R_*).
\]
\end{lemma}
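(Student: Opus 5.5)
We build one global labeling explicitly and count the constraints it violates. On each peeled piece \(S_i\) we use the partial labeling \(\ell_i\) that witnesses its \(\eta_i\)-UG-peelability in \(R_{i-1}\). On the final residual \(R_*\) we assign labels arbitrarily, say label \(1\). The sets \(S_1,S_2,\dots\) are pairwise disjoint and \(R_*\) is what remains, so together they partition \(V\). The combined map \(x:V\to[Q]\) is therefore well defined, and \(\tau_{\rm UG}(\mathcal U)\) is at most the fraction of edges that \(x\) violates.

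Next we charge every violated edge \(uv\). Since \(R_0=V\supseteq R_1\supseteq\cdots\) and \(R_i=R_{i-1}\setminus S_i\), every edge falls into exactly one of three cases.
\begin{itemize}
\item[(i)] Both endpoints lie in \(R_*\). We simply count all such edges.
\item[(ii)] Both endpoints lie in the same piece \(S_i\). Then \(uv\in E(G[R_{i-1}])\) and \(x\) agrees with \(\ell_i\) on both endpoints, so \(uv\) is one of the violated internal edges counted in the peelability inequality for \(S_i\).
\item[(iii)] The endpoints lie in different parts. Let \(i\) be the smallest index with an endpoint in \(S_i\). Both endpoints belong to \(R_{i-1}\), since neither was removed earlier, and the other endpoint lies in \(R_{i-1}\setminus S_i\). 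So \(uv\in E(S_i,R_{i-1}\setminus S_i)\), and we charge it to step \(i\).
\end{itemize}
Each edge is charged at most once. By peelability, the edges charged to step \(i\) number at most \(\eta_i\operatorname{vol}(S_i)\).

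The edges with both endpoints in \(R_*\) number at most \(\tfrac12\operatorname{vol}(R_*)\le\operatorname{vol}(R_*)\). Dividing the total count by \(m\) and using \(\operatorname{vol}(S)=2m\,\pi(S)\) gives
\[
\tau_{\rm UG}(\mathcal U)\le \frac{\sum_i\eta_i\operatorname{vol}(S_i)+\tfrac12\operatorname{vol}(R_*)}{m}=2\sum_i\eta_i\pi(S_i)+\pi(R_*).
\]

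The only step that needs care is (iii). Each crossing edge must be charged to a boundary set \(E(S_i,R_{i-1}\setminus S_i)\) measured within the residual at the moment of removal, not within the whole graph. The "first index" rule handles this, because both endpoints are still in \(R_{i-1}\) at step \(i\). The remaining points are bookkeeping: multiplicity of edge copies is handled uniformly, and the volume-to-\(\pi\) normalization produces exactly the factor \(2\).
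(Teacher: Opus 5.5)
Your proposal is correct and follows the paper's proof: you build the same global labeling from the peelability witnesses and charge violated internal edges to their own piece, crossing edges to the boundary term of the earlier-peeled piece, and edges inside \(R_*\) pessimistically via \(\operatorname{vol}(R_*)/2\). The normalization \(\operatorname{vol}(S)=2m\,\pi(S)\) is also the same, and your first-index rule simply makes explicit the paper's remark that each crossing edge is charged once to the earlier piece, within the residual \(R_{i-1}\).
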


\begin{proof}
For each peeled set \(S_i\), fix a partial labeling witnessing
\(\eta_i\)-UG-peelability.  Label every vertex in \(S_i\) accordingly, and
label the final residual \(R_*\) arbitrarily.

An edge internal to a peeled piece is violated only if it is counted in that
piece's internal violation term.  An edge between two different peeled pieces,
or between a peeled piece and \(R_*\), is charged once as a boundary edge of
the earlier peeled piece.  Finally, every edge inside \(R_*\) may be charged
pessimistically.

Thus the number of violated edges is at most
\[
    \sum_i \eta_i\operatorname{vol}(S_i)
    +
    \frac{\operatorname{vol}(R_*)}{2}.
\]
Dividing by \(m\) and using \(\pi(S)=\operatorname{vol}(S)/(2m)\) gives the
claim.
\end{proof}

\subsection{From trace failures to geometric ambiguity}
\label{subsec:ug-geometric-ambiguity}

This subsection turns a large lower bound on the probability that a trace step
fails into ambiguity that the tester can estimate.  First, a majority-sweep
argument shows that if ambiguity stayed small at all trace times \(0,\ldots,k\),
then the residual would contain a partial labeling for which a trace step rarely
fails.  Thus a large value of \(\beta_{\rm UG}(K_R)\) forces ambiguity at some
trace time.  The final lemma uses the fact that each trace prefix is an initial
segment of the original lazy walk, and relates it to a geometrically stopped
walk.

\subsubsection{Majority sweep over trace times}

Here we ask what would happen if many starts had low ambiguity for several
trace times.  The majority sweep lemma shows that this would produce a
low-failure partial labeling, contradicting a large value of
\(\beta_{\rm UG}(K_R)\).

Fix a nonempty residual set \(R\).  Throughout this subsection all trace
transitions are oriented transitions \(u\to v\) in \(R\), and all sums over
\(\Pi\) range over \(S_Q\).  We use the label lift \(R\times[Q]\), whose second
coordinate records the label carried by the path.  For a start-label pair
\((s,a)\in R\times[Q]\), let
\[
    p_t^{s,a}(v,b)
    =
    \Pr[Y_t=v,\ \Pi_t(a)=b\mid Y_0=s]
\]
be the trace distribution on the label lift after \(t\) trace steps, where
\(Y_t\) is the trace endpoint and \(\Pi_t\) is the accumulated trace label
transport.
Its base marginal is
\[
    u_t^s(v)=\sum_{b=1}^Q p_t^{s,a}(v,b).
\]
Thus \(u_t^s\) is the ordinary trace-chain distribution, and
\(w_t^s(v)=u_t^s(v)/\pi_R(v)\) is its density with respect to the stationary
measure \(\pi_R\).  The trace kernel on the label lift is
\[
    \widehat K_R((u,b),(v,c))
    =
    \sum_{\Pi:\Pi(b)=c}K_R^\Pi(u,v).
\]
For an exact trace time \(t\), define ambiguity by
\[
    \operatorname{Amb}_t(s,a)
    =
    \sum_{v\in R}
    \left(
        u_t^s(v)
        -
        \max_{b\in[Q]}p_t^{s,a}(v,b)
    \right).
\]

\begin{lemma}[Majority sweep over trace times for Unique Games]
\label{lem:ug-trace-time-majority-sweep}
There are constants \(C_Q,c_Q>0\) such that the following holds.  Let
\(B\subseteq R\) satisfy
\[
    \pi_R(B)=\delta>0.
\]
Suppose that for every \(s\in B\) there exists a label \(a_s\in[Q]\) such that
\[
    \operatorname{Amb}_t(s,a_s)\le \omega
    \qquad
    \text{for every }0\le t\le k.
\]
If
\[
    k\ge C_Q\theta^{-2}\log(2n/\delta),
    \qquad
    \omega\le c_Q\theta,
\]
then there exist \(S\subseteq R\) and \(\ell:S\to[Q]\) such that
\[
    \beta_R(S,\ell)<\theta.
\]
\end{lemma}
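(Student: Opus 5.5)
We will in fact use only a single seed \(s\in B\) and its label \(a=a_s\); the set \(B\) matters only through \(\pi_R(s)\). Since \(\pi_R\) is supported on at most \(n\) vertices and \(\pi_R(B)=\delta\), we can pick \(s\in B\) with \(\pi_R(s)\ge\delta/n\), so \(\log(1/\pi_R(s))\le\log(2n/\delta)\). Apply Lemma~\ref{lem:heat-kernel-gradient} to \(K=K_R\) with \(\nu=\pi_R\); this is legitimate because \(K_R\) is reversible and lazy by Lemma~\ref{lem:ug-trace-chain-basic}. It gives a time \(0\le t<k\) with
\[
\Gamma_t:=\sum_{u,v}\pi_R(u)K_R(u,v)|w_t(u)-w_t(v)|\le C\sqrt{\log(2n/\delta)/k}\le \theta/100
\]
once \(C_Q\) is large. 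At this \(t\), both \(\operatorname{Amb}_t(s,a)\le\omega\) and \(\operatorname{Amb}_{t+1}(s,a)\le\omega\) hold. Define the majority labeling \(x(v)\in\arg\max_b p_t(v,b)\) with \(p_t=p_t^{s,a}\), and the non-majority mass \(e_t(v)=u_t(v)-p_t(v,x(v))\), so that \(\sum_v e_t(v)\le\omega\).

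The candidate sets are the sweep sets \(S_\tau=\{v:w_t(v)>\tau\}\) with \(\ell=x|_{S_\tau}\). The boundary part is standard: \(\int_0^\infty \sum_{u\in S_\tau}\pi_R(u)K_R(u,R\setminus S_\tau)\,d\tau=\Gamma_t/2\) and \(\int_0^\infty\pi_R(S_\tau)\,d\tau=\sum_v u_t(v)=1\). For the violation part, let
\[
V(u,v)=\sum_{\Pi:\,x(v)\ne\Pi(x(u))}K_R^\Pi(u,v).
\]
Then \(\int\sum_{u,v\in S_\tau}\pi_R(u)V(u,v)\,d\tau=\sum_{u,v}\pi_R(u)V(u,v)\min(w_t(u),w_t(v))\). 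This is at most \(\sum_{u,v}u_t(u)V(u,v)\), i.e.\ the mass of \((u,x(u))\)-states at time \(t\) whose one-step move lands at \((v,c)\) with \(c\ne x(v)\).

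Splitting \(p_{t+1}=p_t\widehat K_R\), this mass is at most \(\sum_u e_t(u)\) plus the time-\((t+1)\) mass off the graph of \(x\). The latter is at most \(\omega\) plus the mass at vertices \(v\) where the time-\((t+1)\) majority label \(x_{t+1}(v)\) differs from \(x(v)\). Controlling this last term is the key step, and it uses the holding transition. By Lemma~\ref{lem:ug-trace-chain-basic}, \(p_{t+1}(v,x(v))\ge\frac12 p_t(v,x(v))\). If \(x_{t+1}(v)\ne x(v)\), then \(p_{t+1}(v,x(v))\) is non-majority mass at time \(t+1\), so
\[
u_t(v)\le 2p_{t+1}(v,x(v))+e_t(v).
\]
Summing over such \(v\) bounds their time-\(t\) mass by \(2\omega+\omega\). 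Their time-\((t+1)\) mass \(u_{t+1}(v)\) differs from \(u_t(v)\) in total by at most \(\sum_v|u_{t+1}(v)-u_t(v)|\le\Gamma_t\), by reversibility. Altogether
\[
\sum_{u,v}u_t(u)V(u,v)\le O(\omega+\Gamma_t).
\]
Getting this bookkeeping right, and in particular handling mass that moves between vertices while labels switch, is the step I expect to be most delicate. If the crude bound fails, I would instead compare \(p_{t+1}\) directly with \(p_t\widehat K_R\) restricted to majority states.

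Combining the two parts gives \(\int(\text{boundary}+\text{violation})(S_\tau)\,d\tau\le O_Q(\Gamma_t+\omega)\cdot\int\pi_R(S_\tau)\,d\tau\). Hence some \(\tau\) with \(S_\tau\ne\emptyset\) satisfies \(\beta_R(S_\tau,x)\le O_Q(\Gamma_t+\omega)<\theta\) once \(c_Q\) is small and \(C_Q\) is large.
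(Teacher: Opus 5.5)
Your proposal is correct and follows essentially the same route as the paper. It uses a single seed with \(\pi_R(s)\ge\delta/n\), applies Lemma~\ref{lem:heat-kernel-gradient} to pick \(t<k\) with small \(\Gamma_t\), sweeps over level sets of \(w_t\) labeled by the time-\(t\) majority labels, and uses the holding step \(K_R^{\mathrm{id}}(v,v)\ge\frac12\) to control vertices whose majority label changes between \(t\) and \(t+1\). The only difference is bookkeeping in the internal-violation term: the paper keeps \(\min\{w_t(u),w_t(v)\}\) and uses \(\min\le w_t(v)\) plus stationarity on label-changed targets, whereas you use \(\min\le w_t(u)\) throughout and pay the extra \(\sum_v|u_{t+1}(v)-u_t(v)|\le\Gamma_t\). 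That step is valid, and your estimate \(u_t(v)\le 2p_{t+1}(v,x(v))+e_t(v)\) even removes the paper's factor \(Q\), so the bound you were worried about goes through with total violation mass at most \(5\omega+\Gamma_t\).
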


\begin{proof}
At least half of the \(\pi_R\)-mass of \(B\) is carried by vertices satisfying
\[
    \pi_R(s)\ge \frac{\delta}{2n}.
\]
Fix such an \(s\in B\), and put \(a=a_s\).  Write
\[
    p_t(v,b)=p_t^{s,a}(v,b),
    \qquad
    u_t(v)=\sum_b p_t(v,b),
    \qquad
    w_t(v)=\frac{u_t(v)}{\pi_R(v)}.
\]
All norms below are raw \(L_1\)-norms on \(R\times[Q]\), without the total
variation factor \(1/2\).

We choose the majority labels in time order.  First choose an arbitrary
majority label \(x_0(v)\) at every vertex.  Having chosen \(x_{t-1}(v)\), choose
a majority label
\[
    x_t(v)\in\arg\max_{b\in[Q]}p_t(v,b)
\]
so that \(x_t(v)=x_{t-1}(v)\) whenever the old label is still a majority label
at time \(t\).  This tie-breaking rule is the only place where ties matter.

Let
\[
    A_t=\operatorname{Amb}_t(s,a).
\]
By assumption \(A_t\le\omega\) for all \(t\le k\).

For each \(0\le r\le k\), define
\[
    \sigma_r(v,b)=u_r(v)\mathbf 1[b=x_r(v)].
\]
In words, \(\sigma_r\) keeps the endpoint marginal \(u_r\), but at each
endpoint \(v\) it places all of that mass on the chosen majority label
\(x_r(v)\).  Since this measure preserves the base mass,
\[
    \|p_r-\sigma_r\|_1
    =
    2A_r.
\]

Fix \(0\le t<k\).  We compare \(\sigma_t\) after one trace step on the label lift
with \(\sigma_{t+1}\).  Using \(p_{t+1}=p_t\widehat K_R\) and contraction of
\(L_1\) under a Markov kernel,
\[
    \|\sigma_t\widehat K_R-\sigma_{t+1}\|_1
    \le
    2A_t+2A_{t+1}.
\]
The measure \(\sigma_{t+1}\) is supported on
\[
    M_{t+1}=\{(v,x_{t+1}(v)):v\in R\},
\]
the graph of the selected labels \(x_{t+1}\) in \(R\times[Q]\).
Therefore the mass of \(\sigma_t\widehat K_R\) outside \(M_{t+1}\) is at most
the preceding \(L_1\)-bound.  Written in trace coordinates, this is
\[
    D_t
    :=
    \sum_{u,v,\Pi}
    u_t(u)K_R^\Pi(u,v)
    \mathbf 1[x_{t+1}(v)\neq \Pi(x_t(u))]
    \le
    2A_t+2A_{t+1}.
\tag{\ensuremath{\mathrm{D}_t}}
\label{eq:ug-majority-drift}
\]

The next point is the only place where the artificial identity holding step is
used.  For every \(v\in R\), the trace chain has
\[
    K_R^{\mathrm{id}}(v,v)\ge \frac12,
\]
because the first lazy walk step can stay at \(v\), immediately returning to
\(R\) and carrying the identity label transport.  Hence
\[
    p_{t+1}(v,x_t(v))
    \ge
    \frac12 p_t(v,x_t(v))
    \ge
    \frac{1}{2Q}u_t(v),
\]
where the second inequality uses that \(x_t(v)\) is a majority label at time
\(t\).  If \(x_t(v)\neq x_{t+1}(v)\), the tie-breaking rule implies that
\(x_t(v)\) is not a majority label at time \(t+1\).  Therefore all mass at the
old label is non-majority mass at time \(t+1\):
\[
    p_{t+1}(v,x_t(v))
    \le
    u_{t+1}(v)-p_{t+1}(v,x_{t+1}(v)).
\]
Combining the preceding two inequalities and summing over vertices whose majority label
changed gives
\[
    \sum_{v:x_t(v)\neq x_{t+1}(v)}u_t(v)
    \le
    2Q A_{t+1}.
\tag{\ensuremath{\mathrm{C}_t}}
\label{eq:ug-majority-changes}
\]

We now define the weighted internal disagreement that will match the coarea
sweep.  The factor \(\min\{w_t(u),w_t(v)\}\) is exactly the amount of
level-set mass for which both endpoints \(u\) and \(v\) are present in the same
threshold set:
\[
    J_t
    =
    \sum_{u,v,\Pi}
    \pi_R(u)K_R^\Pi(u,v)
    \min\{w_t(u),w_t(v)\}
    \mathbf 1[x_t(v)\neq \Pi(x_t(u))].
\]
We claim that this quantity is controlled by the ambiguity at the two adjacent
times:
\[
    J_t\le 2A_t+(2+2Q)A_{t+1}\le C_Q(A_t+A_{t+1}).
\tag{\ensuremath{\mathrm{J}_t}}
\label{eq:ug-majority-J}
\]
To prove this, split \(J_t\) according to whether the target vertex \(v\)
satisfies \(x_t(v)=x_{t+1}(v)\).  On the part where the target label did not
change, the violation condition
\[
    x_t(v)\neq \Pi(x_t(u))
\]
is the same as \(x_{t+1}(v)\neq\Pi(x_t(u))\), and
\[
    \pi_R(u)\min\{w_t(u),w_t(v)\}\le u_t(u).
\]
Thus this contribution is at most \(D_t\).  On the part where
\(x_t(v)\neq x_{t+1}(v)\), use
\[
    \min\{w_t(u),w_t(v)\}\le w_t(v)
\]
and stationarity of the base trace kernel:
\[
\begin{aligned}
&\sum_{u,v,\Pi}
    \pi_R(u)K_R^\Pi(u,v)w_t(v)
    \mathbf 1[x_t(v)\neq x_{t+1}(v)]       \\
&\qquad =
    \sum_{v:x_t(v)\neq x_{t+1}(v)}\pi_R(v)w_t(v)
    =
    \sum_{v:x_t(v)\neq x_{t+1}(v)}u_t(v).
\end{aligned}
\]
This contribution is at most the right-hand side of
\eqref{eq:ug-majority-changes}.  Combining this with
\eqref{eq:ug-majority-drift} proves \eqref{eq:ug-majority-J}.  Since
\(A_t,A_{t+1}\le\omega\), in particular \(J_t\le 2C_Q\omega\).

Next choose a time with small density boundary.  The trace chain is finite,
lazy, and reversible with stationary distribution \(\pi_R\).  By
Lemma~\ref{lem:heat-kernel-gradient}, since \(\pi_R(s)>0\) and
\[
    \pi_R(s)\ge \frac{\delta}{2n},
\]
there exists \(0\le t<k\) such that
\[
    G_t
    :=
    \sum_{u,v}
    \pi_R(u)K_R(u,v)|w_t(u)-w_t(v)|
    \le
    C\sqrt{\frac{\log(2n/\delta)}{k}}.
\tag{\ensuremath{\mathrm{G}_t}}
\label{eq:ug-majority-gradient}
\]

For this time \(t\), define level sets
\[
    S_\tau=\{v\in R:w_t(v)>\tau\},
    \qquad \tau>0.
\]
Let \(B_t(\tau)\) be the outgoing trace-boundary numerator of \(S_\tau\),
oriented exactly as in the definition of \(\beta_R\):
\[
    B_t(\tau)
    =
    \sum_{\substack{u\in S_\tau\\ v\notin S_\tau\\ \Pi}}
    \pi_R(u)K_R^\Pi(u,v).
\]
Let \(I_t(\tau)\) be the internal trace-violation numerator of
\(x_t|_{S_\tau}\):
\[
    I_t(\tau)
    =
    \sum_{\substack{u,v\in S_\tau\\ \Pi}}
    \pi_R(u)K_R^\Pi(u,v)
    \mathbf 1[x_t(v)\neq \Pi(x_t(u))].
\]
The following coarea identities are exact.  For the boundary term,
\[
    \int_0^\infty B_t(\tau)\,d\tau
    =
    \sum_{u,v}
    \pi_R(u)K_R(u,v)(w_t(u)-w_t(v))_+
    \le
    G_t,
\]
while for the internal violation term the factor \(\min\{w_t(u),w_t(v)\}\) in
\(J_t\) gives
\[
    \int_0^\infty I_t(\tau)\,d\tau
    =
    J_t.
\]
Moreover,
\[
    \int_0^\infty \pi_R(S_\tau)\,d\tau
    =
    \sum_v \pi_R(v)w_t(v)
    =
    1.
\]
If every threshold with \(\pi_R(S_\tau)>0\) had
\[
    B_t(\tau)+I_t(\tau)>(G_t+J_t)\pi_R(S_\tau),
\]
then integrating over \(\tau\) would contradict the three preceding estimates.
Hence there is a threshold \(\tau\) with \(\pi_R(S_\tau)>0\) such that
\[
    \beta_R(S_\tau,x_t|_{S_\tau})
    =
    \frac{B_t(\tau)+I_t(\tau)}{\pi_R(S_\tau)}
    \le
    G_t+J_t.
\]
Using \eqref{eq:ug-majority-J} and \eqref{eq:ug-majority-gradient},
\[
    \beta_R(S_\tau,x_t|_{S_\tau})
    \le
    C_Q\left(
        \sqrt{\frac{\log(2n/\delta)}{k}}
        +
        \omega
    \right).
\]
Choosing \(C_Q\) sufficiently large and \(c_Q\) sufficiently small makes the
right-hand side less than \(\theta\).  This proves the lemma.
\end{proof}

\begin{corollary}[Trace steps that often fail force ambiguity at some trace time]
\label{cor:ug-trace-time-ambiguity}
For every fixed \(Q\), there are constants \(C_Q,c_Q>0\) such that the
following holds.  Let \(R\subseteq V\), and suppose
\[
    \beta_{\rm UG}(K_R)\ge \theta.
\]
Set
\[
    k=\left\lceil C_Q\theta^{-2}\log n\right\rceil,
    \qquad
    \omega=c_Q\theta.
\]
Then all but at most a \(1/4\)-fraction of starts \(s\sim\pi_R\) have the
following property:
for every \(a\in[Q]\), there exists a time \(0\le t_a\le k\) such that
\[
    \operatorname{Amb}_{t_a}(s,a)\ge \omega.
\]
\end{corollary}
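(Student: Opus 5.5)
The corollary is the contrapositive of Lemma~\ref{lem:ug-trace-time-majority-sweep}, applied to the set of bad starts. Let \(B\subseteq R\) be the set of starts \(s\) for which some label \(a_s\in[Q]\) has \(\operatorname{Amb}_t(s,a_s)<\omega\) for every \(0\le t\le k\), and put \(\delta=\pi_R(B)\). We must show \(\delta\le 1/4\).

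Suppose instead that \(\delta>1/4\). Every \(s\in B\) then satisfies the lemma's hypothesis \(\operatorname{Amb}_t(s,a_s)\le\omega\) for all \(t\le k\), so only the parameter conditions need checking.

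\emph{Trace-time condition.} Since \(\delta>1/4\), we have \(\log(2n/\delta)\le\log(8n)\). Also \(n\ge2\), because \(m>0\) and isolated vertices have been removed, so \(\log(8n)\le 4\log n\). Let \(C_Q'\) and \(c_Q'\) be the constants of Lemma~\ref{lem:ug-trace-time-majority-sweep}. Choosing the corollary's constant as \(C_Q:=4C_Q'\) gives
\[
    k\ge C_Q\theta^{-2}\log n\ge C_Q'\theta^{-2}\log(2n/\delta).
\]

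\emph{Ambiguity condition.} Choosing \(c_Q:=c_Q'\) gives \(\omega=c_Q\theta\le c_Q'\theta\).

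The lemma then produces \(S\subseteq R\) and \(\ell:S\to[Q]\) with \(\beta_R(S,\ell)<\theta\). Hence \(\beta_{\rm UG}(K_R)<\theta\), which contradicts the hypothesis \(\beta_{\rm UG}(K_R)\ge\theta\). Therefore \(\pi_R(B)\le1/4\).

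For every \(s\notin B\), each label \(a\) has some trace time \(t_a\le k\) with \(\operatorname{Amb}_{t_a}(s,a)\ge\omega\). This is exactly the claimed property.

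The only delicate point is bookkeeping. The lemma's time bound depends on \(\delta\) through \(\log(2n/\delta)\), while the corollary fixes \(k\) in terms of \(\log n\) alone. This works because the contradiction hypothesis gives \(\delta>1/4\), which is a constant lower bound, so the dependence on \(\delta\) is absorbed into \(C_Q\). Note also the direction of the strict inequalities: the bad set is defined by \(\operatorname{Amb}<\omega\), which implies the lemma's non-strict \(\operatorname{Amb}\le\omega\), and its complement is defined by \(\operatorname{Amb}\ge\omega\), matching the statement.
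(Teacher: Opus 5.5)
Your proposal is correct and follows the paper's proof essentially verbatim: define the bad set \(B\), assume it has \(\pi_R\)-mass above \(1/4\), absorb \(\log(2n/\delta)\le\log(8n)=O(\log n)\) into \(C_Q\), and let Lemma~\ref{lem:ug-trace-time-majority-sweep} contradict \(\beta_{\rm UG}(K_R)\ge\theta\). Your explicit constant choice \(C_Q=4C_Q'\), justified by \(\log 8=3\log 2\le 3\log n\) for \(n\ge2\), is just a more concrete version of the paper's ``after increasing the constant'' step.
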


\begin{proof}
Let \(B\) be the set of starts \(s\) for which there exists a label
\(a_s\in[Q]\) such that
\[
    \operatorname{Amb}_t(s,a_s)<\omega
    \qquad
    \text{for every }0\le t\le k.
\]
If \(\pi_R(B)\ge 1/4\), set
\[
    \delta_B:=\pi_R(B).
\]
We apply Lemma~\ref{lem:ug-trace-time-majority-sweep} to this set \(B\) with
\(\delta=\delta_B\).  Since \(m>0\) in the nontrivial case, we have \(n\ge2\),
and \(\delta_B\ge1/4\) gives
\[
    \log(2n/\delta_B)\le \log(8n)=O(\log n).
\]
Thus, after increasing the constant in the definition of \(k\), the hypotheses
of Lemma~\ref{lem:ug-trace-time-majority-sweep} are satisfied.  Hence there
are \(S\subseteq R\) and \(\ell:S\to[Q]\) such that
\[
    \beta_R(S,\ell)<\theta,
\]
contradicting \(\beta_{\rm UG}(K_R)\ge\theta\).  Hence
\(\pi_R(B)<1/4\).  For every start outside \(B\), every label becomes
ambiguous at some time \(t\le k\).
\end{proof}

\subsubsection{Ambiguity monotonicity}

We need one monotonicity fact for ambiguity under kernels that transport labels
by permutations.

\begin{lemma}[ambiguity monotonicity under permutation kernels]
\label{lem:permutation-kernel-ambiguity-monotonicity}
Let \(Q\ge2\), let \(X,Y\) be finite sets, let \(p\) be a subdistribution on
\(X\times[Q]\), and define
\[
    \mathsf A_X(p)
    =
    \sum_{x\in X}
    \left(
        \sum_{b=1}^Q p(x,b)-\max_{b\in[Q]}p(x,b)
    \right).
\]
Let \(\widehat K\) be a Markov kernel generated by transitions
\[
    (x,b)\mapsto (y,\Pi(b)),
    \qquad \Pi\in S_Q.
\]
Equivalently, assume there are nonnegative numbers \(K^\Pi(x,y)\) satisfying
\[
    \sum_{y\in Y}\sum_{\Pi\in S_Q}K^\Pi(x,y)=1
    \qquad\text{for every }x\in X,
\]
and
\[
    \widehat K((x,b),(y,c))
    =
    \sum_{\Pi\in S_Q:\,\Pi(b)=c}K^\Pi(x,y).
\]
Then
\[
    \mathsf A_Y(p\widehat K)\ge \mathsf A_X(p).
\]
Moreover, if \(p=p_1+p_2\) are subdistributions on the same space
\(X\times[Q]\), then
\[
    \mathsf A_X(p)\ge \mathsf A_X(p_1)+\mathsf A_X(p_2).
\]
\end{lemma}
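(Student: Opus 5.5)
The plan is to rewrite ambiguity as total mass minus a sum of per-point maxima, so that both claims reduce to elementary facts about maxima. For a subdistribution \(p\) on \(X\times[Q]\), set \(|p|=\sum_{x,b}p(x,b)\). Then
\[
    \mathsf A_X(p)=|p|-\sum_{x\in X}\max_{b\in[Q]}p(x,b).
\]
Since \(\widehat K\) is a Markov kernel, we have \(|p\widehat K|=|p|\): summing \(\widehat K((x,b),(y,c))\) over \((y,c)\) gives \(\sum_{y,\Pi}K^\Pi(x,y)=1\), because each \(\Pi\) contributes to exactly one \(c=\Pi(b)\). Monotonicity is therefore equivalent to
\[
    \sum_{y}\max_c (p\widehat K)(y,c)\le \sum_x\max_b p(x,b).
\]

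To prove this, write \(m(x)=\max_b p(x,b)\) and expand
\[
    (p\widehat K)(y,c)=\sum_{x}\sum_{\Pi}K^\Pi(x,y)\,p(x,\Pi^{-1}(c)).
\]
Each term satisfies \(p(x,\Pi^{-1}(c))\le m(x)\), so \((p\widehat K)(y,c)\le\sum_{x,\Pi}K^\Pi(x,y)m(x)\). This bound does not depend on \(c\), so it also bounds \(\max_c (p\widehat K)(y,c)\). Summing over \(y\) and using the row-sum identity \(\sum_{y,\Pi}K^\Pi(x,y)=1\) gives \(\sum_x m(x)\), which is what we need.

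The key point is that \(\Pi\) is a bijection, so the label \(\Pi^{-1}(c)\) is well defined for each \(c\). The argument then needs only the pointwise domination \(p(x,\Pi^{-1}(c))\le m(x)\), with \(c\) fixed before the maximum is taken. No convexity or matching argument is required.

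For superadditivity, total mass is additive: \(|p|=|p_1|+|p_2|\). The maximum is subadditive: \(\max_b(p_1(x,b)+p_2(x,b))\le\max_b p_1(x,b)+\max_b p_2(x,b)\) for each \(x\). Subtracting gives \(\mathsf A_X(p)\ge\mathsf A_X(p_1)+\mathsf A_X(p_2)\).

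I expect no real obstacle here. The only step needing care is the bookkeeping that \(\widehat K\) preserves total mass, which is exactly the given row-sum condition, and that the domination holds for every fixed \(c\), so \(\max_c\) of the pushed-forward measure is bounded by the push-forward of the per-point maxima.
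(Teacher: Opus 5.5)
Your proof is correct and is essentially the paper's argument. Both proofs use mass preservation to reduce monotonicity to $\sum_y\max_c(p\widehat K)(y,c)\le\sum_x\max_b p(x,b)$, and then use that each $\Pi$ is a bijection; the paper fixes a labeling $\varphi$ of $Y$ and notes that $h_x(b)=\sum_{y,\Pi}K^\Pi(x,y)\mathbf 1[\Pi(b)=\varphi(y)]$ is a probability vector, while your termwise bound $p(x,\Pi^{-1}(c))\le m(x)$ is the same computation regrouped, and your superadditivity step matches the paper's.
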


\begin{proof}
The total mass is preserved by \(\widehat K\).  It is therefore enough to show
\[
    \sum_{y\in Y}\max_b(p\widehat K)(y,b)
    \le
    \sum_{x\in X}\max_b p(x,b).
\]
Indeed,
\[
    \sum_{y\in Y}\max_b(p\widehat K)(y,b)
    =
    \max_{\varphi:Y\to[Q]}
    \sum_{y\in Y}(p\widehat K)(y,\varphi(y)).
\]
Fix \(\varphi\).  For \(x\in X\), define
\[
    h_x(b)
    =
    \sum_{y,\Pi}K^\Pi(x,y)\mathbf 1[\Pi(b)=\varphi(y)].
\]
Since each \(\Pi\) is a permutation,
\[
    \sum_{b=1}^Q h_x(b)=1.
\]
Thus
\[
    \sum_b p(x,b)h_x(b)\le \max_b p(x,b).
\]
Summing over \(x\) proves the first claim.  The superadditivity follows from
\[
    \max_b(p_1(x,b)+p_2(x,b))
    \le
    \max_b p_1(x,b)+\max_b p_2(x,b),
\]
after subtracting from total mass and summing over \(x\).
\end{proof}

\begin{lemma}[Monotonicity in the geometric scale]
\label{lem:ug-geometric-scale-monotonicity}
Let \(L'\ge L\ge1\).  For every seed \(s\) and seed label \(a\in[Q]\),
\[
    \operatorname{Amb}_{L'}(s,a)\ge \operatorname{Amb}_L(s,a).
\]
Consequently, \(\mu^{\rm UG}_{L'}(s)\ge\mu^{\rm UG}_L(s)\).
\end{lemma}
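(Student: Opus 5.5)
The plan is to write \(T_{L'}\) as \(T_L\) followed by an independent, nonnegative, memoryless extra time. Then \(P^{s,a}_{L'}=P^{s,a}_L\widehat K\) for a permutation kernel \(\widehat K\) on the label lift, and Lemma~\ref{lem:permutation-kernel-ambiguity-monotonicity} gives the inequality directly.

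\textbf{Step 1: decompose the geometric time.} Put \(\lambda=L/(L+1)\) and \(\lambda'=L'/(L'+1)\), so that \(\lambda'\ge\lambda\). The law of \(T_{L'}\) equals the law of \(T_L+\Delta\), where \(\Delta\) is independent of \(T_L\) and of the walk, with \(\Pr[\Delta=0]=(1-\lambda')/(1-\lambda)\) and \(\Pr[\Delta=j]\propto\lambda'^{\,j}\) for \(j\ge1\). To check this, compare generating functions:
\[
\frac{1-\lambda'}{1-\lambda' z}
=\frac{1-\lambda}{1-\lambda z}\cdot\frac{(1-\lambda')(1-\lambda z)}{(1-\lambda)(1-\lambda' z)} .
\]
Expanding the second factor gives coefficient \((1-\lambda')/(1-\lambda)\) at \(z^0\) and \(\frac{1-\lambda'}{1-\lambda}\lambda'^{\,j-1}(\lambda'-\lambda)\) at \(z^j\) for \(j\ge1\). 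All of these are nonnegative because \(\lambda'\ge\lambda\), and they sum to \(1\) (evaluate at \(z=1\)). So this factor is a genuine probability generating function.

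\textbf{Step 2: Markov property of the labeled walk.} Conditional on \(X_{T_L}=v\) and \(\Pi_{P}=\Pi_0\), the continuation is a fresh lazy walk from \(v\) run for \(\Delta\) steps. Its label transport \(\Pi'\) composes with the old one, giving total transport \(\Pi'\Pi_0\). So the endpoint-label pair moves from \((v,b)\) to \((y,\Pi'(b))\) with probability
\[
K^{\Pi'}(v,y)=\Pr[X_\Delta=y,\ \Pi_{P}=\Pi'\mid X_0=v].
\]
For each \(v\), these weights sum to \(1\) over \(y\) and \(\Pi'\). Hence \(P^{s,a}_{L'}=P^{s,a}_L\widehat K\), where \(\widehat K\) is exactly a permutation kernel in the sense of Lemma~\ref{lem:permutation-kernel-ambiguity-monotonicity} with \(X=Y=V\). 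The only detail to check is the order of composition: the label transport is multiplied on the left, so a label \(b\) maps to \(\Pi'(b)\).

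\textbf{Step 3: conclude.} Since \(\operatorname{Amb}_L(s,a)=\mathsf A_V(P^{s,a}_L)\), Lemma~\ref{lem:permutation-kernel-ambiguity-monotonicity} gives \(\operatorname{Amb}_{L'}(s,a)\ge\operatorname{Amb}_L(s,a)\). Taking the minimum over \(a\) on both sides gives \(\mu^{\rm UG}_{L'}(s)\ge\mu^{\rm UG}_L(s)\).

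The main obstacle is Step 1: confirming that the distribution of \(\Delta\) is nonnegative, which is where \(L'\ge L\) is used. The rest is bookkeeping.
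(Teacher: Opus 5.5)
Your proposal is correct and follows essentially the paper's route: write \(T_{L'}\) as \(T_L\) plus an independent nonnegative delay, note that the continuation acts on \(P^{s,a}_L\) as a permutation kernel on the label lift, and apply Lemma~\ref{lem:permutation-kernel-ambiguity-monotonicity}. The only difference is how the delay is obtained: you verify its law (an atom \((1-\lambda')/(1-\lambda)\) at zero and a geometric tail) by factoring generating functions, while the paper couples both stopping times to a common sequence of uniforms and reads off the same law from memorylessness.
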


\begin{proof}
Put
\[
    p=\frac{1}{L+1},
    \qquad
    p'=\frac{1}{L'+1}.
\]
Thus \(p'\le p\).  Use one sequence of independent uniform random variables
\(U_0,U_1,\ldots\), and realize the geometric stopping time with mean \(L\) as
the first \(t\) such that \(U_t\le p\).  Realize the geometric stopping time
with mean \(L'\) from the same sequence as the first \(t\) such that
\(U_t\le p'\).  The latter time is obtained from the former by waiting an
additional nonnegative number of steps.  Moreover, conditional on the first
\(p\)-success, the event that it is already a \(p'\)-success has probability
\(p'/p\), and otherwise the future waiting time for a \(p'\)-success is fresh
and geometric.  Hence there is a nonnegative integer-valued random variable
\(S\), independent of \(T_L\), such that \(T_{L'}\) has the same distribution
as \(T_L+S\).

Condition on the first \(T_L\) steps.  The remaining \(S\) steps apply, to the
distribution on endpoint-label pairs, a convex combination of walk kernels on
the label lift.  Each such kernel maps \((u,b)\) to \((v,\Pi(b))\) for a path
label-transport permutation \(\Pi\).  By
Lemma~\ref{lem:permutation-kernel-ambiguity-monotonicity}, this stochastic map
cannot decrease \(\mathsf A\).  Since
\[
    \operatorname{Amb}_L(s,a)
    =
    \mathsf A\!\left((v,b)\mapsto p^{a,b}_{L,s}(v)\right),
\]
we get \(\operatorname{Amb}_{L'}(s,a)\ge\operatorname{Amb}_L(s,a)\).  Taking
the minimum over \(a\in[Q]\) gives the claim for \(\mu^{\rm UG}\).
\end{proof}

\subsubsection{From trace ambiguity to geometric ambiguity}

The previous statements concern an exact number of trace returns.  A trace
prefix is the original lazy-walk path up to the corresponding return time
\(H_t\).  The next lemma relates such prefixes to geometrically stopped
original walks, which is the form needed by the PageRank-based estimator.

\begin{lemma}[Trace ambiguity implies geometric ambiguity]
\label{lem:ug-trace-to-geometric}
For every fixed \(Q\), there are constants \(C_Q,c_Q>0\) such that the
following holds.  Let \(R\subseteq V\) have
\[
    z=\pi(R)>0,
\]
and suppose
\[
    \beta_{\rm UG}(K_R)\ge\theta.
\]
Set
\[
    k=\left\lceil C_Q\theta^{-2}\log n\right\rceil,
    \qquad
    \omega=c_Q\theta,
    \qquad
    L=C_Q\frac{k}{z\omega}.
\]
Equivalently, throughout this lemma the dependence on the threshold
\(\theta\) for \(\beta_{\rm UG}(K_R)\) is
\[
    k=\Theta_Q(\theta^{-2}\log n),
    \qquad
    \omega=\Theta_Q(\theta),
    \qquad
    L=\Theta_Q\!\left(\frac{\theta^{-3}\log n}{z}\right).
\]
Then for a set of starts of \(\pi_R\)-mass at least \(c_Q\),
\[
    \mu^{\rm UG}_L(s)\ge c_Q\theta.
\]
\end{lemma}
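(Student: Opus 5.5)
The plan is to combine Corollary~\ref{cor:ug-trace-time-ambiguity} with the return-time truncation of Lemma~\ref{lem:stationary-return-time-truncation}, and then to use geometric memorylessness together with the monotonicity and superadditivity in Lemma~\ref{lem:permutation-kernel-ambiguity-monotonicity}.

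\textbf{Good starts.} Corollary~\ref{cor:ug-trace-time-ambiguity} shows that all but a \(1/4\)-fraction of \(\pi_R\)-starts have the following property: for every \(a\in[Q]\) there is a trace time \(t_a\le k\) with \(\operatorname{Amb}_{t_a}(s,a)\ge\omega\). Since \(H_t\) is nondecreasing in \(t\), we have \(H_{t_a}\le H_k\). For the tail of \(H_k\) we need a bound of order \(\omega\), not \(1/8\). We rerun the two Markov steps of Lemma~\ref{lem:stationary-return-time-truncation}. First, \(\mathbb E_{s\sim\pi_R}H_k\le k/z\), so at least half the \(\pi_R\)-mass has \(\mathbb E_sH_k\le 2k/z\). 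For those starts, put \(L_0=8k/(z\omega)\); then \(\Pr[H_k>L_0\mid X_0=s]\le \omega/4\). Intersecting the two sets leaves good starts of \(\pi_R\)-mass at least \(1/4\).

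\textbf{Transfer for a fixed good start and label.} Fix a good start \(s\) and any label \(a\), and write \(t=t_a\).
\begin{itemize}
\item Let \(q\) be the trace distribution \(p^{s,a}_t\), and split it as \(q=q_{\le}+q_{>}\), according to whether \(H_t\le L_0\).
\item Since \(q_{>}\) has mass at most \(\omega/4\), and ambiguity is bounded by total mass, superadditivity gives \(\mathsf A(q_{\le})\ge \omega-\omega/4\cdot(\text{const})\ge \omega/2\). (The ambiguity of a subdistribution is at most its mass, so subtracting costs at most \(\omega/4\).)
\item Now decompose the geometric walk. By memorylessness, on the event \(T_L\ge H_t\) the remaining time \(T_L-H_t\) is a fresh geometric variable with mean \(L\), independent of the prefix. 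So the endpoint-label distribution at \(T_L\) dominates, as a sum of nonnegative subdistributions, the following piece: the measure \(q_{\le}\) weighted by \(\Pr[T_L\ge h]\ge (L/(L+1))^{L_0}\), continued by the geometric lazy-walk label kernel.
\item This continuation kernel maps \((u,b)\mapsto(v,\Pi(b))\), so it cannot decrease ambiguity.
\item To make the weighting clean, we treat \(h\le L_0\) pathwise. We condition on the prefix path with its realized \(h\), and use that \(\Pr[T_L\ge h]\ge (1-1/(L+1))^{L_0}\ge e^{-L_0/L}\ge 1/2\) once \(L\ge 2L_0\). The piece is therefore at least \(\tfrac12 q_{\le}\) times a uniform factor. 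For the lower bound we may drop mass down to exactly the factor \(\tfrac12\) uniformly (thinning), since superadditivity lets us discard nonnegative pieces.
\end{itemize}
Applying superadditivity to the remainder and then monotonicity gives \(\operatorname{Amb}_L(s,a)\ge \tfrac12\mathsf A(q_{\le})\ge\omega/4\). This holds for every \(a\), so \(\mu^{\rm UG}_L(s)\ge \omega/4=c_Q\theta\).

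\textbf{Main obstacle.} The delicate step is the coupling in the transfer. Two points need care.
\begin{itemize}
\item The trace prefix up to \(H_t\) and the geometric walk must be realized on the same lazy-walk trajectory, so that the retained mass at \(T_L\) really is \(q_{\le}\) pushed forward by a permutation kernel.
\item The discard-and-thin step must be done so that it produces a nonnegative subdistribution decomposition. The key is that \(H_t\) is a stopping time, so \(\{T_L\ge H_t\}\) together with the strong Markov property makes the post-\(H_t\) segment a fresh geometric walk.
\end{itemize}
Once this is set up, the constants work out with \(L=C_Qk/(z\omega)\) for \(C_Q\ge16\), which matches the stated parameters.
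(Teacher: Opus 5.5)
Your proposal is correct and follows the paper's proof essentially step for step. It takes the good starts from Corollary~\ref{cor:ug-trace-time-ambiguity}, uses a second Markov bound to get $\Pr[H_k>L_0\mid X_0=s]=O(\omega)$ with $L_0\asymp k/(z\omega)$, and restricts to the retained-prefix subdistribution on $\{H_{t_a}\le L_0\}$. It then weights by survival $(L/(L+1))^{h}$, uses memorylessness and the strong Markov property, and finishes with monotonicity and superadditivity from Lemma~\ref{lem:permutation-kernel-ambiguity-monotonicity}; only the constants differ.

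One small correction. The bound $\mathsf A(q_{\le})\ge \mathsf A(q)-\|q_{>}\|_1$ does not follow from superadditivity, which only gives an upper bound on $\mathsf A(q_{\le})$. It follows from $\max_b q(v,b)\ge\max_b q_{\le}(v,b)$ at each $v$, which is the paper's remark that ``removing mass $\delta$ decreases $\mathsf A$ by at most $\delta$.''
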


\begin{proof}
All numerical constants below are absorbed into the constants \(C_Q,c_Q\) in
the statement.  By Corollary~\ref{cor:ug-trace-time-ambiguity}, for all but at
most a \(1/4\)-fraction of starts \(s\sim\pi_R\), every label \(a\in[Q]\) has a
time \(t_a\le k\), depending on \(a\), such that
\[
    \operatorname{Amb}_{t_a}(s,a)\ge \omega.
\tag{9}
\label{eq:ug-trace-time-amb-good-start}
\]

To avoid confusing trace-return times with the geometric stopping time, write
\(H_j\) for the ordinary lazy-walk time of the \(j\)-th trace return to \(R\),
with \(H_0=0\).  By Lemma~\ref{lem:stationary-return-time-truncation},
\[
    \mathbb E_{s\sim\pi_R}H_k\le \frac{k}{z}.
\]
Choose an absolute constant \(A\) large enough and set
\[
    L_0=A\frac{k}{z\omega}.
\]
Then
\[
    \Pr_{s\sim\pi_R,\,{\rm walk}}[H_k>L_0]\le \frac{\omega}{A}.
\]
Markov's inequality over the random start implies that, if \(A\) is large
enough, at least \(3/4\) of the starts \(s\sim\pi_R\) satisfy
\[
    \Pr[H_k>L_0\mid X_0=s]\le \omega/100.
\tag{10}
\label{eq:ug-return-tail-good-start}
\]
Let \(S_{\rm amb}\) be the set of starts for which every label has a time
satisfying \eqref{eq:ug-trace-time-amb-good-start}, and let \(S_{\rm ret}\) be
the set of starts satisfying \eqref{eq:ug-return-tail-good-start}.  Both sets
have \(\pi_R\)-mass at least \(3/4\), so
\[
    \pi_R(S_{\rm amb}\cap S_{\rm ret})\ge 1/2.
\]
Fix \(s\in S_{\rm amb}\cap S_{\rm ret}\).

Now fix a seed label \(a\in[Q]\), and let \(t_a\le k\) be a time satisfying
\eqref{eq:ug-trace-time-amb-good-start}.  This choice, and the decomposition
below, are made separately for this label \(a\).  Let \(p_a\) be the
trace distribution on the label lift at time \(t_a\):
\[
    p_a(v,b)
    =
    \Pr[Y_{t_a}=v,\ \Pi_{t_a}(a)=b\mid Y_0=s],
    \qquad v\in R,\ b\in[Q].
\]
Here \(\Pi_{t_a}\) denotes the label transport accumulated along the first
\(t_a\) trace steps.  Define the retained prefix subdistribution
\[
    q_a(v,b)
    =
    \Pr\!\left[
        Y_{t_a}=v,\ \Pi_{t_a}(a)=b,\ H_{t_a}\le L_0
        \,\middle|\, Y_0=s
    \right].
\]
Since \(t_a\le k\), the event \(H_{t_a}>L_0\) implies \(H_k>L_0\).  Thus
\[
    \|p_a-q_a\|_1
    =
    \Pr[H_{t_a}>L_0\mid Y_0=s]
    \le
    \omega/100.
\]
Because removing mass \(\delta\) from a distribution can decrease
\(\mathsf A\) by at most \(\delta\), \eqref{eq:ug-trace-time-amb-good-start}
gives
\[
    \mathsf A(q_a)
    \ge
    \mathsf A(p_a)-\|p_a-q_a\|_1
    \ge
    0.9\omega.
\]

Let \(T_L\) be the independent geometric stopping time with mean \(L\), and
put
\[
    \lambda=\frac{L}{L+1}.
\]
The constant in the definition \(L=C_Qk/(z\omega)\) is chosen so that
\(L\ge B L_0\) for a sufficiently large absolute constant \(B\).  For every
realized retained trace prefix of ordinary length \(h\le L_0\), independence of
\(T_L\) from the walk gives
\[
    \Pr[T_L\ge h\mid \mathcal F_h]
    =
    \lambda^h
    \ge
    \lambda^{L_0}
    =:c_{\rm surv}>0,
\]
where \(\mathcal F_h\) is the lazy-walk history up to time \(h\).  Moreover, for
every \(r\ge0\),
\[
    \Pr[T_L-h=r\mid T_L\ge h,\mathcal F_h]
    =
    \frac{(1-\lambda)\lambda^{h+r}}{\lambda^h}
    =
    (1-\lambda)\lambda^r.
\]
Thus, conditional on the retained prefix surviving until time \(h\), the
residual time \(T_L-h\) is an independent fresh copy of \(T_L\).  By the strong
Markov property, the residual lazy walk starts from the prefix endpoint and is
independent of the retained prefix.

Define the subdistribution of retained prefixes weighted by their survival
probability:
\[
    q_a^\lambda(v,b)
    =
    \mathbb E_s[
        \lambda^{H_{t_a}}\,
        \mathbf 1[H_{t_a}\le L_0]\,
        \mathbf 1[Y_{t_a}=v,\ \Pi_{t_a}(a)=b]
    ].
\]
Coordinatewise \(q_a^\lambda\ge c_{\rm surv}q_a\), so by the superadditivity
and homogeneity of \(\mathsf A\),
\[
    \mathsf A(q_a^\lambda)
    \ge
    c_{\rm surv}\mathsf A(q_a).
\]
Let \(\widehat G_L\) be the label-lifted kernel of an independent
\(L\)-geometric lazy walk on \(V\times[Q]\).  If a retained prefix sends label
\(a\) to \(b\) at vertex \(v\), and the residual walk from \(v\) sends label
\(b\) to \(c\) at vertex \(w\), then the full path label transport sends \(a\)
to \(c\) at \(w\).  Hence
the final geometric distribution \(P_L^{s,a}\) on endpoint-label pairs admits
the subdistribution decomposition
\[
    P_L^{s,a}
    =
    q_a^\lambda\widehat G_L+r_a
\]
for some nonnegative residual subdistribution \(r_a\) collecting all other
paths.

Lemma~\ref{lem:permutation-kernel-ambiguity-monotonicity} applied to the
residual label-lifted geometric kernel gives
\[
    \mathsf A(q_a^\lambda\widehat G_L)
    \ge
    \mathsf A(q_a^\lambda).
\]
Applying the superadditivity part of
Lemma~\ref{lem:permutation-kernel-ambiguity-monotonicity} to
\(P_L^{s,a}=q_a^\lambda\widehat G_L+r_a\), we obtain
\[
    \operatorname{Amb}_L(s,a)
    =
    \mathsf A(P_L^{s,a})
    \ge
    \mathsf A(q_a^\lambda\widehat G_L)
    \ge
    c_{\rm surv}\mathsf A(q_a)
    \ge
    0.9c_{\rm surv}\omega.
\]
Since \(\omega\) is a fixed \(Q\)-dependent constant multiple of \(\theta\),
this is at least \(c'_Q\theta\).  After renaming \(c'_Q\) as the constant in
the conclusion, we get
\[
    \operatorname{Amb}_L(s,a)\ge c_Q\theta.
\]
The argument used a decomposition depending on the chosen label \(a\), and this
is harmless because the geometric scale \(L\) and the start
\(s\in S_{\rm amb}\cap S_{\rm ret}\) are fixed.  The subdistribution
\(q_a^\lambda\widehat G_L+r_a\) may be chosen
separately for each \(a\).  Since this lower bound was proved for an arbitrary
\(a\in[Q]\),
\[
    \mu^{\rm UG}_L(s)
    =
    \min_{a\in[Q]}\operatorname{Amb}_L(s,a)
    \ge c_Q\theta.
\]
\end{proof}

\subsubsection{Multiscale geometric ambiguity}

The residual set found by peeling can have unknown stationary mass.  The
multiscale lemma below applies Lemma~\ref{lem:ug-trace-to-geometric} over
dyadic guesses for that mass, producing one scale at which many seeds have
detectable ambiguity.

\begin{lemma}[Multiscale geometric ambiguity for Unique Games]
\label{lem:ug-multiscale-ambiguity}
Fix \(Q\).  There are constants \(c_Q,C_Q>0\) such that the following holds.
Let \(0<\rho<1/10\), and define the dyadic scale set
\[
    \mathcal Q_\rho
    =
    \{2^j\rho^2:0\le j<\lceil\log_2(\rho^{-2})\rceil\}\cup\{1\}.
\]
For each \(r\in\mathcal Q_\rho\), define
\[
    \alpha_r=c_Q\rho r^{-1/2},
\]
and
\[
    L_r=C_Q\rho^{-3}r^{1/2}\log n.
\]
If
\[
    \tau_{\rm UG}(\mathcal U)\ge \rho,
\]
then there exists a scale \(r\in\mathcal Q_\rho\) such that
\[
    \Pr_{s\sim\pi}
    \left[
        \mu^{\rm UG}_{L_r}(s)\ge \alpha_r
    \right]
    \ge
    c_Q r.
\]
\end{lemma}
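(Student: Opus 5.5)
We run a nonuniform peeling process and stop at the first residual whose trace chain fails with probability at least the threshold \(\theta(z)\asymp_Q\rho z^{-1/2}\). We then feed that residual to Lemma~\ref{lem:ug-trace-to-geometric} and round its mass to a dyadic scale.

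\emph{Peeling.} Fix a small constant \(c_0=c_0(Q)\) and set \(\theta(z)=c_0\rho z^{-1/2}\). Start from \(R_0=V\). While the current residual \(R_{i-1}\) has \(z_{i-1}=\pi(R_{i-1})\ge\rho^2\) and \(\beta_{\rm UG}(K_{R_{i-1}})<\theta(z_{i-1})\), apply Lemma~\ref{lem:ug-tol-low-trace-peel} with \(\eta_i=4\theta(z_{i-1})\). This yields an \(\eta_i\)-UG-peelable \(S_i\subseteq R_{i-1}\); remove it. The sets are nonempty, so the process terminates.

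Suppose it terminates with \(\pi(R_*)<\rho^2\). Lemma~\ref{lem:ug-nonuniform-peeling} gives
\[
\tau_{\rm UG}(\mathcal U)\le 8c_0\rho\sum_i\frac{z_{i-1}-z_i}{\sqrt{z_{i-1}}}+\rho^2 .
\]
Since the sum is at most \(\int_0^1 z^{-1/2}\,dz=2\), this is at most \(16c_0\rho+\rho^2<\rho\) for small \(c_0\) and \(\rho<1/10\). That contradicts \(\tau_{\rm UG}(\mathcal U)\ge\rho\). Hence the process stops at a residual \(R\) with \(z=\pi(R)\ge\rho^2\) and \(\beta_{\rm UG}(K_R)\ge\theta:=c_0\rho z^{-1/2}\). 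Note that \(\theta\le c_0/\rho\cdot\rho\le c_0\) is well within range, since \(z\ge\rho^2\).

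\emph{Apply the trace-to-geometric lemma.} Lemma~\ref{lem:ug-trace-to-geometric} with this \(\theta\) gives a set of starts of \(\pi_R\)-mass at least \(c_Q\) on which \(\mu^{\rm UG}_{L}(s)\ge c_Q\theta\). Here
\[
L=\Theta_Q\bigl(\theta^{-3}\log n/z\bigr)=\Theta_Q\bigl(\rho^{-3}z^{1/2}\log n\bigr).
\]
In \(\pi\)-mass this set has size at least \(c_Qz\).

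\emph{Rounding to a dyadic scale.} Choose \(r\in\mathcal Q_\rho\) with \(r\le z\le 2r\); this is possible since \(\rho^2\le z\le1\). Then \(\alpha_r=c_Q\rho r^{-1/2}\) is at most a constant multiple of \(\theta\), so shrinking the constant in \(\alpha_r\) gives \(\mu^{\rm UG}_L(s)\ge\alpha_r\). The mass bound becomes \(c_Qz\ge c_Qr\).

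For the scale, note \(L\asymp_Q\rho^{-3}r^{1/2}\log n\). Taking the constant \(C_Q\) in \(L_r\) large enough ensures \(L_r\ge L\). Lemma~\ref{lem:ug-geometric-scale-monotonicity} then gives \(\mu^{\rm UG}_{L_r}(s)\ge\mu^{\rm UG}_L(s)\ge\alpha_r\) on the same starts, which proves the claim.

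The main point of care is this constant bookkeeping. The lemma's \(L\) is defined with an exact constant, so we need monotonicity in \(L\) rather than equality. We also need the constant \(c_0\) in the peeling threshold to be chosen independently of the constants produced by Lemma~\ref{lem:ug-trace-to-geometric}. Both dependencies are one-directional, so this is consistent.
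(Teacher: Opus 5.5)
Your proposal is correct and follows the paper's proof essentially step for step: nonuniform peeling with threshold \(\asymp_Q \rho z^{-1/2}\), the bound \(\sum_i (z_{i-1}-z_i)/\sqrt{z_{i-1}}\le 2\) to contradict \(\tau_{\rm UG}(\mathcal U)\ge\rho\), then Lemma~\ref{lem:ug-trace-to-geometric}, dyadic rounding, and monotonicity in the geometric scale. The only cosmetic difference is that you round \(z\) down to an \(r\) with \(r\le z\le 2r\), whereas the paper rounds up to \(z\le r\le 2z\); either direction works after the constant adjustments you describe.
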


\begin{proof}
Run an analytic peeling process as follows.  Start with \(R_0=V\).  When the
current residual is \(R_i\), write
\[
    z_i=\pi(R_i).
\]
Stop if
\[
    z_i<\rho^2.
\]
Otherwise set
\[
    \eta(z_i)=A_Q\rho z_i^{-1/2},
\]
where \(A_Q>0\) is a sufficiently small constant.

Suppose, for contradiction, that at every nonterminal residual \(R_i\),
\[
    \beta_{\rm UG}(K_{R_i})<\eta(z_i)/4.
\]
Then Lemma~\ref{lem:ug-tol-low-trace-peel} gives an
\(\eta(z_i)\)-UG-peelable set \(S_i\subseteq R_i\).  Remove it and continue
with \(R_{i+1}=R_i\setminus S_i\).  By Lemma~\ref{lem:ug-nonuniform-peeling},
\[
    \tau_{\rm UG}(\mathcal U)
    \le
    2\sum_i \eta(z_i)\pi(S_i)+\pi(R_*).
\]
Since
\[
    \pi(S_i)=z_i-z_{i+1},
\]
we have
\[
\begin{aligned}
    \sum_i \eta(z_i)\pi(S_i)
    &=
    A_Q\rho
    \sum_i
    \frac{z_i-z_{i+1}}{\sqrt{z_i}}        \\
    &\le
    2A_Q\rho.
\end{aligned}
\]
Also \(\pi(R_*)<\rho^2\).  Choosing \(A_Q\) sufficiently small gives
\[
    \tau_{\rm UG}(\mathcal U)<\rho,
\]
contradicting the assumption.

Therefore, for some residual \(R\) with
\[
    z=\pi(R)\ge\rho^2,
\]
we have
\[
    \beta_{\rm UG}(K_R)\ge c_Q\rho z^{-1/2}.
\]
Set
\[
    \theta=c_Q\rho z^{-1/2}.
\]
By Lemma~\ref{lem:ug-trace-to-geometric}, at scale
\[
\begin{aligned}
    L
    &=
    C_Q\frac{\theta^{-2}\log n}{z\theta}  \\
    &=
    C_Q\rho^{-3}z^{1/2}\log n,
\end{aligned}
\]
a set of starts of \(\pi_R\)-mass at least \(c_Q\) satisfies
\[
    \mu^{\rm UG}_L(s)
    \ge
    c_Q\theta
    =
    c_Q\rho z^{-1/2}.
\]
Under the global stationary distribution, this active set has mass at least
\(c_Qz\).

Choose \(r\in\mathcal Q_\rho\) such that
\[
    z\le r\le 2z.
\]
Such an \(r\) exists by the definition of \(\mathcal Q_\rho\), since
\(\rho^2\le z\le1\).
After adjusting constants,
\[
    L_r\ge L,
    \qquad
    \alpha_r\le c_Q\rho z^{-1/2}.
\]
By Lemma~\ref{lem:ug-geometric-scale-monotonicity}, increasing the geometric
scale cannot decrease the ambiguity.  Hence
\[
    \Pr_{s\sim\pi}
    [
        \mu^{\rm UG}_{L_r}(s)\ge\alpha_r
    ]
    \ge
    c_Qz
    \ge
    c_Qr,
\]
after another constant adjustment.
\end{proof}

\subsection{Estimating ambiguity and sampling seeds}
\label{subsec:ug-estimation-sampling}

Lemma~\ref{lem:ug-completeness} and
Lemma~\ref{lem:ug-multiscale-ambiguity} state the completeness and soundness
conditions for seeds drawn from the stationary measure \(\pi\), and
Lemma~\ref{lem:ug-multiscale-ambiguity} also specifies the geometric scales.
This subsection supplies the two algorithmic pieces needed to use those
statements in the adjacency-list model, namely an ambiguity estimator for a
given seed and a procedure that samples an almost-uniform edge and then a
random endpoint, producing a seed distribution pointwise close to \(\pi\).

\subsubsection{Ambiguity estimation for a given seed}

Given a seed \(s\), we need a tester for
\[
    \mu^{\rm UG}_L(s)
    =
    \min_{a\in[Q]}\operatorname{Amb}_L(s,a).
\]
Proposition~\ref{prop:ug-ambiguity-tester} estimates this quantity by
estimating \(\operatorname{Amb}_L(s,a)\) for each seed label \(a\).

We first spell out the label lift on which the PageRank estimator is used.  Let
\[
    \widehat V_Q=V\times[Q].
\]
We view the lazy walk with label transport as the random walk on a weighted
undirected graph on \(\widehat V_Q\).  Put an artificial lazy self-loop of weight
\(\deg(v)\) at every \((v,b)\).  For every constraint edge \(uv\in E\) and
every label \(b\in[Q]\), put an edge of weight \(1\) between
\[
    (u,b)
    \qquad\text{and}\qquad
    (v,\pi_{uv}(b)).
\]
If \(c=\pi_{uv}(b)\), then \(b=\pi_{vu}(c)\) because
\(\pi_{vu}=\pi_{uv}^{-1}\).  Thus the same label-lifted edge is obtained when
the constraint is viewed from \(v\) to \(u\).  The weighted adjacency matrix of
the label lift is symmetric, including the artificial lazy self-loop weights.
With the convention that such a holding self-loop contributes its transition
weight once to the weighted degree,
\[
    D(v,b)=\deg(v)+\deg(v)=2\deg(v),
\]
and the total label-lifted volume is
\[
    \widehat{\mathrm{vol}}_Q
    =
    \sum_{(v,b)\in\widehat V_Q}D(v,b)
    =
    4Qm.
\]
The transition probability from \((v,b)\) to itself is \(1/2\), and the
transition probability through a fixed incident edge \(vu\) to
\((u,\pi_{vu}(b))\) is \(1/(2\deg(v))\).  Hence this weighted graph realizes
exactly the lazy walk with label transport.  Since its weighted adjacency matrix is
symmetric, the label-lifted walk is reversible with stationary measure
proportional to \(D(v,b)\).

Let \(\zeta=(L+1)^{-1}\) and \(\lambda=L/(L+1)\).  For
\(y,t\in\widehat V_Q\), write
\[
    \prr^Q_y(t)=\zeta\sum_{\ell\ge0}\lambda^\ell\widehat M^\ell(y,t),
\]
where \(\widehat M\) is the label-lifted transition matrix.  By construction,
for every seed label \(a\), endpoint \(v\), and endpoint label \(b\),
\[
    \prr^Q_{(s,a)}((v,b))
    =
    p^{a,b}_{L,s}(v),
\]
because both sides are the law of the lazy walk with label transport stopped at
the independent geometric time \(T_L\).  In particular,
\[
    \sum_{b=1}^Q\prr^Q_{(s,a)}((v,b))=u_{L,s}(v).
\]
This label lift is a reversible weighted walk in the sense of
Definition~\ref{def:weighted-reversible-walk-oracle}.  It also satisfies the
implementation assumptions of Proposition~\ref{prop:pagerank-point-estimator}.
By
Definition~\ref{def:ug-general-graph-oracle}, a constrained neighbor query
\((v,i)\) returns both the endpoint \(u\) and the oriented permutation
\(\pi_{vu}\).  Thus one sampled non-loop transition from \((v,b)\) obtains the
next label-lifted state \((u,\pi_{vu}(b))\) with one constrained neighbor query
and \(O_Q(1)\) local work; a lazy self-loop uses the known identity
label-transport permutation.  A full forward push from \((v,b)\) enumerates
the \(\deg(v)\) adjacency-list entries,
reads the corresponding permutations, and accounts for the artificial self-loop,
so its oracle cost is \(O_Q(\deg(v)+1)=O_Q(D(v,b))\).  The target degree
\(D(v,b)=2\deg(v)\) is obtained from one degree query and \(O_Q(1)\) overhead,
and a reverse geometric sample has expected cost \(O_Q(L)\).  Because the
self-loop weight \(\deg(v)\) is counted once in \(D(v,b)\), the degree used by the
oracle implementation is the same degree used in the reversible weighted label
lift.  Therefore the PageRank estimator applies on this label lift with only
\(Q\)-dependent changes in the hidden constants.

\begin{proposition}[Ambiguity subroutine for a given seed]
\label{prop:ug-ambiguity-tester}
Fix \(Q\).  Given a seed \(s\), scale \(L\ge1\), threshold
\(0<\beta<1\), and failure probability \(0<\delta<1/3\), there is an oracle
subroutine
\[
    \textsc{UGAmbiguityTest}(s,L,\beta,\delta)
\]
that returns one of two reports, \textsc{HIGH} or \textsc{LOW}, and satisfies
\[
    \Prb[\text{reports }\textsc{HIGH}]\ge 1-\delta
    \quad\text{if }\quad
    \mu^{\rm UG}_L(s)\ge \beta,
\]
while
\[
    \Prb[\text{reports }\textsc{LOW}]\ge 1-\delta
    \quad\text{if }\quad
    \mu^{\rm UG}_L(s)\le \beta/10.
\]
Its expected query complexity is
\[
    \tOQ
    \left(
        L\sqrt m\,\beta^{-7/2}\log\frac1\delta
    \right).
\]
\end{proposition}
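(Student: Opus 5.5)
The plan is to estimate \(\operatorname{Amb}_L(s,a)\) separately for each of the \(Q\) seed labels and report \textsc{HIGH} iff every estimate exceeds a threshold near \(\beta/2\). By a union bound over \(a\) (failure probability \(\delta/Q\) each), the guarantees for \(\mu^{\rm UG}_L(s)=\min_a\operatorname{Amb}_L(s,a)\) follow from additive-error estimation of each \(\operatorname{Amb}_L(s,a)\) to within \(\beta/4\). Write
\[
\operatorname{Amb}_L(s,a)=\mathbb E_{v\sim u_{L,s}}\bigl[g_a(v)\bigr],
\qquad
g_a(v)=1-\max_b\frac{p^{a,b}_{L,s}(v)}{u_{L,s}(v)}\in[0,1].
\]
This turns the task into averaging a bounded function over endpoints sampled from the base geometric walk from \(s\). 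Sampling \(N=O(\beta^{-2}\log(Q/\delta))\) endpoints costs \(O(NL)\) queries and concentrates the empirical mean of \(g_a\) to within \(\beta/8\) by Hoeffding.

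For each sampled endpoint \(v\), I will estimate \(g_a(v)\) using Proposition~\ref{prop:pagerank-point-estimator} on the label lift \(\widehat V_Q\). This is justified by the identity \(\prr^Q_{(s,a)}((v,b))=p^{a,b}_{L,s}(v)\) and the implementation discussion preceding the statement. I call the estimator for all \(Q\) targets \((v,b)\) with precision \(\eta\asymp\beta\), threshold \(\tau_v\asymp\beta\deg(v)/m\), and per-call failure probability small enough for a union bound over all \(NQ^2\) calls. Since \(D(v,b)=2\deg(v)\), each call costs
\[
\tO\!\left(\frac{L}{\beta}\sqrt{\frac{\deg(v)}{\beta\deg(v)/m}}\right)
=\tO(L\sqrt m\,\beta^{-3/2}),
\]
independent of the degree. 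Summed over \(N\asymp\beta^{-2}\) samples this gives \(\tOQ(L\sqrt m\,\beta^{-7/2}\log(1/\delta))\), matching the claim. The estimated \(\widehat g_a(v)\) is formed from the ratio of the largest estimate to the sum of the estimates, clipped to \([0,1]\).

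The main obstacle is the error analysis of \(\widehat g_a(v)\). The additive slack \(\eta\tau_v\) is only small relative to \(u_{L,s}(v)\) when \(u_{L,s}(v)\gtrsim\tau_v\asymp\beta\pi(v)\). At endpoints with small \(u_{L,s}(v)\), the ratio estimate can be arbitrarily wrong. I would bound the expected contribution of such endpoints under \(v\sim u_{L,s}\) by their \(u\)-mass:
\[
\sum_v u_{L,s}(v)\,\mathbf 1\!\left[u_{L,s}(v)<c\,\beta\pi(v)\right]\le c\beta\sum_v\pi(v)=c\beta .
\]
This is at most \(\beta/16\) for a small constant \(c\), so since \(g_a\) and \(\widehat g_a\) lie in \([0,1]\), these endpoints shift the mean by only \(O(c\beta)\).

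At the remaining endpoints, every estimate has error at most \(\eta(p^{a,b}+\tau_v)\le O(\eta)\,u_{L,s}(v)\) after rescaling constants. Then both the max and the sum are estimated to relative accuracy \(O(\eta)\), so \(|\widehat g_a(v)-g_a(v)|=O(\eta)\le\beta/16\).

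Combining the three error sources (sampling, light endpoints, point-estimation error) gives total additive error below \(\beta/4\) with probability \(1-\delta/Q\) per label. This separates \(\operatorname{Amb}\ge\beta\) from \(\operatorname{Amb}\le\beta/10\) for the thresholded test. The expected query bound holds because the per-call expected cost is uniform in \(v\) and the number of calls is fixed.
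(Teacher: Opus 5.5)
Your proposal is correct and follows essentially the same route as the paper. The paper also runs per-label Monte Carlo over endpoints $v\sim u_{L,s}$ and makes $Q$ point estimates on the label lift with $\eta\asymp\beta$ and $\tau_v\asymp\beta\deg(v)/m$, so each call costs $\tOQ(L\sqrt m\,\beta^{-3/2})$ independently of $\deg(v)$; it then charges low-mass endpoints through $\sum_v\tau_v=O(\beta)$, uses a relative-error bound on the remaining endpoints, and finishes with Hoeffding and a union bound over the $Q$ seed labels. The one organizational difference is that the paper applies Hoeffding directly to the clipped trial outputs and absorbs light endpoints and point-estimator failures into a bias bound, whereas your union-bound-plus-Hoeffding-on-$g_a$ version additionally needs the empirical fraction of light samples to concentrate (immediate with the same $N$) and a convention such as $\widehat g_a(v)=0$ when the estimated sum is zero.
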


\begin{proof}
It is enough to estimate \(\operatorname{Amb}_L(s,a)\) to additive
\(\beta/5\) for each seed label \(a\), and then take a union bound over
the \(Q\) possible seed labels.

\smallskip
\noindent\emph{Estimator for one seed label.}
Fix \(a\in[Q]\).  Write
\[
    p_b(v)=p^{a,b}_{L,s}(v),
    \qquad
    u(v)=u_{L,s}(v)=\sum_{b=1}^Qp_b(v).
\]
For \(u(v)>0\), define
\[
    R(v)=1-\frac{\max_b p_b(v)}{u(v)},
\]
and set \(R(v)=0\) when \(u(v)=0\).  Since \(u\) is a probability distribution
on endpoints,
\[
    \operatorname{Amb}_L(s,a)
    =
    \mathbb E_{v\sim u}R(v).
\tag{A1}
\]
The distribution \(u\) is sampled by running the base lazy walk from \(s\) until
time \(T_L\) and ignoring the label carried by the path.

Choose constants \(c_\eta,c_\tau>0\), depending only on \(Q\), small enough and
set
\[
    \eta=c_\eta\beta,
    \qquad
    \tau_v
    =
    c_\tau\beta\,\frac{D(v,b)}{\widehat{\mathrm{vol}}_Q}
    =
    c_\tau\beta\,\frac{2\deg(v)}{4Qm}
    =
    c_\tau\beta\,\frac{\deg(v)}{2Qm}.
\tag{A2}
\]
The value of \(D(v,b)\) is independent of \(b\), and hence
\[
    \tau_v=\Theta_Q\!\left(\beta\frac{\deg(v)}{m}\right).
\]
The total threshold mass is
\[
    \sum_v\tau_v
    =
    c_\tau\beta
    \sum_v
    \frac{2\deg(v)}{4Qm}
    =
    \frac{c_\tau\beta}{Q}.
\tag{A3}
\]

One Monte Carlo trial samples \(v\sim u\).  For every \(b\in[Q]\), it estimates
\[
    p_b(v)=\prr^Q_{(s,a)}((v,b))
\]
using Proposition~\ref{prop:pagerank-point-estimator} on the label lift, with
threshold \(\tau_v\), relative parameter \(\eta\), and failure probability
\(\gamma\) for the point estimator, chosen below.  Let the estimates be
\(\widehat p_b(v)\) and put
\[
    \widehat u(v)=\sum_{b=1}^Q\widehat p_b(v).
\]
If \(\widehat u(v)<2Q\tau_v\), set the trial output \(\widehat R(v)=0\).
Otherwise set
\[
    \widehat R(v)
    =
    \left[
        1-\frac{\max_b\widehat p_b(v)}{\widehat u(v)}
    \right]_{[0,1]},
\tag{A4}
\]
where \([\cdot]_{[0,1]}\) denotes clipping to the interval \([0,1]\).

We bound the bias of one trial.  Call \(v\) low-mass if
\[
    u(v)<4Q\tau_v.
\]
By (A3),
\[
\Pr_{v\sim u}[v\text{ is low-mass}]
    =
    \sum_{v:\,u(v)<4Q\tau_v}u(v)
    \le
    4Q\sum_v\tau_v
    =
    4c_\tau\beta.
\tag{A5}
\]
On a non-low-mass endpoint, \(\tau_v\le u(v)/(4Q)\).  If all \(Q\) point
estimates in the trial succeed, then
\[
    |\widehat p_b(v)-p_b(v)|
    \le
    \eta(p_b(v)+\tau_v)
    \le
    2\eta u(v)
    \qquad
    \text{for every }b.
\tag{A6}
\]
Summing (A6) over \(b\) gives
\[
    |\widehat u(v)-u(v)|\le 2Q\eta u(v).
\]
After decreasing \(c_\eta\), this implies
\[
    \widehat u(v)\ge u(v)/2\ge 2Q\tau_v,
\]
so non-low-mass endpoints are not clipped to zero on the success event.  Also,
\[
    \left|
        \max_b\widehat p_b(v)-\max_b p_b(v)
    \right|
    \le
    2\eta u(v).
\]
Therefore
\[
\begin{aligned}
    \left|
        \frac{\max_b\widehat p_b(v)}{\widehat u(v)}
        -
        \frac{\max_b p_b(v)}{u(v)}
    \right|
    &\le
    \frac{2\eta u(v)}{\widehat u(v)}
    +
    \frac{\max_b p_b(v)\,|\widehat u(v)-u(v)|}
         {u(v)\widehat u(v)}        \\
    &\le
    C_Q\eta .
\end{aligned}
\tag{A7}
\]
Since clipping to \([0,1]\) is \(1\)-Lipschitz, (A7) gives
\[
    |\widehat R(v)-R(v)|\le C_Q\eta
\]
on every non-low-mass endpoint on which all \(Q\) point estimates succeed.
For a fixed sampled endpoint \(v\), the union bound over the \(Q\) endpoint
labels gives probability at most \(Q\gamma\) that at least one of these point
estimates fails.  We do not condition on all point estimators succeeding for
every sampled endpoint.  The bound \(|\widehat R(v)-R(v)|\le C_Q\eta\) is used
only on the success event for the current sampled endpoint; on the complement,
the contribution to the expectation of one Monte Carlo trial is bounded by the
failure probability, because both \(R\) and \(\widehat R\) lie in \([0,1]\).
Therefore (A5), the success-event bound
\(|\widehat R(v)-R(v)|\le C_Q\eta\), and the failure probability of the point
estimator imply
\[
    \left|
        \mathbb E\widehat R(v)-\mathbb ER(v)
    \right|
    \le
    4c_\tau\beta+C_Q\eta+Q\gamma.
\tag{A8}
\]

Let
\[
    M=\left\lceil C_Q\beta^{-2}\log\frac{4Q}{\delta}\right\rceil,
    \qquad
    \gamma=\frac{\delta}{8Q^2M}.
\tag{A9}
\]
Increasing the constant in \(M\) if necessary ensures \(Q\gamma\le c_Q\beta\).
Then choose \(c_\eta\) and \(c_\tau\) small enough so that the right-hand side of
(A8) is at most \(\beta/20\).  Average \(M\) independent trial outputs:
\[
    \widehat A_a=\frac1M\sum_{i=1}^M\widehat R_i.
\]
Hoeffding's inequality, applied to the bounded independent variables
\(\widehat R_i\in[0,1]\), gives
\[
    |\widehat A_a-\mathbb E\widehat R|
    \le
    \beta/20
\]
with probability at least \(1-\delta/(2Q)\), after the constant in \(M\) is
chosen large enough.  Combining this with (A1) and (A8), for the chosen seed
label \(a\),
\[
    |\widehat A_a-\operatorname{Amb}_L(s,a)|\le \beta/5
\tag{A10}
\]
with probability at least \(1-\delta/(2Q)\).

\smallskip
\noindent\emph{Taking the minimum over seed labels.}
Run the estimator for one seed label independently for every \(a\in[Q]\),
using the same choices of \(M,\gamma,c_\eta,c_\tau\).  By a union bound over
the \(Q\) seed labels, with probability at least \(1-\delta/2\) all estimates
satisfy (A10).  On this event,
\[
    \left|
        \min_{a\in[Q]}\widehat A_a-\mu^{\rm UG}_L(s)
    \right|
    \le
    \beta/5.
\tag{A11}
\]
The subroutine reports \textsc{HIGH} iff
\(\min_a\widehat A_a\ge\beta/2\), and reports \textsc{LOW} otherwise.  If
\(\mu^{\rm UG}_L(s)\ge\beta\), then every seed-label ambiguity is at least
\(\beta\), so the subroutine reports \textsc{HIGH}.  If
\(\mu^{\rm UG}_L(s)\le\beta/10\), then some seed label has ambiguity at most
\(\beta/10\), so the subroutine reports \textsc{LOW}.  Thus the report is
wrong with probability at most \(\delta\), after the same harmless adjustment
of the constants in \(M,\eta,\tau_v\), and \(\gamma\).

\smallskip
\noindent\emph{Query complexity.}
For every target \((v,b)\) in the label lift,
\[
    \frac{D(v,b)}{\tau_v}
    =
    \frac{2\deg(v)}
         {c_\tau\beta\,2\deg(v)/(4Qm)}
    =
    \Theta_Q\!\left(\frac{m}{\beta}\right).
\tag{A12}
\]
Since \(\eta=\Theta_Q(\beta)\), Proposition~\ref{prop:pagerank-point-estimator}
therefore gives one point estimate in expected query complexity
\[
    \tOQ\!\left(
        \frac{L}{\eta}\sqrt{\frac{D(v,b)}{\tau_v}}\log\frac1\gamma
    \right)
    =
    \tOQ\!\left(
        L\sqrt m\,\beta^{-3/2}\log\frac1\delta
    \right).
\]
The \(\beta^{-3/2}\) factor is exactly the product of
\[
    \frac1\eta=\Theta_Q(\beta^{-1})
    \qquad\text{and}\qquad
    \sqrt{\frac{D(v,b)}{\tau_v}}
    =
    \Theta_Q\!\left(\sqrt{\frac{m}{\beta}}\right),
\]
with the \(\sqrt m\) kept outside the \(\beta\)-dependence.  Each trial uses
\(Q=O_Q(1)\) point estimates and one geometric walk of expected length \(L\),
which is dominated by the point-estimation term for \(m\ge1\).  Multiplying by
\[
    M=\Theta_Q\!\left(\beta^{-2}\log\frac1\delta\right)
\]
trials and by the \(Q\) seed labels gives
\[
    \tOQ\!\left(
        L\sqrt m\,\beta^{-7/2}\log\frac1\delta
    \right).
\]
\end{proof}

\subsubsection{Sampling seeds from edges}

The weighted trace argument and the completeness bound use seeds drawn from the
stationary distribution
\[
    \pi(v)=\frac{\deg(v)}{2m}.
\]
In the adjacency-list model, almost-uniform edge sampling followed by a random
endpoint produces seeds whose distribution is pointwise close to \(\pi\).  The
constraints carried by the edges play no role in this sampling step.

\begin{proposition}[Seeds from a sampled edge]
\label{prop:ug-sampled-edge-seeds}
Suppose an oracle returns an edge \(e\in E\) from a pointwise
\((1\pm\xi)\)-almost-uniform distribution, and then choose one endpoint of
\(e\) uniformly at random.  The resulting seed distribution \(\widetilde\pi\)
on vertices satisfies
\[
    (1-\xi)\pi(v)
    \le
    \widetilde\pi(v)
    \le
    (1+\xi)\pi(v)
    \qquad\text{for every }v\in V.
\]
\end{proposition}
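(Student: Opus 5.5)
The plan is to compute \(\widetilde\pi(v)\) directly by conditioning on the sampled edge copy and then compare term by term with \(\pi(v)=\deg(v)/(2m)\). Since self-loops are excluded from the input (Definition~\ref{def:adjacency-list-model}), every edge copy \(e\) has two distinct endpoints. The seed equals \(v\) exactly when the sampled edge is incident to \(v\) and the fair coin selects \(v\) among its two endpoints. Hence
\[
    \widetilde\pi(v)
    =
    \sum_{e\in E:\,v\in e}\widetilde\mu_E(e)\cdot\frac12 .
\]
Here the sum runs over the \(\deg(v)\) edge copies incident to \(v\), counted with multiplicity. This matches the convention that degrees and \(m\) count parallel copies.

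Next I apply the pointwise bounds \((1-\xi)/m\le\widetilde\mu_E(e)\le(1+\xi)/m\) to each summand. This gives
\[
    (1-\xi)\frac{\deg(v)}{2m}
    \le
    \widetilde\pi(v)
    \le
    (1+\xi)\frac{\deg(v)}{2m},
\]
which is exactly the claim, because \(\pi(v)=\deg(v)/(2m)\). For an isolated vertex both sides are zero. Such vertices have already been deleted in the nontrivial analysis, so the statement over \(V\) is unaffected.

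There is no real obstacle here. The only point needing care is the bookkeeping of multiplicities and the absence of self-loops, which guarantees the factor \(1/2\) per incident copy. A self-loop would contribute a factor of \(1\) instead, and this would break the exact match with \(\deg(v)/(2m)\).
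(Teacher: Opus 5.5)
Your proof is correct and matches the paper's argument: write \(\widetilde\pi(v)=\sum_{e\ni v}\widetilde\mu_E(e)\cdot\frac12\) and apply the pointwise \((1\pm\xi)/m\) bounds to each of the \(\deg(v)\) incident edge copies. Your remarks on multiplicities and on the exclusion of input self-loops spell out bookkeeping that the paper leaves implicit.
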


\begin{proof}
If the sampled edge distribution is pointwise \((1\pm\xi)\)-close to uniform,
then the probability that the random endpoint is \(v\) is
\[
    \widetilde\pi(v)
    =
    \sum_{e\ni v}\Pr[e]\cdot\frac12,
\]
which differs from \(\deg(v)/(2m)\) by the same multiplicative factor.  The
constraints carried by the edge are irrelevant for this conversion.
\end{proof}

Combining Proposition~\ref{prop:ug-sampled-edge-seeds} with
Lemma~\ref{lem:almost-uniform-edge-sampler}, for every fixed constant
\(\xi\in(0,1)\), one seed with distribution pointwise
\((1\pm\xi)\)-close to \(\pi\) can be generated in expected query complexity
\(\tO(n/\sqrt m)\).

\subsection{The tolerant tester}
\label{subsec:ug-tester}

We now combine Lemma~\ref{lem:ug-completeness},
Lemma~\ref{lem:ug-multiscale-ambiguity},
Proposition~\ref{prop:ug-ambiguity-tester}, and
Proposition~\ref{prop:ug-sampled-edge-seeds}.

\subsubsection{Tester definition}

The tester enumerates the dyadic scales promised by
Lemma~\ref{lem:ug-multiscale-ambiguity}.  At each scale it samples seeds from a
distribution pointwise close to \(\pi\), estimates their ambiguity, and rejects
if too many seeds look active.

Let
\[
    \mathcal Q_\rho
    =
    \{2^j\rho^2:0\le j<\lceil\log_2(\rho^{-2})\rceil\}\cup\{1\}.
\]
For each \(r\in\mathcal Q_\rho\), set
\[
    L_r=C_Q\rho^{-3}r^{1/2}\log n,
\]
\[
    \alpha_r=c_Q\rho r^{-1/2},
\]
\[
    N_r=\left\lceil C_Qr^{-1}\log(10|\mathcal Q_\rho|)\right\rceil,
\]
and
\[
    \delta_r=c_Qr.
\]
Choose the constants in this paragraph after the constants in
Lemma~\ref{lem:ug-completeness}, Lemma~\ref{lem:ug-multiscale-ambiguity}, and
Proposition~\ref{prop:ug-ambiguity-tester} have been fixed.  Since \(Q\) is
fixed, we may decrease \(c_Q\) and increase \(C_Q\) finitely many times, by
factors depending only on \(Q\), so that the constants in \(\alpha_r\),
\(\delta_r\), \(N_r\), and the rejection threshold are compatible.  We keep the
same notation \(c_Q,C_Q\) for the resulting constants.

The parameter flow from Lemma~\ref{lem:ug-trace-to-geometric} to these
geometric scales is as follows.  If the analytic peeling step finds a residual
\(R\) of mass
\(z=\pi(R)\) with
\(\beta_{\rm UG}(K_R)\ge\theta=\Theta_Q(\rho z^{-1/2})\), then
Lemma~\ref{lem:ug-trace-to-geometric} uses
\[
\begin{array}{rcl}
    k
    &=&
    \Theta_Q(\theta^{-2}\log n)
    =
    \Theta_Q(\rho^{-2}z\log n),\\[2mm]
    \omega
    &=&
    \Theta_Q(\theta)
    =
    \Theta_Q(\rho z^{-1/2}),\\[2mm]
    L
    &=&
    \Theta_Q\!\left(\dfrac{k}{z\omega}\right)
    =
    \Theta_Q(\rho^{-3}z^{1/2}\log n).
\end{array}
\]
Choosing the dyadic scale \(r\) with \(z\le r\le2z\) gives
\[
\begin{array}{rcl}
    L_r
    &=&
    \Theta_Q(\rho^{-3}r^{1/2}\log n),\\[1mm]
    \alpha_r
    &=&
    \Theta_Q(\rho r^{-1/2}),\\[1mm]
    \dfrac{L_r}{\alpha_r}
    &=&
    \Theta_Q(\rho^{-4}r\log n),\\[3mm]
    L_r\alpha_r^{-7/2}
    &=&
    \Theta_Q(\rho^{-13/2}r^{9/4}\log n).
\end{array}
\]
Thus the Markov bound in the yes case loses exactly a factor
\(\varepsilon L_r/\alpha_r=\Theta_Q(\varepsilon\rho^{-4}r\log n)\), which is
compared with the active mass \(\Theta_Q(r)\); this is the source of the gap
condition \(\varepsilon\log n\le c_Q\rho^4\).  The last line is the per-seed
PageRank-estimation dependence that becomes
\(\tOQ(\sqrt m\rho^{-13/2}r^{5/4})\) after multiplying by
\(N_r=\Theta_Q(r^{-1}\log|\mathcal Q_\rho|)\).

Fix a sufficiently small constant \(\xi\in(0,1/10)\).  At scale \(r\), generate
\(N_r\) independent seeds from the distribution \(\widetilde\pi\) obtained in
Proposition~\ref{prop:ug-sampled-edge-seeds} by sampling an almost-uniform edge
and then a random endpoint.  For each seed \(s\), run
\[
    \textsc{UGAmbiguityTest}(s,L_r,\alpha_r,\delta_r).
\]
Declare the seed \emph{active} iff the subroutine reports \textsc{HIGH}.  The
graph tester rejects if, for some scale \(r\), at least \(c_QrN_r\) seeds are
active.  If no scale rejects, it accepts.

\begin{proof}[Proof of Theorem~\ref{thm:ug-tolerant}]
It suffices to prove the theorem for \(0<\varepsilon<\rho<1/10\).  Let
\(\rho_0=1/10\).  If \(\rho\ge\rho_0\), run the tester with the fixed parameter
\(\rho_0\) and the same value of \(\varepsilon\).  In the nontrivial case
\(m>0\), we have \(n\ge2\).  Hence, by choosing the theorem constant \(c_Q\)
smaller by a factor depending only on \(Q\), the hypothesis
\(\varepsilon\log n\le c_Q\rho^4\) implies both
\[
    \varepsilon<\rho_0
    \qquad\text{and}\qquad
    \varepsilon\log n\le c'_Q\rho_0^4,
\]
where \(c'_Q\) is the constant required by the \(\rho_0\)-tester.  Since
\(\tau_{\rm UG}(\mathcal U)\ge\rho\) implies
\(\tau_{\rm UG}(\mathcal U)\ge\rho_0\), soundness follows from the
\(\rho_0\)-case, and the stated query bound changes only by a \(Q\)-dependent
constant factor.  Hence we assume \(0<\varepsilon<\rho<1/10\) below.

We prove completeness and soundness scale by scale.

\subsubsection{Completeness}

The yes-case analysis uses Lemma~\ref{lem:ug-completeness} to show that active
seeds are rare at every scale.  The parameter choice makes this rarity smaller
than the rejection threshold after accounting for estimator errors.

Assume
\[
    \tau_{\rm UG}(\mathcal U)\le\varepsilon.
\]
By Lemma~\ref{lem:ug-completeness}, for every scale \(r\),
\[
    \mathbb E_{s\sim\pi}\mu^{\rm UG}_{L_r}(s)
    \le
    C_Q\varepsilon L_r.
\]
By Proposition~\ref{prop:ug-sampled-edge-seeds},
\[
    \mathbb E_{s\sim\widetilde\pi}\mu^{\rm UG}_{L_r}(s)
    \le
    (1+\xi)C_Q\varepsilon L_r.
\]
A seed can be declared active only if either
\[
    \mu^{\rm UG}_{L_r}(s)\ge \alpha_r/10
\]
or the ambiguity subroutine for that seed reports \textsc{HIGH} in error.
Therefore
\[
\begin{aligned}
    p^{\rm yes}_r
    &\le
    C_Q\frac{(1+\xi)\varepsilon L_r}{\alpha_r}
    +
    \delta_r                                      \\
    &=
    C_Q
    \frac{
        \varepsilon \rho^{-3}r^{1/2}\log n
    }{
        \rho r^{-1/2}
    }
    +
    \delta_r                                      \\
    &=
    C_Q\varepsilon\rho^{-4}r\log n+\delta_r.
\end{aligned}
\]
If
\[
    \varepsilon\log n\le c_Q\rho^4
\]
and the constants are chosen with enough separation, then
\[
    p^{\rm yes}_r\le c_Qr/4.
\]
A Chernoff bound and a union bound over
\(|\mathcal Q_\rho|=O(\log(1/\rho))\) scales show that the tester accepts in
the yes case with probability at least \(2/3\).

\subsubsection{Soundness}

The no-case analysis invokes Lemma~\ref{lem:ug-multiscale-ambiguity} to find
one scale where active seeds have noticeable probability.  A Chernoff bound then
makes that scale reject with constant probability.

Assume
\[
    \tau_{\rm UG}(\mathcal U)\ge\rho.
\]
By Lemma~\ref{lem:ug-multiscale-ambiguity}, there exists a scale
\(r\in\mathcal Q_\rho\) such that
\[
    \Pr_{s\sim\pi}
    [
        \mu^{\rm UG}_{L_r}(s)\ge \alpha_r
    ]
    \ge
    c_Qr.
\]
By Proposition~\ref{prop:ug-sampled-edge-seeds}, the same event has
\(\widetilde\pi\)-probability at least \((1-\xi)c_Qr\).  Taking \(\xi\) and
\(\delta_r\) sufficiently small, a seed satisfying
\(\mu^{\rm UG}_{L_r}(s)\ge \alpha_r\) makes the ambiguity subroutine report
\textsc{HIGH}, and hence is declared active, except with probability at most
\(\delta_r\).  Therefore
\[
    p^{\rm no}_r\ge c_Qr/2.
\]
The rejection threshold at scale \(r\) is a smaller constant multiple of
\(rN_r\).  Since
\[
    N_r=\Theta_Q(r^{-1}\log|\mathcal Q_\rho|),
\]
a multiplicative Chernoff bound implies that this scale rejects with
probability at least \(1-1/(10|\mathcal Q_\rho|)\).  Thus the full tester
rejects with probability at least \(2/3\).

\subsubsection{Query complexity}

The cost has two independent parts, namely generating seeds from a distribution
pointwise close to \(\pi\) in the general graph model and estimating ambiguity
for each sampled seed.  The dyadic scale sum determines the final dependence on
\(m,n\), and \(\rho\).

At scale \(r\), the number of seeds is
\[
    N_r=\Theta_Q(r^{-1}\log|\mathcal Q_\rho|).
\]
One seed costs
\[
    \tO(n/\sqrt m)
\]
queries to generate by Proposition~\ref{prop:ug-sampled-edge-seeds}.  Since
\[
    \sum_{r\in\mathcal Q_\rho} r^{-1}=O(\rho^{-2}),
\]
the total seed-generation cost is
\[
    \tOQ\left(\frac{n}{\sqrt m}\rho^{-2}\right).
\]

For the ambiguity-estimation cost, Proposition~\ref{prop:ug-ambiguity-tester}
gives per seed
\[
    \tOQ
    \left(
        L_r\sqrt m\,\alpha_r^{-7/2}
    \right).
\]
Using
\[
    L_r=C_Q\rho^{-3}r^{1/2}\log n,
    \qquad
    \alpha_r=c_Q\rho r^{-1/2},
\]
this is
\[
    \tOQ
    \left(
        \sqrt m\,
        \rho^{-3}r^{1/2}
        \cdot
        \rho^{-7/2}r^{7/4}
    \right)
    =
    \tOQ
    \left(
        \sqrt m\,\rho^{-13/2}r^{9/4}
    \right).
\]
Multiplying by \(N_r=\tOQ(r^{-1})\), the cost at scale \(r\) is
\[
    \tOQ
    \left(
        \sqrt m\,\rho^{-13/2}r^{5/4}
    \right).
\]
The dyadic sum of \(r^{5/4}\) over \(r\in[\rho^2,1]\) is \(O(1)\), so the
total ambiguity-estimation cost is
\[
    \tOQ
    \left(
        \sqrt m\,\rho^{-13/2}
    \right).
\]
Combining the two contributions proves the stated query bound.
\end{proof}
\section{An Improved Tolerant Tester for Bipartiteness}
\label{sec:bipartiteness-special-case}

This section proves the improved tolerant bipartiteness tester,
Theorem~\ref{tol:thm:main}.  Bipartiteness is the two-label Unique Games
instance in which every edge carries the transposition, but the proof below
uses the additional signed structure of parity.  The estimator and the seed
sampler are exactly the corresponding Unique Games routines from
Section~\ref{sec:tolerant-unique-games}; the new work is the signed residual
lemma in Subsection~\ref{tol:sec:signed-residual}.

\subsection{Setup and the overlap statistic}
\label{tol:sec:bip-setup}

We use the adjacency-list model of Definition~\ref{def:adjacency-list-model}
and the lazy walk convention fixed in Section~\ref{sec:tolerant-unique-games}.
Thus, in the nontrivial case, every vertex of the working graph
\(G=(V,E)\) has positive degree and
\[
    \pi(v)=\frac{\deg(v)}{2m}.
\]
For a two-coloring \(\chi:V\to\{\pm1\}\), an edge is bad if it is
monochromatic.  We write
\[
\tau(G)=\min_{\chi:V\to\{\pm1\}}
\frac{|\{uv\in E:\chi(u)=\chi(v)\}|}{m}.
\]

A lazy parity walk flips its parity bit exactly on non-lazy edge traversals.
Let \(T_L\) be the independent geometric time with mean \(L\), as in
Section~\ref{sec:tolerant-unique-games}.  For a start vertex \(s\), define
\[
 p^+_{L,s}(v)=\Prb[X_{T_L}=v,\ \text{parity even}\mid X_0=s],
 \qquad
 p^-_{L,s}(v)=\Prb[X_{T_L}=v,\ \text{parity odd}\mid X_0=s],
\]
and the even/odd overlap
\[
    \mu_L(s)=\sum_{v\in V}\min\{p^+_{L,s}(v),p^-_{L,s}(v)\}.
\]
For the two-label all-transposition Unique Games instance, this is exactly the
ambiguity statistic \(\mu^{\rm UG}_L(s)\) from Section~\ref{sec:tolerant-unique-games}.

\begin{lemma}[weighted completeness for parity overlap]
\label{tol:lem:weighted-completeness}
If \(\tau(G)\le\varepsilon\), then for every \(L\ge1\),
\[
    \E_{s\sim\pi}\mu_L(s)\le \varepsilon L.
\]
Consequently, for every \(a>0\),
\[
    \Prb_{s\sim\pi}[\mu_L(s)\ge a]\le \frac{\varepsilon L}{a}.
\]
\end{lemma}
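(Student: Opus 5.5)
Fix an optimal two-coloring \(\chi:V\to\{\pm1\}\) with at most \(\varepsilon m\) monochromatic (bad) edges, and argue pathwise exactly as in Lemma~\ref{lem:ug-completeness}, but now tracking the constant explicitly so that it equals \(1\). For a seed \(s\) and a lazy-walk prefix \(P:s\leadsto v\) that never traverses a bad edge, every non-lazy step flips both the parity bit and the color, so the parity of \(P\) is even iff \(\chi(v)=\chi(s)\). Hence, for each endpoint \(v\), all clean prefixes ending at \(v\) have the same parity, namely the one determined by \(\chi(v)\chi(s)\). Writing \(p^{\pm}_{L,s}(v)=c^{\pm}(v)+d^{\pm}(v)\), where \(c\) collects clean prefixes and \(d\) collects prefixes that traverse at least one bad edge, one of \(c^+(v),c^-(v)\) vanishes, so
\[
    \min\{p^+_{L,s}(v),p^-_{L,s}(v)\}\le d^+(v)+d^-(v).
\]
Summing over \(v\) gives
\[
    \mu_L(s)\le\Prb[\text{the prefix }(X_0,\dots,X_{T_L})\text{ traverses a bad edge}\mid X_0=s].
\]

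Next average over \(s\sim\pi\). The probability that the prefix hits a bad edge is at most the expected number \(N\) of bad-edge traversals among steps \(1,\dots,T_L\). Under a stationary start, \(X_{t-1}\sim\pi\) for every \(t\). The probability that step \(t\) traverses a given edge copy \(uv\) in either direction is
\[
    \pi(u)\frac{1}{2\deg(u)}+\pi(v)\frac{1}{2\deg(v)}=\frac{1}{2m}.
\]
So one step traverses a bad edge with probability at most \(\varepsilon m\cdot\frac{1}{2m}=\varepsilon/2\). Since \(T_L\) is independent of the walk, \(\E N=\sum_{t\ge1}\Prb[T_L\ge t]\cdot(\varepsilon/2)\le(\varepsilon/2)\E T_L=\varepsilon L/2\le\varepsilon L\). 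This yields \(\E_{s\sim\pi}\mu_L(s)\le\varepsilon L\). The tail bound then follows from Markov's inequality, since \(\mu_L\ge0\).

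The only delicate point is the per-step constant and the use of independence of \(T_L\) to apply Wald-type summation. The rest is identical to the Unique Games completeness argument. I expect no real obstacle; the main care needed is to handle parallel edges with multiplicity and to note that lazy steps never cross a bad edge.
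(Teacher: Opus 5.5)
Your proposal is correct and follows the paper's own argument step for step. Clean prefixes have parity forced by \(\chi(s)\chi(v)\), so the overlap is bounded by the probability of traversing a bad edge. Stationarity gives a per-step bad-traversal probability of \(|B|/(2m)\le\varepsilon/2\), and \(\E T_L=L\) finishes the bound, with Markov's inequality giving the tail. Your explicit clean/dirty decomposition and the Wald-type summation using independence of \(T_L\) simply spell out steps the paper states briefly.
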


\begin{proof}
Let \(\chi\) be a cut with at most \(\varepsilon m\) bad edges, and let
\(B\) be its bad-edge set.  If a path from \(s\) to \(v\) avoids \(B\), then
its parity is forced by \(\chi(s)\chi(v)\).  Hence, for every endpoint \(v\),
all mass contributing to the less likely parity is supported on paths that hit
\(B\), and therefore
\[
    \mu_L(s)
    \le
    \Prb[\text{the walk hits }B\text{ before }T_L\mid X_0=s].
\]
Starting from stationarity, one lazy step traverses a bad edge with probability
\[
\sum_{uv\in B}\left(\pi(u)\frac{1}{2\deg(u)}+
                    \pi(v)\frac{1}{2\deg(v)}\right)
    =\frac{|B|}{2m}\le \frac{\varepsilon}{2}\le\varepsilon.
\]
Since \(\E T_L=L\), the expected number of bad-edge traversals before
\(T_L\) is at most \(\varepsilon L\), which bounds the hit probability in
expectation over \(s\sim\pi\).  The tail bound is Markov's inequality.
\end{proof}

\subsection{The signed residual lemma behind the improved parameters}
\label{tol:sec:signed-residual}

This subsection replaces the general Unique Games trace-ambiguity argument by
one signed statement.  A residual set \(R\) whose signed trace chain is far from
bipartite forces constant even/odd overlap after \(O(\beta(K_R)^{-2}\log n)\)
trace returns, and the memorylessness of the geometric stopping time transfers
that overlap to the statistic \(\mu_L\).

\subsubsection{Signed trace kernel and signed Cheeger input}

For a nonempty set \(R\subseteq V\), let
\(\tau_R=\min\{t\ge1:X_t\in R\}\).  For the lazy parity walk started in \(R\),
let \(\chi_R\in\{\pm1\}\) be the parity sign accumulated by time \(\tau_R\), and set
\[
    K_R^\sigma(u,v)=
    \Prb[X_{\tau_R}=v,\ \chi_R=\sigma\mid X_0=u],
    \qquad \sigma\in\{\pm1\}.
\]
The ordinary trace kernel is \(K_R^++K_R^-\), and the signed trace operator is
\[
    B_R=K_R^+-K_R^-.
\]
The trace chain is reversible with respect to
\(\pi_R(u)=\pi(u)/\pi(R)=\deg(u)/\vol(R)\).  This is the \(Q=2\), all-transposition
specialization of Lemma~\ref{lem:ug-trace-chain-basic}; reversing a trace
excursion preserves its parity sign.

A \emph{signed reversible kernel} on a finite set \(U\) with stationary measure
\(\nu\) is a pair \(K=(K^+,K^-)\) of nonnegative kernels such that
\(K^++K^-\) is row-stochastic and
\[
    \nu(u)K^\sigma(u,v)=\nu(v)K^\sigma(v,u)
    \qquad(\sigma\in\{\pm1\}).
\]
For \(\emptyset\ne S\subseteq U\) and \(x:S\to\{\pm1\}\), define
\[
\beta_K^\nu(S,x)=
\frac{1}{\nu(S)}
\sum_{u\in S}\nu(u)
\left(
\sum_{v\notin S}\sum_{\sigma}K^\sigma(u,v)
+
\sum_{v\in S}\sum_{\sigma}K^\sigma(u,v)\,
    \1[x(v)\ne \sigma x(u)]
\right),
\]
and
\[
    \beta(K)=
    \min_{\emptyset\ne S\subseteq U,\ x:S\to\{\pm1\}}\beta_K^\nu(S,x).
\]
Thus \(\beta_K^\nu(S,x)\) is the probability that one signed step, started from
\(\nu\) conditioned on \(S\), either leaves \(S\) or violates the signing \(x\).

\begin{theorem}[weighted signed dual Cheeger theorem]
\label{tol:thm:signed-dual-cheeger}
There is an absolute constant \(c_{\mathrm{Ch}}>0\) such that the following
holds.  Let \(K=(K^+,K^-)\) be a signed reversible kernel on \((U,\nu)\), and
let
\[
    (Bf)(u)=\sum_{v\in U}\bigl(K^+(u,v)-K^-(u,v)\bigr)f(v).
\]
Assume that \(B\) is self-adjoint on \(L^2(\nu)\) and
\(\operatorname{spec}(B)\subseteq[0,1]\).  If \(\beta(K)\ge\theta\), then
\[
    \langle f,(I-B)f\rangle_\nu
    \ge c_{\mathrm{Ch}}\theta^2\|f\|_{2,\nu}^2
    \qquad\text{for every }f\in L^2(\nu).
\]
Consequently,
\[
    \|B^k f\|_{2,\nu}
    \le (1-c_{\mathrm{Ch}}\theta^2)^k\|f\|_{2,\nu}
    \qquad(k\ge0).
\]
\end{theorem}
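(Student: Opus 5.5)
Reduce to the double cover and run Trevisan's dual Cheeger argument on it. On the doubled space \(\widetilde U=U\times\{\pm1\}\) with measure \(\widetilde\nu(u,\sigma)=\nu(u)/2\), define the lifted kernel
\[
    \widetilde K((u,\sigma),(v,\sigma\sigma'))=K^{\sigma'}(u,v).
\]
It is row-stochastic and reversible with respect to \(\widetilde\nu\), because each \(K^\sigma\) is \(\nu\)-reversible. Any \(f\in L^2(\nu)\) lifts to the antisymmetric function \(\widetilde f(u,\sigma)=\sigma f(u)\). A direct check gives \(\widetilde K\widetilde f=\widetilde{Bf}\). Hence the energy identity
\[
    \langle f,(I-B)f\rangle_\nu
    =\frac12\sum_{u,v}\sum_{\sigma}\nu(u)K^\sigma(u,v)\bigl(f(u)-\sigma f(v)\bigr)^2
    +\frac12\sum_u\nu(u)\Bigl(1-\sum_{v,\sigma}K^\sigma(u,v)\Bigr)f(u)^2 .
\]
The second term vanishes since \(K^++K^-\) is row-stochastic. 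So \(\mathcal E(f):=\langle f,(I-B)f\rangle_\nu\) is one half of the \(\nu\)-weighted signed edge energy. I would verify this identity first; it uses only the symmetry \(\nu(u)K^\sigma(u,v)=\nu(v)K^\sigma(v,u)\).

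The main step is the lower bound \(\mathcal E(f)\ge c\theta^2\|f\|_{2,\nu}^2\). I would follow Trevisan's threshold rounding with \(g=f^2\) (after normalizing \(\|f\|_{2,\nu}=1\)). For \(t>0\), set \(S_t=\{u:f(u)^2>t\}\) and \(x_t(u)=\operatorname{sign}f(u)\) on \(S_t\). For a transition \(u\to v\) with sign \(\sigma\), the transition counts against \((S_t,x_t)\), either as leaving or as violating, exactly when \(t\) lies in an interval of length at most \(|f(u)^2-f(v)^2|\) if \(\operatorname{sign}f(v)=\sigma\operatorname{sign}f(u)\). Otherwise it counts for all \(t\) up to \(\max(f(u)^2,f(v)^2)\), and then \(|f(u)-\sigma f(v)|=|f(u)|+|f(v)|\). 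In both cases the length of the bad \(t\)-interval is at most
\[
    |f(u)-\sigma f(v)|\,\bigl(|f(u)|+|f(v)|\bigr).
\]
Integrating over \(t\), using \(\int_0^\infty\nu(S_t)\,dt=1\) and \(\beta^\nu_K(S_t,x_t)\ge\theta\) for every nonempty \(S_t\), gives
\[
    \theta\le\sum_{u,v,\sigma}\nu(u)K^\sigma(u,v)\,|f(u)-\sigma f(v)|\,\bigl(|f(u)|+|f(v)|\bigr).
\]
Cauchy--Schwarz, with reversibility used to bound \(\sum\nu(u)K^\sigma(u,v)(|f(u)|+|f(v)|)^2\le4\), then yields \(\theta\le O(\sqrt{\mathcal E(f)})\), that is, \(\mathcal E(f)\ge c_{\mathrm{Ch}}\theta^2\).

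The care needed here is the orientation bookkeeping. \(\beta^\nu_K\) counts only transitions out of \(u\in S\), and a transition whose endpoints both lie in \(S_t\) has to be charged exactly on the violation event. I would check that each transition is charged only on its bad \(t\)-interval; overcounting is harmless. I expect this case analysis to be the main obstacle, although it is standard.

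For the consequence, \(B\) is self-adjoint with \(\operatorname{spec}(B)\subseteq[0,1]\). The quadratic form bound says every eigenvalue of \(B\) is at most \(1-c_{\mathrm{Ch}}\theta^2\), and by hypothesis every eigenvalue is nonnegative. Hence \(\|B\|_{\nu}\le 1-c_{\mathrm{Ch}}\theta^2\), and the spectral theorem gives
\[
    \|B^kf\|_{2,\nu}\le(1-c_{\mathrm{Ch}}\theta^2)^k\|f\|_{2,\nu}.
\]
This is the one place nonnegativity of the spectrum is needed: it rules out eigenvalues near \(-1\).
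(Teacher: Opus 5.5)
Your proposal is correct and is essentially the paper's argument. The paper only sketches the reduction to the double cover and the coarea/dual-Cheeger sweep on antisymmetric functions, whose level sets are graphs of signings. Your steps supply exactly the details that sketch leaves out: the energy identity, the threshold rounding on \(f^2\) with \(x_t=\operatorname{sign} f\), the Cauchy--Schwarz bound (which gives \(c_{\mathrm{Ch}}=1/8\)), and the use of \(\operatorname{spec}(B)\subseteq[0,1]\) to pass from the quadratic-form bound to the operator-norm contraction.
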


The theorem is the standard bipartiteness-ratio inequality in signed form.  It
follows by passing to the double cover \(U\times\{\pm1\}\), where a signed step
of sign \(\sigma\) maps \((u,a)\) to \((v,a\sigma)\), and applying the ordinary
coarea/dual-Cheeger sweep to antisymmetric functions.  The resulting level sets
are exactly graphs of signings.  See Trevisan~\cite{TrevisanMaxCut} and the
signed Cheeger inequalities of Lange--Liu--Peyerimhoff--Post~\cite{LangeCheeger}.

\begin{lemma}[spectral nonnegativity of the signed trace chain]
\label{tol:lem:trace-spectral-nonnegative}
For every nonempty \(R\subseteq V\), the operator \(B_R\) is self-adjoint on
\(L^2(\pi_R)\) and \(\operatorname{spec}(B_R)\subseteq[0,1]\).
\end{lemma}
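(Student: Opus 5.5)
Self-adjointness follows from the signed reversibility already recorded for the trace chain. The $Q=2$ all-transposition specialization of Lemma~\ref{lem:ug-trace-chain-basic} gives
\[
    \pi_R(u)K_R^\sigma(u,v)=\pi_R(v)K_R^\sigma(v,u)
\]
for each sign $\sigma$, because reversing an excursion preserves its parity. Hence $\pi_R(u)B_R(u,v)$ is symmetric in $(u,v)$, which is exactly self-adjointness of $B_R$ on $L^2(\pi_R)$.

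The upper bound $\operatorname{spec}(B_R)\le 1$ is also routine. Since $K_R^++K_R^-$ is a reversible Markov kernel, we have
\[
    |\langle f,B_Rf\rangle_{\pi_R}|\le\sum_{u,v}\pi_R(u)(K_R^++K_R^-)(u,v)|f(u)||f(v)|\le\|f\|_{2,\pi_R}^2
\]
by Cauchy--Schwarz together with stationarity. Thus all eigenvalues lie in $[-1,1]$.

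The substantive point is $B_R\succeq0$. I would first write $B_R$ through the lazy step. Let $P=\tfrac12(I+A)$ be the lazy walk on $V$, where $A$ is the non-lazy step. The signed one-step operator on $V$ is then $\widetilde P=\tfrac12(I-A)$, since a non-lazy step flips the sign. Split $V=R\cup R^c$ and decompose $\widetilde P$ into blocks $\widetilde P_{RR}$, $\widetilde P_{RR^c}$, $\widetilde P_{R^cR}$, $\widetilde P_{R^cR^c}$. Summing over excursions gives the signed Schur complement
\[
    B_R=\widetilde P_{RR}+\widetilde P_{RR^c}\,(I-\widetilde P_{R^cR^c})^{-1}\widetilde P_{R^cR}.
\]
Here $\widetilde P_{R^cR^c}$ has spectral radius below $1$ on each component that meets $R$ (the excursion series converges absolutely). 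On components that miss $R$ the chain is irrelevant.

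Now $\widetilde P=\tfrac12(I-A)$ is self-adjoint on $L^2(\pi)$. Its spectrum lies in $[0,1]$ because $A$, a reversible Markov operator, has spectrum in $[-1,1]$. The Schur complement of a positive semidefinite operator $M$ with respect to the block $I-\widetilde P_{R^cR^c}$ preserves positive semidefiniteness. The cleanest route is to work with
\[
    M=I-\widetilde P\succeq0,
\]
whose Schur complement onto $R$ is $I_R-B_R$. The Schur complement of a positive semidefinite matrix onto a block, when the complementary block is invertible, is positive semidefinite; this gives $B_R\preceq I$, which we already have. For $B_R\succeq0$, I would instead use the variational characterization. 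For $f$ on $R$,
\[
    \langle f,B_Rf\rangle_{\pi_R}\cdot\pi(R)=\min_{g|_R=f}\bigl(\langle g,\widetilde P g\rangle_\pi+\langle g_{R^c},(I-\widetilde P)_{R^cR^c}g_{R^c}\rangle_\pi\bigr)-\text{(cross terms)}.
\]
More simply, I would write $B_R=\widetilde P_{RR}+X^*(I-\widetilde P_{R^cR^c})^{-1}X$ with $X=\widetilde P_{R^cR}$, using adjointness in $L^2(\pi)$. The second term is positive semidefinite because $I-\widetilde P_{R^cR^c}\succeq0$ and is invertible. The first term $\widetilde P_{RR}$ is a compression of the positive semidefinite operator $\widetilde P$, hence positive semidefinite. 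Rescaling from $\pi$ to $\pi_R$ only changes the inner product by the constant $\pi(R)$, so $B_R\succeq0$.

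The main obstacle is the bookkeeping: justifying the Schur-complement formula via the excursion series, and handling components where $I-\widetilde P_{R^cR^c}$ might be singular. I would handle that singular case by restricting to components that meet $R$. On such a component, every vertex of $R^c$ reaches $R$, so the substochastic block has spectral radius below $1$.
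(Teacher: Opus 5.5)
Your proof is correct, but it takes a different route from the paper's. The paper never relates $B_R$ to the one-step walk. It uses only that the trace chain is itself lazy, $K_R^+(u,u)\ge 1/2$. Hence $H_R^+=2K_R^+-I$ and $H_R^-=2K_R^-$ are nonnegative, with $H_R^++H_R^-$ row-stochastic and reversible. Then $B_R=\tfrac12(I+C_R)$ with $C_R=H_R^+-H_R^-$. Since $C_R$ is the restriction of a reversible Markov operator on the double cover to antisymmetric functions, $\|C_R\|_{2\to2}\le 1$, and this places $\operatorname{spec}(B_R)$ in $[0,1]$ in one step.

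You instead express $B_R$ through the excursion series in terms of the signed one-step operator $\widetilde P=\tfrac12(I-A)$. This operator is positive semidefinite on $L^2(\pi)$ because the base walk is lazy. You then split $B_R=\widetilde P_{RR}+X^*(I-\widetilde P_{R^cR^c})^{-1}X$ into a compression of $\widetilde P$ plus a Gram-type term. The ingredients check out:
\begin{itemize}
\item reversibility gives $\widetilde P_{RR^c}=\widetilde P_{R^cR}^{*}$ in $L^2(\pi)$;
\item as a compression of $\widetilde P$, the block $\widetilde P_{R^cR^c}$ has spectrum in $[0,1]$;
\item on components meeting $R$ its spectral radius is below $1$, so $I-\widetilde P_{R^cR^c}$ is positive definite there.
\end{itemize}
Your restriction to such components is genuinely needed: on a bipartite component $C$ avoiding $R$, the block $I-\widetilde P_C=\tfrac12(I+A_C)$ is singular.

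What your route buys is a single block identity. It exhibits $I-B_R$ as the Schur complement of $I-\widetilde P$, giving the upper bound, and $B_R$ as a compression plus a Gram term, giving the lower bound. It also makes visible that positivity of the signed one-step operator passes to every trace. The paper's route is shorter and avoids exactly the bookkeeping you flag as the main obstacle: deriving the excursion formula and arguing invertibility component by component.

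One cleanup: the displayed ``variational characterization'' with unspecified cross terms is not a correct identity as written and plays no role. Drop it and keep the explicit decomposition.
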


\begin{proof}
Self-adjointness is the signed reversibility of the trace chain.  Also
\(K_R^+(u,u)\ge1/2\), because the first lazy holding step immediately returns
to \(R\) with positive sign.  Put
\[
    H_R^+(u,v)=2K_R^+(u,v)-\1[u=v],
    \qquad
    H_R^-(u,v)=2K_R^-(u,v).
\]
Then \(H_R^\pm\) are nonnegative and \(H_R^++H_R^-\) is row-stochastic.  If
\(C_R=H_R^+-H_R^-\), then \(B_R=(I+C_R)/2\).  The double-cover kernel built
from \(H_R^\pm\) is an ordinary reversible Markov operator, hence an
\(L^2\)-contraction; on antisymmetric functions its restriction is \(C_R\).
Thus \(\|C_R\|_{2\to2}\le1\), and self-adjointness gives
\(\operatorname{spec}(C_R)\subseteq[-1,1]\).  Therefore
\(\operatorname{spec}(B_R)\subseteq[0,1]\).
\end{proof}

\subsubsection{Peeling}

\begin{definition}[volume-peelability]
\label{tol:def:volume-peelability}
Let \(R\subseteq V\) be nonempty.  A nonempty set \(S\subseteq R\) with signing
\(x:S\to\{\pm1\}\) is \(\eta\)-peelable in \(R\) if
\[
2\,|\{uv\in E(G[R]):u,v\in S,\ x(u)=x(v)\}|
+|E(S,R\setminus S)|
\le \eta\vol(S).
\]
\end{definition}

\begin{lemma}[low signed trace ratio implies peelability]
\label{tol:lem:low-trace-peel}
If \(\beta(K_R)<\eta/4\), then there exists an \(\eta\)-peelable set in \(R\).
\end{lemma}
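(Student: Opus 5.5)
The plan is to mirror the proof of Lemma~\ref{lem:ug-tol-low-trace-peel}, specialized to signs, and to track the factor \(2\) carried by the internal term in Definition~\ref{tol:def:volume-peelability}. By the definition of \(\beta(K_R)\) as a minimum, we first fix a nonempty \(S\subseteq R\) and a signing \(x:S\to\{\pm1\}\) with \(\beta^{\pi_R}_{K_R}(S,x)<\eta/4\). We then show that this same pair \((S,x)\) is \(\eta\)-peelable. Write \(b_{\rm int}\) for the number of monochromatic edges of \(G[R]\) with both endpoints in \(S\), and \(b_\partial=|E(S,R\setminus S)|\).

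The key step is to lower bound the numerator of \(\beta^{\pi_R}_{K_R}(S,x)\) using only one-step trace excursions. From \(u\in R\), a non-lazy first step along an edge copy \(uv\) with \(v\in R\) is already a complete trace step. It has probability \(1/(2\deg(u))\) and carries sign \(-1\). Weighting by \(\pi_R(u)=\deg(u)/\vol(R)\), each such oriented transition contributes \(1/(2\vol(R))\).

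\begin{itemize}[leftmargin=*]
\item For a monochromatic internal edge \(uv\) with \(u,v\in S\) and \(x(u)=x(v)\), the sign is \(\sigma=-1\), so \(x(v)\ne\sigma x(u)\). The step therefore violates \(x\), and both orientations \(u\to v\) and \(v\to u\) count.
\item For a boundary edge with \(u\in S\) and \(v\in R\setminus S\), the oriented step \(u\to v\) leaves \(S\) and counts once.
\end{itemize}

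All other trace transitions contribute nonnegatively, so the numerator is at least \((2b_{\rm int}+b_\partial)/(2\vol(R))\). Dividing by \(\pi_R(S)=\vol(S)/\vol(R)\) gives
\[
\frac{2b_{\rm int}+b_\partial}{2\vol(S)}\le \beta^{\pi_R}_{K_R}(S,x)<\frac{\eta}{4}.
\]
Hence \(2b_{\rm int}+b_\partial<\frac{\eta}{2}\vol(S)\le\eta\vol(S)\), which is exactly \(\eta\)-peelability.

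The only point needing care is that these one-step paths are genuine, distinct trace transitions. Distinct edge copies, and the two orientations of an internal edge, give disjoint events. Parallel edges are counted with multiplicity on both sides. Lazy holding steps are positive-sign returns to the same vertex, so they never count as violations and can be ignored. No obstacle beyond this bookkeeping is expected, and the slack between \(\eta/2\) and \(\eta\) absorbs the factor \(2\) in the definition.
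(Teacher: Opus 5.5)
Your proposal is correct and is essentially the paper's proof: fix a minimizing pair \((S,x)\), lower-bound the numerator of \(\beta^{\pi_R}_{K_R}(S,x)\) by the one-step excursions along monochromatic internal edges (both orientations) and boundary edges (outward orientation only), each of weight \(1/(2\vol(R))\), and conclude \((2b_{\rm int}+b_\partial)/(2\vol(S))<\eta/4\). One minor correction to your closing remark: the factor \(2\) in the definition of \(\eta\)-peelability is already matched exactly by the two orientations of each internal edge, so the slack between \(\eta/2\) and \(\eta\) is simply unused rather than absorbing that factor.
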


\begin{proof}
Choose \(S\subseteq R\) and \(x:S\to\{\pm1\}\) with
\(\beta_{K_R}^{\pi_R}(S,x)<\eta/4\).  Let
\[
 b_{\rm int}=|\{uv\in E(G[R]):u,v\in S,\ x(u)=x(v)\}|,
 \qquad
 b_\partial=|E(S,R\setminus S)|.
\]
Each bad internal edge contributes two oriented one-step trace transitions,
for total normalized contribution \(1/\vol(S)\) to
\(\beta_{K_R}^{\pi_R}(S,x)\).  Each boundary edge contributes at least one
oriented one-step transition leaving \(S\), hence at least
\(1/(2\vol(S))\).  Therefore
\[
    \frac{b_{\rm int}}{\vol(S)}+
    \frac{b_\partial}{2\vol(S)}
    \le \beta_{K_R}^{\pi_R}(S,x)<\frac{\eta}{4}.
\]
Multiplying by \(2\vol(S)\) gives
\(2b_{\rm int}+b_\partial<\eta\vol(S)\), as required.
\end{proof}

\begin{lemma}[nonuniform volume peeling]
\label{tol:lem:nonuniform-volume-peeling}
Let \(R_0=V\).  Suppose an iterative process removes pairwise disjoint sets
\(S_i\subseteq R_{i-1}\), where each \(S_i\) is \(\eta_i\)-peelable in
\(R_{i-1}\).  If \(R_*\) is the final residual, then
\[
    \tau(G)\le 2\sum_i\eta_i\pi(S_i)+\pi(R_*).
\]
\end{lemma}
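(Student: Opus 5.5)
The plan is to repeat the charging argument of Lemma~\ref{lem:ug-nonuniform-peeling}, with the factor \(2\) now sitting on the internal bad-edge term of Definition~\ref{tol:def:volume-peelability}. For each peeled set \(S_i\), fix a signing \(x_i:S_i\to\{\pm1\}\) that witnesses \(\eta_i\)-peelability in \(R_{i-1}\). Then define one global two-coloring \(\chi\) by setting \(\chi=x_i\) on \(S_i\), and coloring the final residual \(R_*\) arbitrarily. Because the \(S_i\) are pairwise disjoint and \(R_*=V\setminus\bigcup_i S_i\), this gives every vertex exactly one color. Since \(\tau(G)\) is a minimum over colorings, it is enough to bound the number of monochromatic edges of \(\chi\).

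I will classify each edge \(uv\in E\) by where its endpoints sit.

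\emph{Case 1: both endpoints lie in the same \(S_i\).} Since \(S_i\subseteq R_{i-1}\), the edge belongs to \(E(G[R_{i-1}])\). It is monochromatic under \(\chi\) exactly when it is counted in \(b_{\rm int}(S_i)\). Over all such edges the count is at most \(\sum_i b_{\rm int}(S_i)\le\sum_i 2b_{\rm int}(S_i)\).

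\emph{Case 2: the endpoints lie in different pieces, or one lies in some \(S_i\) and the other in \(R_*\).} Let \(i\) be the smallest index of a peeled piece containing an endpoint. The other endpoint was not removed by step \(i\), so it lies in \(R_{i-1}\setminus S_i\). Hence the edge lies in \(E(S_i,R_{i-1}\setminus S_i)\), and I charge it once there, whether or not it is monochromatic.

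\emph{Case 3: both endpoints lie in \(R_*\).} I charge every such edge pessimistically. There are at most \(\vol(R_*)/2\) of them, since each is counted twice in \(\vol(R_*)\).

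Adding the three cases, the number of monochromatic edges is at most
\[
\sum_i\bigl(2b_{\rm int}(S_i)+|E(S_i,R_{i-1}\setminus S_i)|\bigr)+\frac{\vol(R_*)}{2}
\le\sum_i\eta_i\vol(S_i)+\frac{\vol(R_*)}{2}.
\]
Dividing by \(m\) and using \(\vol(S)=2m\,\pi(S)\) gives
\[
\tau(G)\le 2\sum_i\eta_i\pi(S_i)+\pi(R_*).
\]

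The one step needing care is Case 2: checking that each cross edge is charged exactly once, to the piece removed first. This is what the choice of the smallest index \(i\) ensures, and it holds because peeling is sequential, with the boundary measured inside the current residual \(R_{i-1}\). Everything else follows directly from the definitions.
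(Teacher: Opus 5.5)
Your proof is correct and follows the paper's own charging argument. You color each peeled piece by its witnessing signing, charge internal monochromatic edges to the internal term, charge cross edges to the boundary term of the earlier-peeled piece, bound the edges inside \(R_*\) by \(\vol(R_*)/2\), and divide by \(m\); your explicit check that the other endpoint lies in \(R_{i-1}\setminus S_i\) spells out a step the paper states in one line.
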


\begin{proof}
Color every peeled set \(S_i\) by a signing witnessing its peelability, and
color \(R_*\) arbitrarily.  A monochromatic edge is charged either to the
bad-internal term of the peeled set containing it, to the boundary term of the
earlier peeled set incident to it, or, if both endpoints remain in \(R_*\), to
the final residual.  Hence the number of monochromatic edges is at most
\[
    \sum_i\eta_i\vol(S_i)+\frac{\vol(R_*)}{2}.
\]
Dividing by \(m\) and using \(\pi(S)=\vol(S)/(2m)\) proves the claim.
\end{proof}

\begin{lemma}[multiscale trace residual]
\label{tol:lem:multiscale-trace-residual}
There are absolute constants \(c_0,c_\lambda>0\) such that the following holds.
Let \(0<\rho<1/10\) and set
\[
    \Psi_\rho=1+\log(1/\rho),
    \qquad
    \lambda=c_\lambda\frac{\rho}{\Psi_\rho}.
\]
If \(\tau(G)\ge\rho\), then there is a nonempty set \(R\subseteq V\) such
that, writing \(z=\pi(R)\),
\[
    z\ge\lambda,
    \qquad
    \beta(K_R)\ge c_0\frac{\lambda}{z}.
\]
\end{lemma}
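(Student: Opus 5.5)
We argue by contradiction with an analytic peeling process, exactly parallel to the proof of Lemma~\ref{lem:ug-multiscale-ambiguity} but with the schedule \(\eta(z)\propto\lambda/z\) in place of \(\rho z^{-1/2}\). Start from \(R_0=V\) and write \(z_i=\pi(R_i)\). Stop as soon as \(z_i<\lambda\). Otherwise set \(\eta(z_i)=4c_0\lambda/z_i\). Suppose, for contradiction, that every nonterminal residual satisfies \(\beta(K_{R_i})<c_0\lambda/z_i=\eta(z_i)/4\). Lemma~\ref{tol:lem:low-trace-peel} then supplies an \(\eta(z_i)\)-peelable set \(S_i\subseteq R_i\), and we set \(R_{i+1}=R_i\setminus S_i\). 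Each step removes a nonempty set, so the process terminates with a final residual \(R_*\) of mass \(\pi(R_*)<\lambda\). The desired \(R\) will be the first residual at which the peeling hypothesis fails; it has mass \(z\ge\lambda\) and satisfies \(\beta(K_R)\ge c_0\lambda/z\).

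To obtain the contradiction, apply Lemma~\ref{tol:lem:nonuniform-volume-peeling}:
\[
    \tau(G)\le 2\sum_i \eta(z_i)\pi(S_i)+\pi(R_*)
    \le 8c_0\lambda\sum_i\frac{z_i-z_{i+1}}{z_i}+\lambda .
\]
The key estimate is the harmonic-type sum. Since \(z_{i+1}\le z\le z_i\) on each interval, we have
\[
    \frac{z_i-z_{i+1}}{z_i}\le\int_{z_{i+1}}^{z_i}\frac{dz}{z}.
\]
This bound is valid even for large jumps, because \(z_i\) is the largest value on the interval. The intervals telescope from \(z_0=\pi(V)=1\) down to the last residual mass \(z_{\rm last}\ge\pi(R_*)\). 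The sum is therefore at most \(\log(1/z_{\rm last})\), plus a final term bounded by \(1\). The last nonterminal residual has mass at least \(\lambda\), so only that final step can jump below \(\lambda\), and it contributes at most \(1\). Hence \(\sum_i\le 1+\log(1/\lambda)\).

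It remains to compare this with \(\rho\). We have \(\log(1/\lambda)=\log(1/\rho)+\log(\Psi_\rho/c_\lambda)\), and \(\log\Psi_\rho\le\Psi_\rho\), so \(1+\log(1/\lambda)\le C(c_\lambda)\Psi_\rho\). Substituting \(\lambda=c_\lambda\rho/\Psi_\rho\) gives
\[
    \tau(G)\le 8c_0c_\lambda C(c_\lambda)\rho+c_\lambda\rho/\Psi_\rho .
\]
Now fix \(c_\lambda\le 1/4\), so that the second term is at most \(\rho/4\). Then choose \(c_0\) small enough, depending on \(c_\lambda\), that the first term is also below \(\rho/4\). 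This gives \(\tau(G)<\rho\), contradicting \(\tau(G)\ge\rho\). The only care needed is that \(C(c_\lambda)\) grows like \(\log(1/c_\lambda)\), which is why the order of choosing the constants matters.

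The main obstacle is the bookkeeping in the harmonic sum: handling the last step, whose target may fall far below \(\lambda\), and absorbing \(\log\Psi_\rho\) into \(\Psi_\rho\). The rest follows directly from the peeling lemmas already established.
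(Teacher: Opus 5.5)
Your proof is correct and follows the paper's argument essentially verbatim. It uses the same nonuniform peeling schedule \(\eta_i\propto\lambda/z_i\), where your \(c_0\) plays the role of the paper's \(1/(4A)\); the same harmonic-sum bound \(\sum_i (z_i-z_{i+1})/z_i\le 1+\log(1/\lambda)\); and the same contradiction with \(\tau(G)\ge\rho\). Your explicit handling of the final jump below \(\lambda\) and of the \(\log\Psi_\rho\) term simply spells out steps the paper leaves implicit. One small remark: the order in which \(c_0\) and \(c_\lambda\) are fixed does not actually matter, since \(c_\lambda\log(1/c_\lambda)\to0\).
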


\begin{proof}
Let \(A\ge1\) be a sufficiently large absolute constant.  Start with
\(R_0=V\), write \(z_i=\pi(R_i)\), and stop if \(z_i<\lambda\).  Otherwise set
\(\eta_i=\lambda/(Az_i)\).  If
\(\beta(K_{R_i})\ge\eta_i/4\), then \(R=R_i\) satisfies the conclusion with
\(c_0=1/(4A)\).  If not, Lemma~\ref{tol:lem:low-trace-peel} gives an
\(\eta_i\)-peelable set \(S_i\subseteq R_i\); remove it and continue.

If the process reached a final residual \(R_*\) with \(\pi(R_*)<\lambda\), then
Lemma~\ref{tol:lem:nonuniform-volume-peeling} would give
\[
\tau(G)
\le
2\sum_i\eta_i\pi(S_i)+\pi(R_*)
\le
\frac{2\lambda}{A}
\sum_i\frac{z_i-z_{i+1}}{z_i}+\lambda.
\]
The decreasing sequence \((z_i)\) starts at at most \(1\), and every
nonterminal term is at least \(\lambda\); hence
\[
    \sum_i\frac{z_i-z_{i+1}}{z_i}\le 1+\log(1/\lambda).
\]
Choosing \(A\) large enough and then \(c_\lambda\) small enough makes
\[
    \frac{2\lambda}{A}\bigl(1+\log(1/\lambda)\bigr)+\lambda<\rho,
\]
contradicting \(\tau(G)\ge\rho\).  Thus the process must have found the desired
residual.
\end{proof}

\subsubsection{From a signed residual to geometric overlap}

\begin{lemma}[high signed trace ratio gives geometric overlap]
\label{tol:lem:geometric-overlap}
There are absolute constants \(\alpha_0,\kappa_0,C_1>0\) such that the
following holds.  Let \(R\subseteq V\) have \(z=\pi(R)>0\), and suppose
\(\beta(K_R)\ge\theta\).  If
\[
    L\ge C_1\frac{\theta^{-2}\log(2n)}{z},
\]
then a set of starts of \(\pi_R\)-mass at least \(\kappa_0\) satisfies
\[
    \mu_L(s)\ge\alpha_0.
\]
\end{lemma}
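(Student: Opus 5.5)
The plan has three stages. First, use the signed contraction of Theorem~\ref{tol:thm:signed-dual-cheeger} to get constant even/odd overlap at a fixed trace time. Next, transfer that trace overlap to the geometrically stopped walk, following the scheme of Lemma~\ref{lem:ug-trace-to-geometric}. Finally, use Lemma~\ref{lem:ug-geometric-scale-monotonicity} to cover every $L$ above the threshold. Because the overlap we preserve is already constant, no $1/\theta$ loss enters when passing from trace time to geometric time.

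\emph{Step 1 (trace overlap).} Let $k=\lceil C\theta^{-2}\log(2n)\rceil$. For a start $s\in R$, write $u_k^\pm(v)=K_R^{\pm,(k)}(s,v)$ for the even and odd trace endpoint distributions after $k$ trace steps. Their difference is $u_k^+-u_k^-=\delta_sB_R^k$. By Lemma~\ref{tol:lem:trace-spectral-nonnegative}, $B_R$ is self-adjoint with spectrum in $[0,1]$, so Theorem~\ref{tol:thm:signed-dual-cheeger} gives $\|B_R^k\|_{2\to2}\le(1-c_{\rm Ch}\theta^2)^k\le (2n)^{-10}$ once $C$ is large.

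We then bound the $L_1$ imbalance $\sum_v|u_k^+(v)-u_k^-(v)|$. Using duality and reversibility, it equals $\sup_{\|g\|_\infty\le1}(B_R^kg)(s)$. A crude pointwise bound gives $|(B_R^kg)(s)|\le \|B_R^kg\|_{2,\pi_R}/\sqrt{\pi_R(s)}$, and $\|g\|_{2,\pi_R}\le1$. Hence the imbalance is at most $(2n)^{-10}\pi_R(s)^{-1/2}$.

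For starts with $\pi_R(s)\ge 1/(2n)^{4}$, which carry all but a negligible share of the $\pi_R$-mass, this imbalance is $o(1)$. If $\pi_R$ is very nonuniform we can instead say: the starts with $\pi_R(s)<(2n)^{-4}$ have total mass at most $n(2n)^{-4}$. The trace overlap is $\sum_v\min(u_k^+,u_k^-)=\tfrac12(1-\text{imbalance})\ge 1/3$. To obtain the same bound at every intermediate trace time $t\le k$ is unnecessary: we use only time $k$.

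\emph{Step 2 (truncation and memorylessness).} Apply Lemma~\ref{lem:stationary-return-time-truncation}. At least half of the $\pi_R$-mass of starts satisfies $\Pr[H_k>16k/z]\le1/8$. Intersecting with the good starts of Step 1 leaves mass $\kappa_0\ge1/3$.

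Fix such an $s$, set $L_0=16k/z$, and let $q$ be the trace distribution on $R\times\{\pm\}$ at time $k$, restricted to $H_k\le L_0$. Removing mass $1/8$ lowers the ambiguity functional $\mathsf A$ (for $Q=2$, $\mathsf A$ is exactly the overlap) by at most $1/8$. So $\mathsf A(q)\ge 1/3-1/8$.

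Now choose $L\ge C_1\theta^{-2}\log(2n)/z\ge B L_0$. Every retained prefix survives to its length $h\le L_0$ with probability $\lambda^{h}\ge (L/(L+1))^{L_0}\ge e^{-1/B}$. As in the proof of Lemma~\ref{lem:ug-trace-to-geometric}, the geometric endpoint law then decomposes as $P_L^{s}=q^\lambda\widehat G_L+r$ with $q^\lambda\ge e^{-1/B}q$. Applying both parts of Lemma~\ref{lem:permutation-kernel-ambiguity-monotonicity} (with $Q=2$, the all-transposition instance) gives $\mu_L(s)\ge e^{-1/B}(1/3-1/8)=:\alpha_0$. Larger $L$ is automatically covered because $L$ only needs to exceed $BL_0$; alternatively one can invoke Lemma~\ref{lem:ug-geometric-scale-monotonicity}.

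\emph{Main obstacle.} The delicate point is Step 1's conversion from $L^2(\pi_R)$ contraction to $L_1$ imbalance from a point mass. This conversion costs $\pi_R(s)^{-1/2}$, which may be as large as $\sqrt{\vol(R)/\deg(s)}\le\sqrt{2m}$. With $m$ possibly $\gg n^{O(1)}$ only through multiplicities, I would handle it as follows. The contraction factor $e^{-c_{\rm Ch}C\log(2n)}$ beats $\sqrt{\pi_R(s)^{-1}}$ for all starts with $\pi_R(s)\ge(2n)^{-2}$. The remaining starts have $\pi_R$-mass at most $n\cdot(2n)^{-2}\le 1/4$, exactly as in the proof of Lemma~\ref{lem:ug-trace-time-majority-sweep}. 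So it suffices to take $C$ large enough that $(1-c_{\rm Ch}\theta^2)^k\le (2n)^{-2}/10$, which only changes $C_1$.
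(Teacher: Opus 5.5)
Your proof is correct and follows the paper's argument. Both use signed dual-Cheeger contraction at trace time \(k=\lceil C\theta^{-2}\log(2n)\rceil\), return-time truncation at \(L_0=16k/z\) via Lemma~\ref{lem:stationary-return-time-truncation}, and then survival probability at least \(e^{-L_0/L}\), memorylessness, and Lemma~\ref{lem:permutation-kernel-ambiguity-monotonicity} to pass to \(\mu_L\).

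The only difference is in converting the \(L^2(\pi_R)\) contraction into \(L_1\) imbalance. You bound \(\|\Delta_{s,k}\|_1\le\|B_R^k\|_{2\to2}\,\pi_R(s)^{-1/2}\) pointwise and discard starts with \(\pi_R(s)<(2n)^{-2}\). The discarded mass is at most \(1/(4n)\le 1/8\), and this sharper bound (not your stated \(1/4\)) is what justifies your \(\kappa_0\ge1/3\). The paper instead averages the same quantity, \(\E_{s\sim\pi_R}\|f_{s,k}\|_{2,\pi_R}^2=\operatorname{Tr}(B_R^{2k})\le|R|(1-c_{\mathrm{Ch}}\theta^2)^{2k}\), and applies Markov. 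Both routes rely only on \(|R|\le n\).
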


\begin{proof}
Let \(k=\lceil C\theta^{-2}\log(2n)\rceil\), with \(C\) large enough, and set
\(L_0=16k/z\).  Let \(Y_j\) be the signed trace chain on \(R\), let \(\Xi_j\)
be the accumulated sign, and let \(H_j\) be the ordinary lazy-walk time of the
\(j\)th return to \(R\).

For a start \(s\in R\), put
\[
 \Delta_{s,k}(v)=
 \Prb[Y_k=v,\Xi_k=+1\mid Y_0=s]
 -\Prb[Y_k=v,\Xi_k=-1\mid Y_0=s],
\]
and
\[
    f_{s,k}(v)=\frac{\Delta_{s,k}(v)}{\pi_R(v)}.
\]
Then \(f_{s,k}=B_R^k(\1_s/\pi_R(s))\), and
\(\|\Delta_{s,k}\|_1\le \|f_{s,k}\|_{2,\pi_R}\).  With the orthonormal basis
\(e_s=\1_s/\sqrt{\pi_R(s)}\), self-adjointness gives
\[
\E_{s\sim\pi_R}\|f_{s,k}\|_{2,\pi_R}^2
=\sum_{s\in R}\|B_R^ke_s\|_{2,\pi_R}^2
=\operatorname{Tr}(B_R^{2k}).
\]
By Lemma~\ref{tol:lem:trace-spectral-nonnegative} and
Theorem~\ref{tol:thm:signed-dual-cheeger},
\[
    \operatorname{Tr}(B_R^{2k})
    \le |R|(1-c_{\mathrm{Ch}}\theta^2)^{2k}.
\]
Since \(|R|\le n\), the choice of \(C\) makes this at most \(1/256\).  Hence,
with \(\pi_R\)-probability at least \(15/16\), \(\|\Delta_{s,k}\|_1\le1/4\), and so
\[
\sum_{v\in R}\min\{
\Prb[Y_k=v,\Xi_k=+1\mid Y_0=s],
\Prb[Y_k=v,\Xi_k=-1\mid Y_0=s]
\}
\ge \frac38.
\]
By Lemma~\ref{lem:stationary-return-time-truncation}, at least half of the
starts also satisfy \(\Prb[H_k>L_0\mid X_0=s]\le1/8\).  Intersecting the two
sets, a \(\pi_R\)-mass at least \(7/16\) of starts has truncated trace overlap
at least \(3/8-1/8=1/4\):
\[
\sum_{v\in R}\min\{
\Prb[Y_k=v,\Xi_k=+1,H_k\le L_0\mid Y_0=s],
\Prb[Y_k=v,\Xi_k=-1,H_k\le L_0\mid Y_0=s]
\}
\ge\frac14.
\]

Now assume \(L\ge 32L_0\), increasing \(C_1\) if necessary.  For every retained
prefix of length \(h\le L_0\),
\[
    \Prb[T_L\ge h]=(L/(L+1))^h\ge (L/(L+1))^{L_0}=:a_0>0,
\]
where \(a_0\) is an absolute constant.  Conditional on the survival event
\(T_L\ge h\), the residual time \(T_L-h\) is an independent fresh copy of
\(T_L\).  Thus the survived truncated even/odd subdistributions are first
multiplied by at least \(a_0\) and then propagated by a signed Markov kernel.
The overlap cannot decrease under this propagation: it is exactly the
\(Q=2\) case of Lemma~\ref{lem:permutation-kernel-ambiguity-monotonicity}, with
positive sign acting as the identity and negative sign as the transposition.
Adding the remaining nonnegative subdistributions from all other paths cannot
decrease \(\sum_v\min\{\cdot,\cdot\}\).  Therefore the final geometric overlap
is at least \(a_0/4\).  Take \(\alpha_0=a_0/4\) and \(\kappa_0=7/16\).
\end{proof}

\begin{theorem}[multiscale lemma for weighted geometric overlap]
\label{tol:thm:multiscale-weighted-GR}
There are absolute constants \(\alpha,c,c_\lambda,C>0\) such that the following
holds.  Let \(0<\rho<1/10\), set
\[
    \lambda=c_\lambda\frac{\rho}{1+\log(1/\rho)},
\]
and let
\[
    \mathcal Q_\lambda=
    \{2^j\lambda:j\ge0,\ 2^j\lambda\le1\}\cup\{1\}.
\]
If \(\tau(G)\ge\rho\), then there exists a scale \(q\in\mathcal Q_\lambda\)
such that, for
\[
    L_q=C\frac{q}{\lambda^2}\log n,
\]
we have
\[
    \Prb_{s\sim\pi}[\mu_{L_q}(s)\ge\alpha]\ge cq.
\]
\end{theorem}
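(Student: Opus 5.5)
The plan is to combine Lemma~\ref{tol:lem:multiscale-trace-residual} with Lemma~\ref{tol:lem:geometric-overlap}, and then pass to a dyadic scale using Lemma~\ref{lem:ug-geometric-scale-monotonicity}, in the same way as in the proof of Lemma~\ref{lem:ug-multiscale-ambiguity}.

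\emph{Step 1: find a residual.} Since \(\tau(G)\ge\rho\), Lemma~\ref{tol:lem:multiscale-trace-residual}, with the same \(c_\lambda\), gives a nonempty \(R\subseteq V\) with \(z=\pi(R)\ge\lambda\) and \(\beta(K_R)\ge\theta:=c_0\lambda/z\).

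\emph{Step 2: convert it into geometric overlap.} I apply Lemma~\ref{tol:lem:geometric-overlap} with this \(\theta\). Its threshold scale is
\[
C_1\frac{\theta^{-2}\log(2n)}{z}=\frac{C_1}{c_0^2}\cdot\frac{z}{\lambda^2}\log(2n).
\]
For every \(L\) at least this large, a set of starts of \(\pi_R\)-mass at least \(\kappa_0\) has \(\mu_L(s)\ge\alpha_0\). Because \(\pi(s)=z\,\pi_R(s)\) for \(s\in R\), this set has global \(\pi\)-mass at least \(\kappa_0 z\).

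\emph{Step 3: choose the dyadic scale.} I pick \(q\in\mathcal Q_\lambda\) with \(z\le q\le 2z\). Such a \(q\) exists because \(\lambda\le z\le1\); when \(z\) lies above the largest power \(2^j\lambda\le1\), the extra element \(1\) works, since then \(z>1/2\). Since \(n\ge2\), we have \(\log(2n)\le2\log n\). Taking \(C\ge 2C_1/c_0^2\) then gives \(L_q=Cq\lambda^{-2}\log n\ge Cz\lambda^{-2}\log n\), which is at least the threshold from Step 2. So Lemma~\ref{tol:lem:geometric-overlap} applies directly at \(L=L_q\); its hypothesis is a lower bound on \(L\). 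Alternatively one can invoke Lemma~\ref{lem:ug-geometric-scale-monotonicity} for \(Q=2\), since \(\mu_L\) is the \(Q=2\) ambiguity. Hence
\[
\Pr_{s\sim\pi}[\mu_{L_q}(s)\ge\alpha_0]\ge\kappa_0z\ge(\kappa_0/2)q,
\]
and we set \(\alpha=\alpha_0\) and \(c=\kappa_0/2\).

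\emph{Main obstacle.} There is no real obstacle; the step needing care is the constant bookkeeping. First, \(c_\lambda\) has to be the constant fixed in Lemma~\ref{tol:lem:multiscale-trace-residual}. Second, the dependence of \(\theta\) on \(z\) must cancel correctly: \(\theta^{-2}/z\propto z/\lambda^2\), which is why \(L_q\) grows linearly in \(q\). Third, the \(\log(2n)\) versus \(\log n\) discrepancy must be absorbed into \(C\). All constants are absolute, so no circularity arises.
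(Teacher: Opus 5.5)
Your proposal is correct and follows the paper's proof essentially step for step. You obtain the residual from Lemma~\ref{tol:lem:multiscale-trace-residual} and apply Lemma~\ref{tol:lem:geometric-overlap} at threshold \(\asymp (z/\lambda^2)\log n\). You then pick a dyadic \(q\) with \(z\le q\le 2z\) and use \(L_q\ge L\). Your explicit treatment of \(\log(2n)\le 2\log n\) and of the \(q=1\) case just spells out details the paper leaves implicit.
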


\begin{proof}
By Lemma~\ref{tol:lem:multiscale-trace-residual}, there is a residual
\(R\subseteq V\) with \(z=\pi(R)\ge\lambda\) and
\(\theta:=\beta(K_R)\ge c_0\lambda/z\).  Lemma~\ref{tol:lem:geometric-overlap}
then gives constant overlap on a \(\pi_R\)-mass at least \(\kappa_0\) of starts
for every
\[
    L\ge C'\frac{z}{\lambda^2}\log n.
\]
Under the global stationary distribution this set has mass at least
\(\kappa_0z\).  Choose \(q\in\mathcal Q_\lambda\) with \(z\le q\le2z\), using
\(q=1\) if \(z>1/2\).  Taking \(C\) large enough gives \(L_q\ge L\), and hence
\[
    \Prb_{s\sim\pi}[\mu_{L_q}(s)\ge\alpha_0]
    \ge \kappa_0z\ge (\kappa_0/2)q.
\]
Renaming constants proves the theorem.
\end{proof}

\subsection{Overlap estimation and seed sampling}
\label{tol:sec:overlap-estimation}

The algorithmic part is inherited from Unique Games.  In the two-label
all-transposition instance, the label predicted from a seed label is determined
only by parity, and therefore \(\mu_L(s)=\mu_L^{\rm UG}(s)\).  Hence
Proposition~\ref{prop:ug-ambiguity-tester} gives the following subroutine:
for any seed \(s\), scale \(L\), threshold \(0<\beta<1\), and failure
probability \(\delta\),
\[
\textsc{OverlapTest}(s,L,\beta,\delta)
\]
reports \textsc{HIGH} with probability at least \(1-\delta\) when
\(\mu_L(s)\ge\beta\), and reports \textsc{LOW} with probability at least
\(1-\delta\) when \(\mu_L(s)\le\beta/10\).  Its expected query complexity is
\[
    \tO\!\left(L\sqrt m\,\beta^{-7/2}\log\frac1\delta\right),
\]
which is \(\tO(L\sqrt m\log(1/\delta))\) for constant \(\beta\).  No constraint
oracle is needed in the graph model: every traversed edge simply flips the
label.

Seeds are sampled from a distribution pointwise close to
\(\pi(v)=\deg(v)/(2m)\) by the purely graph-theoretic routine of
Proposition~\ref{prop:ug-sampled-edge-seeds}, using the almost-uniform edge
sampler of Lemma~\ref{lem:almost-uniform-edge-sampler}.  For any fixed
constant \(\xi\in(0,1)\), one such seed has distribution
\(\widetilde\pi\) satisfying
\((1-\xi)\pi(v)\le\widetilde\pi(v)\le(1+\xi)\pi(v)\) and expected query cost
\(\tO(n/\sqrt m)\).

\subsection{The full tester}
\label{tol:sec:full-tester}

We now combine Lemma~\ref{tol:lem:weighted-completeness},
Theorem~\ref{tol:thm:multiscale-weighted-GR}, and the two reused Unique Games
algorithmic primitives above.

Let \(\alpha,c,c_\lambda,C\) be the constants from
Theorem~\ref{tol:thm:multiscale-weighted-GR}.  Put
\[
    \Psi_\rho=1+\log(1/\rho),
    \qquad
    \lambda=c_\lambda\frac{\rho}{\Psi_\rho},
\]
and
\[
    \mathcal Q=
    \{2^j\lambda:j=0,1,2,\ldots,\ 2^j\lambda\le1\}\cup\{1\}.
\]
For each \(q\in\mathcal Q\), set
\[
    L_q=C\frac{q}{\lambda^2}\log n,
    \qquad
    \delta_q=c_\delta q,
    \qquad
    \theta_q=c_\theta q,
    \qquad
    N_q=\left\lceil C_Nq^{-1}\log(10|\mathcal Q|)\right\rceil,
\]
where \(c_\delta,c_\theta>0\) are sufficiently small and \(C_N\) is sufficiently
large.  Fix a small constant \(\xi\in(0,1/10)\) for seed sampling.

The tester enumerates all scales \(q\in\mathcal Q\).  At scale \(q\), it draws
\(N_q\) independent seeds from \(\widetilde\pi\), runs
\(\textsc{OverlapTest}(s,L_q,\alpha,\delta_q)\) on each seed, and declares a
seed active iff the report is \textsc{HIGH}.  It rejects if some scale has at
least \(\theta_qN_q\) active seeds, and otherwise accepts.

\begin{proof}[Proof of Theorem~\ref{tol:thm:main}]
Assume first that \(\tau(G)\le\varepsilon\).  For every scale \(q\),
Lemma~\ref{tol:lem:weighted-completeness} and the pointwise comparison between
\(\widetilde\pi\) and \(\pi\) give
\[
    \E_{s\sim\widetilde\pi}\mu_{L_q}(s)
    \le (1+\xi)\varepsilon L_q.
\]
A seed is declared active only if \(\mu_{L_q}(s)\ge\alpha/10\) or the subroutine
errs.  Thus
\[
    p_q^{\rm yes}
    \le \frac{10(1+\xi)\varepsilon L_q}{\alpha}+\delta_q
    \le O\!\left(\varepsilon\frac{q}{\lambda^2}\log n\right)+c_\delta q.
\]
If \(c_*\) is small enough in the theorem statement, then
\[
    \varepsilon\log n
    \le c_*\frac{\rho^2}{(1+\log(1/\rho))^2}
    \le c'\lambda^2
\]
with \(c'\) small enough.  Choosing \(c_\delta\) small gives
\(p_q^{\rm yes}\le\theta_q/4\) for every \(q\).  Since
\(N_q=\Theta(q^{-1}\log|\mathcal Q|)\), Chernoff's bound and a union bound over
all scales imply acceptance with probability at least \(2/3\).

Now assume \(\tau(G)\ge\rho\).  By
Theorem~\ref{tol:thm:multiscale-weighted-GR}, for some scale \(q\),
\[
    \Prb_{s\sim\pi}[\mu_{L_q}(s)\ge\alpha]\ge cq.
\]
Under \(\widetilde\pi\) this probability is at least \((1-\xi)cq\).  Such a
seed is declared active except with probability at most \(\delta_q=c_\delta q\),
so, for small enough \(\xi\) and \(c_\delta\),
\[
    p_q^{\rm no}\ge \frac{c}{2}q.
\]
Taking \(c_\theta<c/8\), another Chernoff bound shows that this scale rejects
with probability at least \(1-1/(10|\mathcal Q|)\), and hence the tester rejects
with probability at least \(2/3\).

It remains to sum the expected query complexity.  At scale \(q\), overlap
estimation costs
\[
    N_q\cdot \tO(L_q\sqrt m)
    =\tO\!\left(q^{-1}\cdot \frac{q}{\lambda^2}\sqrt m\right)
    =\tO\!\left(\frac{\sqrt m}{\lambda^2}\right),
\]
because \(\alpha\) is constant and the dependence on \(\delta_q^{-1}\) is
polylogarithmic.  The number of scales is polylogarithmic, so this contributes
\(\tO(\sqrt m/\lambda^2)\).  Seed generation costs
\[
    \sum_{q\in\mathcal Q} N_q\,\tO(n/\sqrt m)
    =\tO\!\left(\frac{n}{\sqrt m}\sum_{q\in\mathcal Q}\frac1q\right)
    =\tO\!\left(\frac{n}{\sqrt m\,\lambda}\right).
\]
Since \(\lambda=\Theta(\rho/(1+\log(1/\rho)))\), the total is
\[
\tO\!\left(
\sqrt m\,\rho^{-2}(1+\log(1/\rho))^2
+
\frac{n}{\sqrt m}\rho^{-1}(1+\log(1/\rho))
\right),
\]
as claimed.
\end{proof}

\bibliographystyle{plain}
\bibliography{main}

@article{AFKK,
  author = {Alon, Noga and Fernandez de la Vega, Wenceslas and Kannan, Ravi and Karpinski, Marek},
  title = {Random Sampling and Approximation of {MAX-CSPs}},
  journal = {Journal of Computer and System Sciences},
  volume = {67},
  number = {2},
  pages = {212--243},
  year = {2003}
}

@book{BhattacharyyaYoshidaBook,
  author = {Bhattacharyya, Arnab and Yoshida, Yuichi},
  title = {Property Testing: Problems and Techniques},
  publisher = {Springer Singapore},
  year = {2022},
  doi = {10.1007/978-981-16-8622-1}
}

@inproceedings{EdenRosenbaum,
  author = {Eden, Talya and Rosenbaum, Will},
  title = {On Sampling Edges Almost Uniformly},
  booktitle = {Proceedings of the 1st Symposium on Simplicity in Algorithms (SOSA)},
  series = {Open Access Series in Informatics (OASIcs)},
  volume = {61},
  pages = {7:1--7:9},
  year = {2018}
}

@book{GoldreichBook,
  author = {Goldreich, Oded},
  title = {Introduction to Property Testing},
  publisher = {Cambridge University Press},
  year = {2017},
  doi = {10.1017/9781108135252}
}

@article{GR,
  author = {Goldreich, Oded and Ron, Dana},
  title = {A Sublinear Bipartiteness Tester for Bounded Degree Graphs},
  journal = {Combinatorica},
  volume = {19},
  number = {3},
  pages = {335--373},
  year = {1999}
}

@inproceedings{GhoshMRS,
  author = {Ghosh, Arijit and Mishra, Gopinath and Raychaudhury, Rahul and Sen, Sayantan},
  title = {Tolerant Bipartiteness Testing in Dense Graphs},
  booktitle = {Proceedings of the 49th International Colloquium on Automata, Languages, and Programming (ICALP)},
  series = {Leibniz International Proceedings in Informatics (LIPIcs)},
  volume = {229},
  pages = {69:1--69:19},
  year = {2022}
}

@inproceedings{GuptaTalwar,
  author = {Gupta, Anupam and Talwar, Kunal},
  title = {Approximating Unique Games},
  booktitle = {Proceedings of the 17th Annual ACM-SIAM Symposium on Discrete Algorithms (SODA)},
  pages = {99--106},
  year = {2006}
}

@article{Kac,
  author = {Kac, Mark},
  title = {On the Notion of Recurrence in Discrete Stochastic Processes},
  journal = {Bulletin of the American Mathematical Society},
  volume = {53},
  number = {10},
  pages = {1002--1010},
  year = {1947}
}

@article{KKR,
  author = {Kaufman, Tali and Krivelevich, Michael and Ron, Dana},
  title = {Tight Bounds for Testing Bipartiteness in General Graphs},
  journal = {SIAM Journal on Computing},
  volume = {33},
  number = {6},
  pages = {1441--1483},
  year = {2004}
}

@inproceedings{KhotUGC,
  author = {Khot, Subhash},
  title = {On the Power of Unique 2-Prover 1-Round Games},
  booktitle = {Proceedings of the 34th Annual ACM Symposium on Theory of Computing (STOC)},
  pages = {767--775},
  year = {2002},
  doi = {10.1145/509907.510017}
}

@inproceedings{JhaKumar,
  author = {Jha, Agastya Vibhuti and Kumar, Akash},
  title = {A Sublinear Time Tester for {Max-Cut} on Clusterable Graphs},
  booktitle = {Proceedings of the 51st International Colloquium on Automata, Languages, and Programming (ICALP)},
  series = {Leibniz International Proceedings in Informatics (LIPIcs)},
  volume = {297},
  pages = {91:1--91:17},
  year = {2024},
  doi = {10.4230/LIPIcs.ICALP.2024.91}
}

@article{LangeCheeger,
  author = {Lange, Carsten and Liu, Shiping and Peyerimhoff, Norbert and Post, Olaf},
  title = {Frustration Index and {Cheeger} Inequalities for Discrete and Continuous Magnetic Laplacians},
  journal = {Calculus of Variations and Partial Differential Equations},
  volume = {54},
  pages = {4165--4196},
  year = {2015}
}

@inproceedings{LofgrenBanerjeeGoel,
  author = {Lofgren, Peter and Banerjee, Siddhartha and Goel, Ashish},
  title = {Bidirectional {PageRank} Estimation: From Average-Case to Worst-Case},
  booktitle = {Proceedings of the 12th International Workshop on Algorithms and Models for the Web Graph (WAW)},
  series = {Lecture Notes in Computer Science},
  volume = {9479},
  pages = {164--176},
  year = {2015}
}

@article{LovaszSimonovits,
  author = {Lov{\'a}sz, L{\'a}szl{\'o} and Simonovits, Mikl{\'o}s},
  title = {Random Walks in a Convex Body and an Improved Volume Algorithm},
  journal = {Random Structures \& Algorithms},
  volume = {4},
  number = {4},
  pages = {359--412},
  year = {1993}
}

@article{PRR,
  author = {Parnas, Michal and Ron, Dana and Rubinfeld, Ronitt},
  title = {Tolerant Property Testing and Distance Approximation},
  journal = {Journal of Computer and System Sciences},
  volume = {72},
  number = {6},
  pages = {1012--1042},
  year = {2006}
}

@inproceedings{PengYoshida,
  author = {Peng, Pan and Yoshida, Yuichi},
  title = {Sublinear-Time Algorithms for {Max Cut}, {Max E2Lin(q)}, and {Unique Label Cover} on Expanders},
  booktitle = {Proceedings of the 34th Annual ACM-SIAM Symposium on Discrete Algorithms (SODA)},
  pages = {4936--4965},
  year = {2023}
}

@inproceedings{AKKSTV,
  author = {Arora, Sanjeev and Khot, Subhash and Kolla, Alexandra and Steurer, David and Tulsiani, Madhur and Vishnoi, Nisheeth K.},
  title = {Unique Games on Expanding Constraint Graphs are Easy},
  booktitle = {Proceedings of the 40th Annual ACM Symposium on Theory of Computing (STOC)},
  pages = {21--28},
  year = {2008},
  doi = {10.1145/1374376.1374380}
}

@inproceedings{ChiplunkarKKMP,
  author = {Chiplunkar, Ashish and Kapralov, Michael and Khanna, Sanjeev and Mousavifar, Aida and Peres, Yuval},
  title = {Testing Graph Clusterability: Algorithms and Lower Bounds},
  booktitle = {Proceedings of the 59th IEEE Annual Symposium on Foundations of Computer Science (FOCS)},
  pages = {497--508},
  year = {2018},
  doi = {10.1109/FOCS.2018.00054}
}

@inproceedings{CzumajPengSohler,
  author = {Czumaj, Artur and Peng, Pan and Sohler, Christian},
  title = {Testing Cluster Structure of Graphs},
  booktitle = {Proceedings of the 47th Annual ACM Symposium on Theory of Computing (STOC)},
  pages = {723--732},
  year = {2015},
  doi = {10.1145/2746539.2746618}
}

@article{CzumajSohlerExpansion,
  author = {Czumaj, Artur and Sohler, Christian},
  title = {Testing Expansion in Bounded-Degree Graphs},
  journal = {Combinatorics, Probability and Computing},
  volume = {19},
  number = {5--6},
  pages = {693--709},
  year = {2010},
  doi = {10.1017/S096354831000012X}
}

@article{GoemansWilliamson,
  author = {Goemans, Michel X. and Williamson, David P.},
  title = {Improved Approximation Algorithms for Maximum Cut and Satisfiability Problems Using Semidefinite Programming},
  journal = {Journal of the ACM},
  volume = {42},
  number = {6},
  pages = {1115--1145},
  year = {1995},
  doi = {10.1145/227683.227684}
}

@article{KaleSeshadhri,
  author = {Kale, Satyen and Seshadhri, C.},
  title = {An Expansion Tester for Bounded Degree Graphs},
  journal = {SIAM Journal on Computing},
  volume = {40},
  number = {3},
  pages = {709--720},
  year = {2011},
  doi = {10.1137/100802980}
}

@article{KKMO,
  author = {Khot, Subhash and Kindler, Guy and Mossel, Elchanan and O'Donnell, Ryan},
  title = {Optimal Inapproximability Results for {MAX-CUT} and Other 2-Variable {CSPs}?},
  journal = {SIAM Journal on Computing},
  volume = {37},
  number = {1},
  pages = {319--357},
  year = {2007},
  doi = {10.1137/S0097539705447372}
}

@article{Kolla,
  author = {Kolla, Alexandra},
  title = {Spectral Algorithms for Unique Games},
  journal = {Computational Complexity},
  volume = {20},
  number = {2},
  pages = {177--206},
  year = {2011},
  doi = {10.1007/s00037-011-0011-7}
}

@inproceedings{MathieuSchudy,
  author = {Mathieu, Claire and Schudy, Warren},
  title = {Yet Another Algorithm for Dense {Max Cut}: Go Greedy},
  booktitle = {Proceedings of the 19th Annual ACM-SIAM Symposium on Discrete Algorithms (SODA)},
  pages = {176--182},
  year = {2008}
}

@article{TrevisanMaxCut,
  author = {Trevisan, Luca},
  title = {{Max Cut} and the Smallest Eigenvalue},
  journal = {SIAM Journal on Computing},
  volume = {41},
  number = {6},
  pages = {1769--1786},
  year = {2012}
}

@article{TrevisanUG,
  author = {Trevisan, Luca},
  title = {Approximation Algorithms for Unique Games},
  journal = {Theory of Computing},
  volume = {4},
  number = {5},
  pages = {111--128},
  year = {2008}
}

\appendix
\section{Proofs for the Preliminaries}
\label{app:sec:preliminary-proofs}

\subsection{Heat-kernel L1-gradient averaging}

This proof justifies the \(L_1\)-gradient averaging lemma used in the
trace-chain sweep arguments.  The point is to get the gradient bound
from entropy decay in a finite lazy reversible chain, including states where
the heat-kernel density may vanish.

\begin{proof}[Proof of Lemma~\ref{lem:heat-kernel-gradient}]
Let \(\phi(r)=r\log r\), with \(\phi(0)=0\).  We first prove the
Jensen-gap form used below.  For \(a,b\ge0\),
\[
    \frac{\phi(a)+\phi(b)}{2}
    -
    \phi\!\left(\frac{a+b}{2}\right)
    \ge
    c\,\frac{(a-b)^2}{a+b},
\tag{JG}
\]
where the right-hand side is interpreted as zero if \(a=b=0\).  If \(a+b>0\),
write \(m=(a+b)/2\) and \(x=(a-b)/(a+b)\).  Then the left-hand side equals
\[
    \frac{m}{2}\bigl((1+x)\log(1+x)+(1-x)\log(1-x)\bigr).
\]
The function in parentheses has second derivative \(2/(1-x^2)\ge2\), and
vanishes together with its first derivative at \(x=0\).  Hence it is at least
\(x^2\), proving \((\mathrm{JG})\) with a universal constant.

For every probability vector \(q_i\) and nonnegative numbers \(a_i\), set
\(\bar a=\sum_iq_i a_i\).  By convexity,
\[
    \sum_{i,j}q_iq_j
    \phi\!\left(\frac{a_i+a_j}{2}\right)
    \ge
    \phi(\bar a).
\]
Combining this with \((\mathrm{JG})\) gives the two-point
Lov\'asz--Simonovits Jensen-gap bound
\[
    \sum_i q_i\phi(a_i)-\phi(\bar a)
    \ge
    c\sum_{i,j}q_iq_j
    \frac{(a_i-a_j)^2}{a_i+a_j},
\tag{LS}
\]
again with zero contribution when \(a_i=a_j=0\).

Let \(f\ge0\) satisfy \(\sum_u\nu(u)f(u)=1\).  Applying
\((\mathrm{LS})\) to each row, with \(q_v=K(u,v)\) and \(a_v=f(v)\), and then
averaging over \(u\sim\nu\), gives
\[
\begin{aligned}
    \operatorname{Ent}_\nu(f)-\operatorname{Ent}_\nu(Kf)
    &=
    \sum_u\nu(u)
    \left[
        \sum_v K(u,v)\phi(f(v))
        -
        \phi\!\left(\sum_vK(u,v)f(v)\right)
    \right]                                      \\
    &\ge
    c
    \sum_u\nu(u)
    \sum_{v,z}K(u,v)K(u,z)
    \frac{(f(v)-f(z))^2}{f(v)+f(z)} ,
\end{aligned}
\]
where \(\operatorname{Ent}_\nu(f)=\sum_u\nu(u)f(u)\log f(u)\).  Since \(K\)
is lazy, \(K(u,u)\ge1/2\).  Keeping only the terms with \(z=u\) therefore
yields
\[
    \operatorname{Ent}_\nu(f)-\operatorname{Ent}_\nu(Kf)
    \ge
    c
    \sum_{u,v}\nu(u)K(u,v)
    \frac{(f(u)-f(v))^2}{f(u)+f(v)} .
\tag{*}
\]
If \(f\) has zero entries, apply the preceding argument to
\(f_\varepsilon=(1-\varepsilon)f+\varepsilon\) and let
\(\varepsilon\downarrow0\).  The state space is finite and \(\phi\) is
continuous at zero, so \((*)\) passes to the limit.

Apply \((*)\) to \(f=w_t\).  Reversibility gives \(w_{t+1}=Kw_t\).  Define
\[
    I_t
    =
    \sum_{u,v}\nu(u)K(u,v)
    \frac{(w_t(u)-w_t(v))^2}{w_t(u)+w_t(v)} .
\]
By Cauchy--Schwarz and stationarity,
\[
\begin{aligned}
    \left(
        \sum_{u,v}\nu(u)K(u,v)|w_t(u)-w_t(v)|
    \right)^2
    &\le
    \left(
        \sum_{u,v}\nu(u)K(u,v)(w_t(u)+w_t(v))
    \right)I_t                                      \\
    &=2I_t .
\end{aligned}
\]
Combining this with \((*)\) gives
\[
    \left(
        \sum_{u,v}\nu(u)K(u,v)|w_t(u)-w_t(v)|
    \right)^2
    \le
    C\bigl(
        \operatorname{Ent}_\nu(w_t)
        -
        \operatorname{Ent}_\nu(w_{t+1})
    \bigr).
\]
Summing over \(t=0,\ldots,k-1\) telescopes.  Since
\(w_0=\mathbf 1_s/\nu(s)\),
\[
    \operatorname{Ent}_\nu(w_0)=\log\frac1{\nu(s)}.
\]
The averaged bound and the claimed consequence follow.
\end{proof}

\subsection{Point estimator for PageRank on weighted walks}

This proof records the bidirectional PageRank routine used by the estimators
that condition on a given seed.  The normalization by the target degree in the
running time is what lets the same statement apply to the label-lifted weighted
walks used for Unique Games and to the parity-lifted weighted walks used for
bipartiteness.

\begin{proof}[Proof of Proposition~\ref{prop:pagerank-point-estimator}]
For every state \(u\), the PageRank recursion is
\begin{equation}
\label{eq:ppr-recursion}
    \prr^L_u(t)=\zeta\,\1[u=t]+\lambda\sum_y M(u,y)\prr^L_y(t).
\end{equation}
Maintain nonnegative vectors \(p,r\) satisfying
\begin{equation}
\label{eq:ppr-invariant}
    \prr^L_x(t)=p(t)+\sum_y r(y)\prr^L_y(t).
\end{equation}
Initially \(p\equiv0\) and \(r=\1_x\).  A forward push from a state \(u\)
with residual \(\rho=r(u)\) performs
\[
    p\leftarrow p+\zeta\rho\,\1_u,
    \qquad
    r\leftarrow r-\rho\,\1_u+\lambda\rho\,M(u,\cdot),
\]
which preserves \eqref{eq:ppr-invariant} by \eqref{eq:ppr-recursion}.

Fix \(r_{\max}>0\) and push while some \(u\) has \(r(u)>r_{\max}D_u\).  By the
neighbor-enumeration assumption, one push decreases \(\|r\|_1\) by
\(\zeta r(u)\) and costs \(O(D_u)\le
O(r(u)/r_{\max})\).  Since the initial residual mass is one, the total push
work is
\begin{equation}
\label{eq:ppr-push-work}
    O\!\left(\frac{1}{\zeta r_{\max}}\right)
    =
    O\!\left(\frac{L}{r_{\max}}\right),
\end{equation}
and at termination
\begin{equation}
\label{eq:ppr-residual-bound}
    \frac{r(y)}{D_y}\le r_{\max}
    \qquad\text{for every }y.
\end{equation}

Because the weight matrix in Definition~\ref{def:weighted-reversible-walk-oracle}
is symmetric, \(D_yM^\ell(y,t)=D_tM^\ell(t,y)\) for every \(\ell\).  Summing
over the geometric weights gives
\[
    D_y\prr^L_y(t)=D_t\prr^L_t(y).
\]
Thus the residual term in \eqref{eq:ppr-invariant} can be written as
\[
    \sum_y r(y)\prr^L_y(t)
    =
    D_t\sum_y \prr^L_t(y)\frac{r(y)}{D_y}.
\]
If \(Y\sim\prr^L_t\), then
\[
    Z=D_t\frac{r(Y)}{D_Y}
\]
is an unbiased estimator of the residual contribution and, by
\eqref{eq:ppr-residual-bound}, satisfies \(0\le Z\le D_t r_{\max}\).
Bernstein's inequality shows that averaging
\[
    w=O\!\left(
        \frac{D_t r_{\max}}{\eta^2\tau_t}\log\frac{1}{\delta}
    \right)
\]
    independent samples \(Y_i\sim\prr^L_t\), obtained by geometrically stopped
    walks started from the target \(t\), gives additive error at most
\(\eta(\mu_r+\tau_t)\), where \(\mu_r\le\prr^L_x(t)\) is the residual mean, with
failure probability at most \(\delta\).  The reverse-sampling cost is
\[
    O\!\left(
        L\,\frac{D_t r_{\max}}{\eta^2\tau_t}\log\frac{1}{\delta}
    \right).
\]
Balancing this with \eqref{eq:ppr-push-work} by taking
\[
    r_{\max}=\eta\sqrt{\frac{\tau_t}{D_t}}
\]
proves the stated accuracy and expected query bound.
\end{proof}
 
\end{document}